\newtheorem{claim}{Claim}
\newtheorem{Reduction Rule}{\bf Reduction Rule}
\newtheorem{observation}{\bf Observation}
\newcommand{\NP}{\text{\normalfont  NP}}
\newcommand{\CoNP}{\text{\normalfont co-NP}}
\newcommand{\FPT}{\text{\normalfont FPT}}
\newcommand{\W}[1][xxxx]{\text{\normalfont W[#1]}}
\newcommand{\F}{\ensuremath{\mathcal F}\xspace}
\newcommand{\Oh}{\mathcal{O}}
\newcommand{\SC}{\textsc{Set Cover}}
\newcommand{\PLC}{\textsc{Point Line Cover}}
\newcommand{\MCC}{\textsc{Multicolored Clique}}
\newcommand{\oRBSC}{\textsc{Red Blue Set Cover}}
\newcommand{\sorbsc}{\textsc{RBSC}}
\newcommand{\RBSC}{\textsc{Generalized Red Blue Set Cover}}
\newcommand{\srbsc}{\textsc{Gen-RBSC}}
\newcommand{\LRBSC}{\textsc{Generalized Red Blue Set Cover with lines}}
\newcommand{\slrbsc}{\textsc{Gen-RBSC-lines}}
\newcommand{\oLRBSC}{\textsc{Red Blue Set Cover with lines}}
\newcommand{\solrbsc}{\textsc{RBSC-lines}}
\newcommand{\wslrbsc}{\textsc{Weighted Gen-RBSC-lines}}
\newcommand{\GI}{\textsc{Subgraph Isomorphism}}
\newcommand{\YES}{\textsc{YES}}
\newcommand{\NO}{\textsc{NO}}
\newcommand{\defparproblem}[4]{
  \vspace{1mm}
\noindent\fbox{
  \begin{minipage}{0.96\textwidth}
  \begin{tabular*}{\textwidth}{@{\extracolsep{\fill}}lr} #1  & {\bf{Parameter:}} #3
\\ \end{tabular*}
  {\bf{Input:}} #2  \\
  {\bf{Question:}} #4
  \end{minipage}
  }
  \vspace{1mm}
}
\newcommand{\defproblem}[3]{
  \vspace{1mm}
\noindent\fbox{
  \begin{minipage}{0.96\textwidth}
  \begin{tabular*}{\textwidth}{@{\extracolsep{\fill}}lr} #1 \\ \end{tabular*}
  {\bf{Input:}} #2  \\
  {\bf{Question:}} #3
  \end{minipage}
  }
  \vspace{1mm}
}
\theoremstyle{plain}
\newtheorem{theorem}{\bf Theorem}[section]
\newtheorem{lem}{\bf Lemma}[section]
\newtheorem{prop}{\bf Proposition}
\newtheorem{definition}{Definition}
\newtheorem{corollary}{Corollary}
\newtheorem{thmk}{\bf Theorem}
\title{Multivariate Complexity Analysis of Geometric {\sc Red Blue Set Cover} }
 \author{Pradeesha Ashok \thanks{Institute of Mathematical Sciences, Chennai, India
 				Email :\texttt{pradeesha@imsc.res.in}}
 \and Sudeshna Kolay\thanks{Institute of Mathematical Sciences, Chennai, India
 				Email :\texttt{skolay@imsc.res.in}}
\and Saket Saurabh\thanks{Institute of Mathematical Sciences, Chennai, India
				Email :\texttt{saket@imsc.res.in}}}
\date{}
\begin{document}
\maketitle
\begin{abstract}
We investigate the parameterized complexity of \RBSC\ (\srbsc), a generalization of the classic \SC\ problem and the more recently studied \oRBSC\ problem. Given a universe $U$ containing $b$ blue elements and $r$ red elements, positive integers $k_\ell$  and $k_r$, and a family $\F$ of $\ell$ sets over $U$, the \srbsc\ problem is to decide whether there is a subfamily $\F'\subseteq \F$ of size at most $k_\ell$ that covers all blue elements, but at most $k_r$ of the red elements. This 
generalizes  \SC\  and thus  in full generality it is intractable in the parameterized setting. In this paper, we study a geometric version of this problem, called \slrbsc, where the elements are points in the plane and sets are defined by lines. 
We study this problem for an array of parameters, namely, $k_\ell, k_r, r, b$, and  $\ell$, and all possible combinations of them. For all these cases, we either prove that the problem is W-hard or show that the problem is fixed parameter tractable (\FPT). In particular, on the algorithmic side, our study shows that a combination of  $k_\ell$ and $k_r$ gives rise to a nontrivial algorithm for \slrbsc. On the hardness side, we show that  the problem is para-\NP-hard when  parameterized by $k_r$, and \W[1]-hard when parameterized by $k_\ell$. Finally, for the combination of parameters for which \slrbsc\ admits \FPT\ algorithms, we ask for the existence of polynomial kernels. We are able to provide a complete kernelization dichotomy  by either showing that the problem admits a polynomial kernel or that it does not contain a polynomial kernel unless $\CoNP \subseteq \NP/\mbox{poly}$.

\end{abstract}
\section{Introduction}

The input to a covering problem consists of a universe $U$ of size $n$, a family $\mathcal F$ of $m$ subsets of $U$ and a positive integer $k$, and the objective is to check whether there exists a subfamily ${\mathcal{F'}}\subseteq {\mathcal F}$  of size at most $k$ satisfying some desired properties.  If  ${\cal F}'$ is required to contain all the elements of $U$, then it corresponds to the classical {\sc Set Cover} problem.  The \SC\ problem is part of Karp's $21$ \NP-complete problems \cite{Karp10}. This, together with its numerous variants, is one of the most well-studied problems in the area of algorithms and complexity. It is one of the central problems in all the paradigms that have been established to cope with 
\NP-hardness, including approximation algorithms, randomized algorithms and parameterized complexity. 

  

%
%
%
%
%
\subsection{Problems  Studied, Context and Framework} The goal of this paper is to study a generalization of a variant of \SC\, namely, the \oRBSC\ problem. 

 \defproblem{{\oRBSC}  (\sorbsc)}{A universe $U=(R,B)$ where $R$ is a set of $r$ red elements and $B$
is a set of $b$ blue elements, a family $\F$ of $\ell$ subsets of $U$, and a positive integer $k_r$.}{Is there a
subfamily ${\F'}$ of sets that covers all blue elements but at most $k_r$ red elements?}

\noindent 
\oRBSC\ was introduced in 2000 by Carr et al.~\cite{CDKM00}. This problem is closely related to several combinatorial optimization problems such as the {\sc Group Steiner}, {\sc Minimum Label Path}, {\sc Minimum Monotone Satisfying Assignment}  and {\sc Symmetric Label Cover} problems. This has also found applications in areas like fraud/anomaly detection, information retrieval and the classification problem. \oRBSC\  is \NP-complete, following from an easy reduction from \SC\ itself. 

In this paper, we study the parameterized complexity, under various parameters, of a  common generalization of both 
\SC\ and \oRBSC, in a geometric setting.

\defproblem{{\RBSC} (\srbsc)}{A universe $U=(R,B)$ where $R$ is a set of $r$ red elements and $B$
is a set of $b$ blue elements, a family $\F$ of $\ell$ subsets of $U$, and positive integers $k_\ell,k_r$.}{Is there a
subfamily ${\F'}\subseteq \F$ of size at most $k_\ell$ that covers all blue elements but at most $k_r$ red elements?}

\noindent 
It is easy to see that when $k_\ell = \vert \F\vert$ then the problem instance is a \oRBSC\ instance, while it is a \SC\ instance when $k_\ell = k, R = \emptyset, k_r = 0$. Next we take a short detour and give a few essential definitions regarding parameterized complexity. 

\medskip 

\noindent 
{\bf Parameterized complexity.} The goal of parameterized complexity is to find ways of solving NP-hard problems more efficiently than brute force: here the aim is to
restrict the combinatorial explosion to a parameter that is hopefully
much smaller than the input size. Formally, a {\em parameterization}
of a problem is assigning a positive integer parameter $k$ to each input instance and we
say that a parameterized problem is {\em fixed-parameter tractable
  (\FPT)} if there is an algorithm that solves the problem in time
$f(k)\cdot \vert I \vert ^{\Oh(1)}$, where $\vert I \vert$ is the size of the input and $f$ is an
arbitrary computable function depending only on the parameter $k$. Such an algorithm is called an \FPT\ algorithm and such a running time is called \FPT\ running time. There is also an accompanying theory of parameterized intractability using which one can identify parameterized problems that are unlikely to admit \FPT\ algorithms. These are essentially proved by showing that the problem is W-hard. 
A parameterized problem is said to admit a $h(k)$-{\it kernel} if there is a polynomial time algorithm (the degree of the polynomial is independent of $k$), called a {\em kernelization} algorithm, that reduces the input instance to an instance with size upper bounded by $h(k)$, while preserving the answer. If the function $h(k)$ is polynomial in $k$, then we say that the problem admits a polynomial kernel.  While positive kernelization results have appeared regularly over the last two decades, the first results establishing infeasibility of polynomial kernels for specific problems have appeared only recently. In particular, 
Bodlaender et al.~\cite{BodlaenderDFH09}, and Fortnow and Santhanam~\cite{FortnowS11} have developed a framework for showing that a problem does not admit a polynomial
kernel unless $\CoNP \subseteq \NP/\mbox{poly}$, which is deemed unlikely. 
For more background, the reader is referred to the following monograph \cite{FlumGrohebook}.

In the parameterized setting, \SC, parameterized by $k$, is \W[2]-hard \cite{DF99} and it is not expected to have an \FPT\ algorithm. The \NP-hardness reduction from \SC\ to \oRBSC\ implies that \oRBSC\ is  \W[2]-hard parameterized by the size $k_\ell$ of a solution subfamily. However, the  hardness result was not the end of the story for the \SC\ problem in parameterized complexity. In literature, 
various special cases of \SC\ have been studied. A few examples are instances with sets of bounded size~\cite{FellowsKNRRSTW08}, sets with bounded intersection~\cite{LP05,RS08}, and instances where the bipartite incidence graph corresponding to the set family has bounded treewidth or excludes some graph $H$ as a minor~\cite{DemaineFHT05,FominGSS09}. Apart from these results,  there has also been extended study on different  parameterizations of \SC. 
A special case of \SC\  which is central to the topic of this paper is the one where the {\em sets in the family correspond to 
some geometric object}.  In the simplest geometric variant of \SC, called \PLC, the elements of $U$ are points in ${\mathbb R}^2$ and each set contains a maximal number  of collinear points. This version of the problem is \FPT\ and in fact has a polynomial kernel \cite{LP05}. Moreover, the size of these kernels have been proved to be tight, under standard assumptions, in~\cite{KPR14}.
When we take the sets to be the space bounded by unit squares, \SC\  is \W[1]-hard~\cite{Mar05}. On the other hand  when surfaces of hyperspheres are sets then the problem is \FPT~\cite{LP05}. There are several other geometric variants of \SC\ that have been studied in parameterized complexity, under the parameter $k$, the size of the solution subfamily. These geometric results motivate a systematic study of the parameterized complexity of geometric \srbsc\ problems. 

There is an array of natural parameters in hand for the \srbsc\ problem. Hence, the problem promises an interesting dichotomy in parameterized complexity, under the various parameters. In this paper, we concentrate on the \LRBSC\ problem, parameterized under combinations of natural parameters.

 \defproblem{{\LRBSC } (\slrbsc)}{A universe $U=(R,B)$ where $R$ is a set of $r$ red points and $B$
is a set of $b$ blue points, a family $\F$ of $\ell$ sets of $U$ such that each set contains a
maximal set of collinear points of $U$, and positive integers $k_\ell,k_r$.}{Is there a subfamily ${\F'}\subseteq \F$ of
size at most $k_\ell$ that covers all blue points but at most $k_r$ red points?} \vspace{10 pt}
 
\noindent 
It is safe to assume that $r \geq k_r$, and $\ell \geq k_\ell$. Since it is enough to find a minimal solution family $\F'$, we can also assume that $b \geq k_\ell$.


We finish this section with some related results.
As mentioned earlier, the \oRBSC\ problem in classical complexity is \NP-complete. Interestingly, if the incidence matrix, built over the sets and elements, has the consecutive ones property then the problem is in $P$ \cite{DGNW07}. The problem has been studied in approximation algorithms as well \cite{CDKM00,Peleg07}. Specially, the geometric variant, where every set is the space bounded by a unit square, has a polynomial time approximation scheme (PTAS)~\cite{CH13}.
\subsection{Our Contributions}
In this paper, we first show a complete dichotomy of the parameterized complexity of \slrbsc. For a list of parameters, namely, $k_\ell, k_r, r, b$, and  $\ell$, and all possible combinations of them, we show hardness or an \FPT\ algorithm. Further, for parameterizations where an \FPT\ algorithm exists, we either show that the problem admits a polynomial kernel or that it does not contain a polynomial kernel unless $\CoNP \subseteq \NP/\mbox{poly}$.

To describe our results we first state a few definitions. For a set $S\subseteq U$, we denote by $2^S$ the family of all the subsets of $S$, and by $U^S$ the family of all the subsets of $U$ that contain $S$ (that is, all supersets of $S$ in $U$). For a collection $\mathfrak{F}$ of sets over a universe $U$, by ${\sf DownClosure}(\mathfrak{F})$ and ${\sf UpClosure}(\mathfrak{F})$  we mean the families $ \bigcup_{S\in \mathfrak{F}}  2^S  \mbox{ and } \bigcup_{S\in \mathfrak{F}} U^S  $ respectively. Our first contribution is the following parameterized and kernelization dichotomy result for \slrbsc. 

\begin{theorem}\label{thm1}
\label{thm:main:rbscalg+kernel}
Let $\Gamma=\{\ell,r,b,k_\ell,k_r\}$. Then \slrbsc\ is \FPT\  parameterized by $\Gamma'\subseteq \Gamma$ if and only if 
$\Gamma'  \notin {\sf DownClosure}(\{ \{k_\ell,b\},\{r\}\})$. Furthermore,   
  \slrbsc\ admits a polynomial kernel parameterized by $\Gamma'\subseteq \Gamma$ if and only if 
 $\Gamma'  \in {\sf UpClosure}(\{ \{\ell\}, \{k_\ell,r\},\{b,r\}\})$.
\end{theorem}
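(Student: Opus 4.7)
The plan is to prove both dichotomies by working at the boundary of the respective parameter families. Because adding a parameter can only make a problem easier, the FPT and polynomial-kernel families are upward-closed while the hard families are downward-closed; so it suffices (i)~to prove hardness at each maximal element of the hard family and (ii)~to prove an FPT algorithm (respectively a polynomial kernel) at each minimal generator of the easy family. For the FPT dichotomy these are the two maximal hard sets $\{k_\ell,b\}$ and $\{r\}$, while for the kernel dichotomy the generators of the up-closure are $\{\ell\}$, $\{k_\ell,r\}$, and $\{b,r\}$. All other cases then follow by the monotonicity of the two hierarchies.

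For the FPT part, the hardness side requires \W[1]-hardness under $\{k_\ell,b\}$ and para-$\NP$-hardness under $\{r\}$. The first I would obtain by a parameterized reduction from \MCC: make one blue point per colour class so that each class forces exactly one ``selected vertex'' line in any size-$k_\ell$ solution, and encode non-edges by red points lying simultaneously on two vertex-lines so that a red budget $k_r$ of zero forces the selected vertices to form a clique; the construction keeps $k_\ell$ and $b$ linear in the number of colours. The second is immediate: setting $r=0$ collapses \slrbsc\ to the classical $\NP$-hard \PLC, so already $r=0$ is $\NP$-hard. On the algorithmic side I would supply FPT algorithms for every minimal easy generator: brute-force over the $2^\ell$ subfamilies for $\{\ell\}$; a bounded-search-tree algorithm for $\{k_\ell,k_r\}$---the nontrivial case advertised in the abstract---which repeatedly picks an uncovered blue point $p$ and branches over the at most $k_r+1$ lines through $p$ that do not already exhaust the red budget, using the geometric fact that two distinct lines share at most one point to bound recursion depth; and analogous branching for $\{k_\ell,r\}$ and $\{b,r\}$.

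For the kernelization dichotomy, the three positive generators are handled separately: under $\{\ell\}$ we kernelize by retaining only the $\Oh(\ell^2)$ incidence points of the line arrangement; under $\{b,r\}$ the universe already has size $b+r$, and collapsing lines that induce the same incidence vector leaves $\Oh((b+r)^2)$ lines; and for $\{k_\ell,r\}$ I would apply sunflower-style reduction rules on the line family together with the observation that any red point can be spent at most $k_r\le r$ times, so the surviving number of blue points and lines can be bounded by $\mathrm{poly}(k_\ell,r)$. The matching kernel lower bounds come from cross-compositions in the sense of Bodlaender et al.: for each parameterization outside $\mathsf{UpClosure}(\{\{\ell\},\{k_\ell,r\},\{b,r\}\})$---in particular for $\{k_\ell,k_r\}$, $\{k_\ell,b,k_r\}$, and their variants---I would compose polynomially many $\NP$-hard \slrbsc\ instances into a single instance whose relevant parameter stays polylogarithmic in the input length, ruling out polynomial kernels under \assumption.

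The main obstacle is expected to be the FPT algorithm for $\{k_\ell,k_r\}$ together with the polynomial kernel for $\{k_\ell,r\}$: both require using the ``two distinct lines meet in at most one point'' property sharply enough to beat the bounds one would obtain for general \srbsc, where lines are replaced by arbitrary sets and no such intersection bound is available. A secondary difficulty is engineering each cross-composition to land exactly on the claimed dichotomy boundary, since a composition whose output parameter is too small would contradict a positive kernel result, while one whose output parameter is too large would fail to rule out a polynomial kernel.
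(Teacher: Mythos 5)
Your high-level skeleton (prove hardness at the maximal hard sets, algorithms/kernels at the minimal easy generators, and let monotonicity do the rest) is exactly the paper's, and several pieces are fine: the para-\NP-hardness for $r$ via \PLC, the W[1]-hardness for $k_\ell+b$ via \MCC\ (your non-edge encoding with $k_r=0$ is a plausible alternative to the paper's edge-counting reduction on regular graphs, though you would still need to argue generic placement so that no red point lies on a third line), and the $\{b,r\}$ and $\{k_\ell,r\}$ kernels. However, the centerpiece of the theorem --- the \FPT\ algorithm for $\{k_\ell,k_r\}$ --- has a genuine gap. Your branching rule, ``pick an uncovered blue point $p$ and branch over the at most $k_r+1$ lines through $p$ that do not already exhaust the red budget,'' is not a valid bound on the branching factor: an uncovered blue point can lie on arbitrarily many lines each containing only one or two red points, none of which exhausts the budget, so the branching degree is $\Theta(\ell)$, not $\Theta(k_r)$. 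The fact that two points determine a line only bounds the number of lines through $p$ \emph{and a specific second point}. The paper's algorithm needs substantially more structure: it first branches over the at most $k_\ell^{4k_\ell}$ ways a solution can use lines with two or more blue points, reduces to instances where every line has exactly one blue point, and then guesses a ``good tuple'' (a partition of the blue points into connected components of the solution's intersection graph, an ordering within each component, and a distribution of the red budget). Only \emph{after} the first line of a component is fixed does your intended bound become true --- each subsequent line must pass through its blue point and one of the at most $k_r$ already-covered red points --- and the first lines themselves must be tried over all $\ell$ candidates \emph{independently per component} (which is possible because the components are disjoint), to avoid an $\ell^{\Oh(k_\ell)}$ blowup. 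Without this decomposition your search tree is not bounded by a function of $k_\ell+k_r$.

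Two smaller issues. For the $\{\ell\}$ kernel, ``retain only the $\Oh(\ell^2)$ incidence points of the arrangement'' discards red points lying on exactly one line, which still count against $k_r$; the paper must pass through a weighted intermediate problem (one weighted red point per line, with weights representable in $\ell$ bits because $k_r>2^\ell$ makes the $2^\ell$ brute force polynomial) and then re-encode via \NP-completeness. For the kernel lower bounds, the paper needs only a single polynomial parameter transformation from \SC\ parameterized by universe size $n$ to \slrbsc\ parameterized by $k_\ell+k_r+b$, which kills all subsets of $\{k_\ell,k_r,b\}$ at once; your per-case cross-compositions would work in principle but are unnecessary and are not actually constructed here.
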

Essentially, the theorem says that if \slrbsc\ is \FPT\  parameterized by $\Gamma'\subseteq \Gamma$ then there exists an algorithm for \slrbsc\ running in time $f(\Gamma')\cdot (\vert U\vert+ \vert {\cal F}\vert)^{\Oh(1)}$. That is, the running time of the algorithm can depend in an arbitrary manner on the parameters present in $\Gamma'$. Equivalently, we have an algorithm running in time 
 $f(\tau)\cdot (\vert U \vert + \vert {\cal F}\vert )^{\Oh(1)}$, where $\tau=\sum_{q\in \Gamma'}q$.  Similarly, if the problem admits a polynomial kernel parameterized by $\Gamma'$ then in polynomial time we get an equivalent instance of the problem of size 
 $\tau^{\Oh(1)}$. On the other hand when we say that the problem does not admit polynomial kernel parameterized by 
 $\Gamma'$ then it means that there is no kernelization algorithm outputting a kernel of size $\tau^{\Oh(1)}$ unless  $\CoNP \subseteq \NP/\mbox{poly}$.  A schematic diagram explaining the results proved in  Theorem~\ref{thm1} can be seen in Figure~\ref{fig:hierarchyebds}. Results for a $\Gamma'\subseteq \Gamma$ which is not 
 depicted in Figure~\ref{fig:hierarchyebds} can be derived by checking whether $\Gamma'$ is in ${\sf DownClosure}(\{ \{k_\ell,b\},\{r\}\})$.

 
 \begin{figure}[t]
\begin{center}
\includegraphics[scale=0.4]{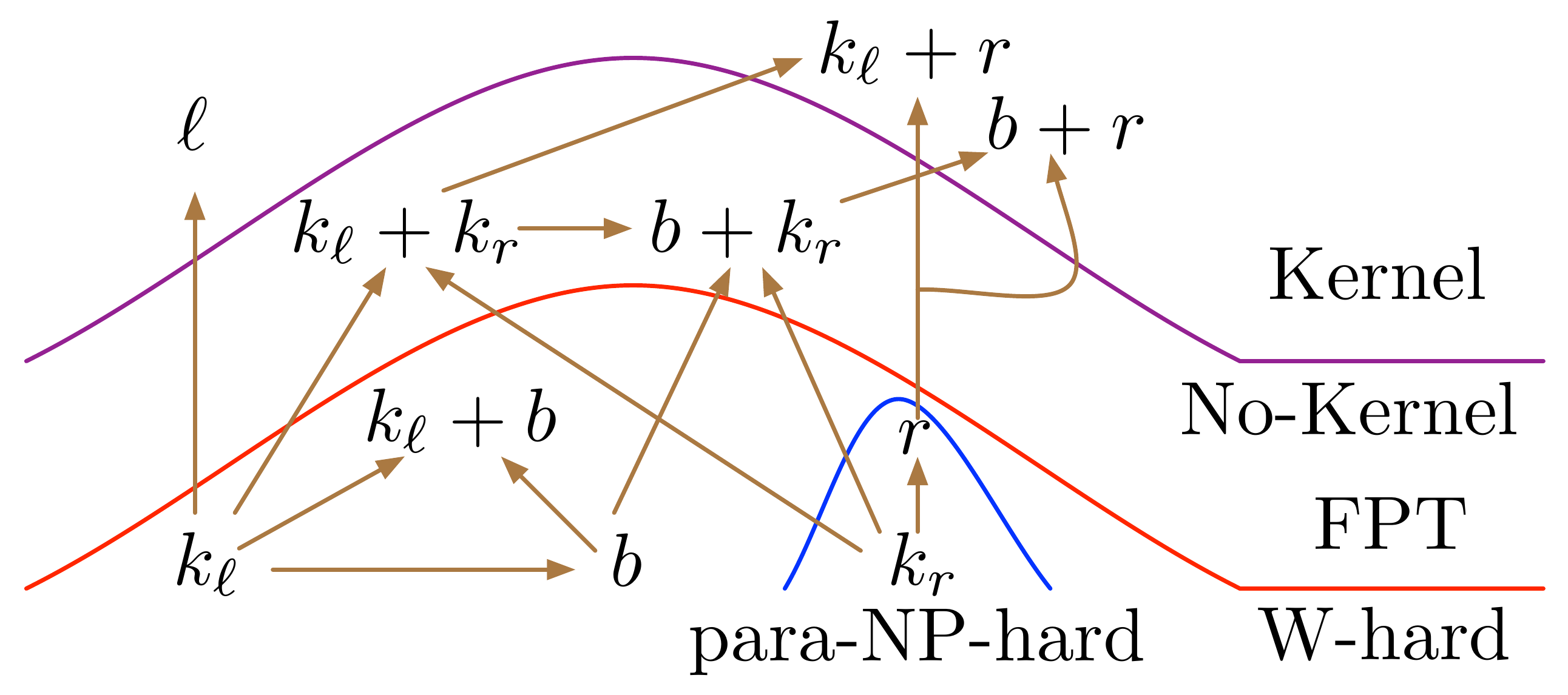}
\scalebox{.55}{
}
\caption{Illustration of our results described in Theorem~\ref{thm1} and hierarchy of parameters. }
\label{fig:hierarchyebds}
\end{center}
\end{figure}
 
 
Next we consider the \solrbsc\ problem. 
Here we do not have any constraint on how many sets we pick in the solution family but we are allowed to cover at most $k_r$ red points. This brings two main changes in Figure~\ref{fig:hierarchyebds}. For \slrbsc\ we show that the problem is \NP-hard even when there is a constant number of red points. However, \solrbsc\ becomes \FPT\ parameterized by $r$. In contrast, \solrbsc\ is \W[1]-hard parameterized by $k_r$.  This leads to the following dichotomy theorem for \solrbsc. 

\begin{theorem}\label{thm2}
\label{thm:main:oldrbscalg+kernel}
Let $\Gamma=\{\ell,r,b,k_r\}$. Then \solrbsc\ is \FPT\  parameterized by $\Gamma'\subseteq \Gamma$ if and only if 
$\Gamma'  \notin \{ \{b\},\{k_r\}\})$. Furthermore,   
 \solrbsc\ admits polynomial kernel parameterized by $\Gamma'\subseteq \Gamma$ if and only if 
 $\Gamma'  \in {\sf UpClosure}(\{ \{\ell\},\{b,r\}\})$.
\end{theorem}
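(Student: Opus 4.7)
The plan is to prove Theorem~\ref{thm2} in the same two-stage format as Theorem~\ref{thm1}: first the FPT dichotomy and then the kernelization dichotomy. Since both tractability and the existence of a polynomial kernel are monotone under enlarging the parameter set, it suffices to pin down the minimal tractable parameter sets and the maximal intractable ones. For the FPT half, the two hard singletons are $\{b\}$ and $\{k_r\}$; for the kernelization half, the minimal parameter sets carrying a polynomial kernel are claimed to be $\{\ell\}$ and $\{b,r\}$.

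On the hardness side of the FPT dichotomy, \W[1]-hardness for $\{k_r\}$ is obtained by a parameterized reduction from a standard geometric covering problem in the spirit of the one used for \srbsc\ under $k_\ell$; since \solrbsc\ places no bound on $|\F'|$, the reduction specialises so that only the red budget matters. For $\{b\}$ I would give a para-\NP-hardness reduction by forcing all blue points to lie in $O(1)$ positions while keeping a \SC-hard structure among the red points and lines. On the algorithmic side, I need only \FPT\ results for the minimal tractable sets $\{\ell\}$, $\{r\}$, and the pair $\{b,k_r\}$; the rest follow by monotonicity. Parameter $\ell$ is trivial via the $2^\ell$ brute force over subfamilies. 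For $\{r\}$, since there is no bound on $|\F'|$, one iterates over the $2^r$ subsets $R'\subseteq R$ with $|R'|\le k_r$ and, for each, checks in polynomial time whether the union of all lines whose red content is contained in $R'$ already covers $B$. For $\{b,k_r\}$ I would first delete every line with more than $k_r$ red points, then split the solution into Type~A lines (carrying at least two blue points, hence drawn from the $\binom{b}{2}$ lines spanned by pairs of blues) and Type~B lines (carrying exactly one blue and thus committing at least one red, bounding their count by $k_r$). Guessing the set $S\subseteq B$ of blues covered only by Type~B lines ($|S|\le k_r$, at most $2^b$ choices) reduces the Type~A side to weighted \PLC\ on $b-|S|$ points using $O(b^2)$ candidate lines, while a color-coding step on the at most $k_r$ reds in the solution coordinates the red-budget across the Type~A and Type~B parts.

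For the kernelization side, $\{b,r\}$ admits an immediate polynomial kernel: $|U|=b+r$ is already polynomial and every line in $\F$ is determined by two of its points, so $|\F|\le\binom{b+r}{2}$ after deduplication. For $\{\ell\}$ I exploit the geometry of the $\ell$ lines: the $O(\ell^2)$ pairwise intersections carry all the ``shared'' points, while every point lying on exactly one line can be absorbed into a per-line weight (an integer counting private red points on that line) once standard reduction rules have removed blues on no line, reds on no line, and collapsed identically-covered points. For the negative side I would apply cross-composition to each FPT parameter set outside ${\sf UpClosure}(\{\{\ell\},\{b,r\}\})$ (namely $\{r\}$, $\{r,k_r\}$ and $\{b,k_r\}$): take $t$ copies of an \NP-hard \solrbsc\ instance whose relevant parameter is already of constant value, embed them in disjoint parallel pencils in the plane, and stitch them with a small selector gadget so that the composed parameter stays polynomial in the size of a single input while acceptance is the logical \texttt{or} of the inputs, ruling out polynomial kernels under \assumption.

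The main obstacle is the \FPT\ algorithm for $\{b,k_r\}$: the Type~A/Type~B split handles the Type~A subproblem via classical \PLC\ branching, but the Type~B lines through a single blue point can still be $\Omega(r)$ in number, so a direct branching fails. The correctness of the color-coding step, and in particular its interaction with the reds already committed by Type~A lines, is the delicate combinatorial heart of the argument, and it is here that the bulk of the proof work lies.
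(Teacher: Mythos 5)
Your two-stage plan (minimal tractable sets, maximal intractable sets) is the right frame, but three of your ingredients are wrong or incomplete. Most seriously, the proposed para-\NP-hardness for the parameter $\{b\}$ cannot work: in \solrbsc\ a minimal valid family has a private blue point in each of its sets, hence at most $b$ lines, so enumerating all subfamilies of size at most $b$ solves the problem in time $\ell^{b}\cdot(\vert U\vert+\vert \F\vert)^{\Oh(1)}$. The problem is therefore in \XP\ parameterized by $b$, and no \NP-hard instance with $b=\Oh(1)$ can exist unless $P=\NP$. The correct statement is \W[1]-hardness, inherited directly from the \MCC\ reduction of Theorem~\ref{thm:hardklb}: there $b=2k$ and every minimal solution uses exactly $2k$ lines, so dropping the explicit budget $k_\ell$ does not change the instance's status. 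The same observation ($k_\ell\le b$ for minimal solutions, and more generally that any parameterization not involving $k_\ell$ transfers from \slrbsc\ to \solrbsc\ by setting $k_\ell=b$) makes your hardest case, $\{b,k_r\}$, a one-line corollary of the $k_\ell+k_r$ algorithm of Theorem~\ref{thm:colour_coding}; your Type~A/Type~B decomposition with colour coding is unnecessary, and since you yourself flag its correctness as unresolved, the tractability half of your proof is incomplete precisely where the intended argument is trivial. (For $\{k_r\}$ the paper gets \W[2]-hardness by the blue-points-for-elements, red-points-for-sets reduction from \SC; your sketch is vague but compatible.)

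Second, the kernel lower bounds via cross-composition are mis-specified: you propose composing $t$ copies of an \NP-hard instance ``whose relevant parameter is already of constant value,'' but for $\{r\}$, $\{r,k_r\}$ and $\{b,k_r\}$ the problem is \FPT, so constant-parameter instances are solvable in polynomial time and no such \NP-hard family exists; moreover disjoint copies make $r$ (resp.\ $b$) grow linearly in $t$, violating the required size bound. The paper instead uses polynomial parameter transformations from \SC\ parameterized by universe size $n$ (yielding $b=n$, $k_r=k\le n$, covering $\{b,k_r\}$ and its subsets) and by family size $m$ (yielding $r=m$, covering $\{r,k_r\}$), combined with Proposition~\ref{SC_results}. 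Your positive results --- the $2^{\ell}$ and $2^{r}$ algorithms, the $\binom{b+r}{2}$ kernel for $\{b,r\}$, and the weighted compression for $\{\ell\}$ --- do match the paper, though for $\{\ell\}$ you still need to say how to return from the weighted compression to an unweighted \solrbsc\ instance (the paper invokes \NP-completeness of both problems).
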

A schematic diagram explaining the results proved in Theorem~\ref{thm2} is given in Figure~\ref{fig:hierarchyeolrbsc}. 
\begin{figure}[t]
\begin{center}
\scalebox{.55}{
\includegraphics[height=8cm]{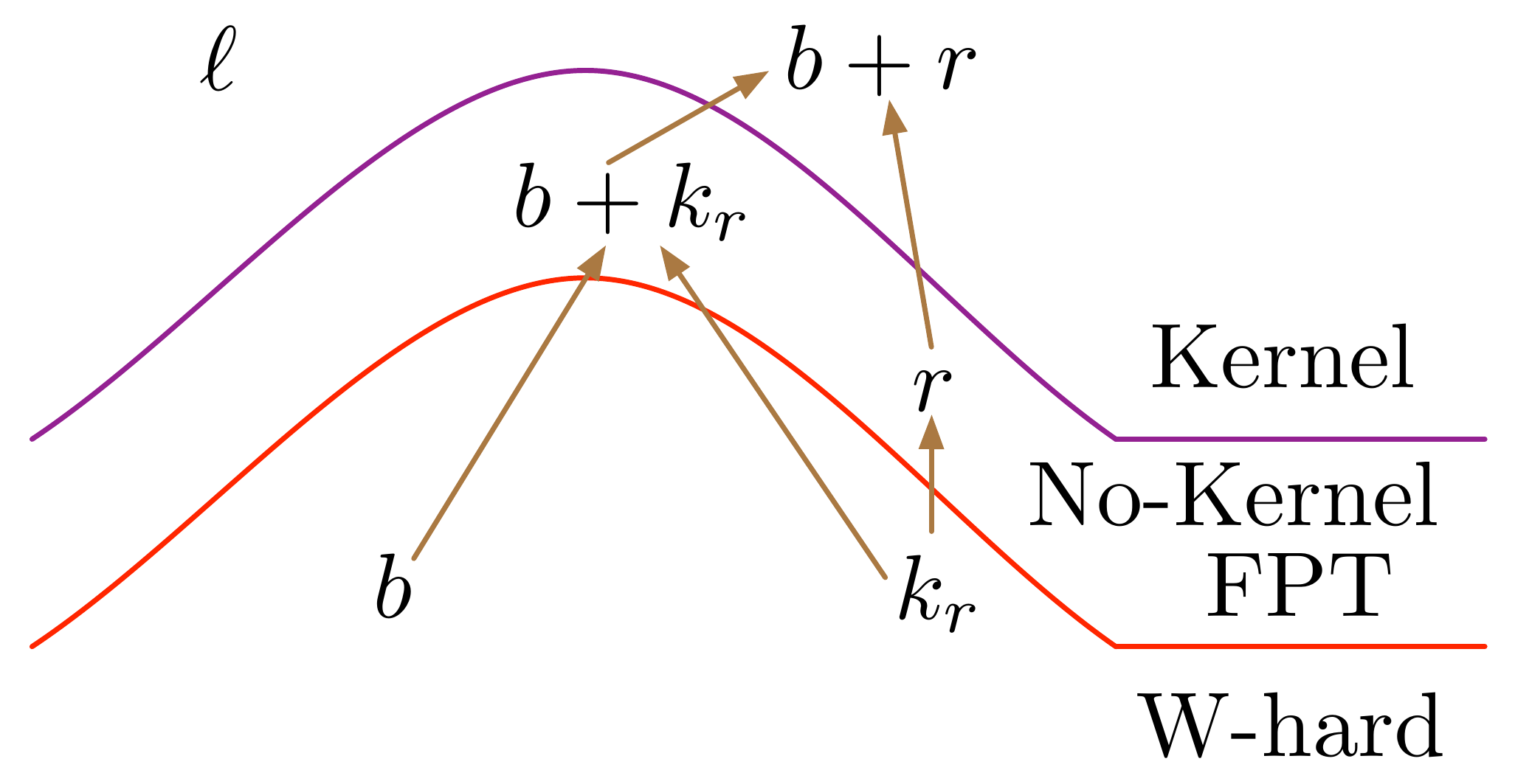}
}

\caption{Illustration of our results for \oLRBSC\ under various parameters. }
\label{fig:hierarchyeolrbsc}
\end{center}
\end{figure}

A quick look at Figure~\ref{fig:hierarchyebds} will show that the \slrbsc\ problem is \FPT\  parameterized by $k_\ell+k_r$ or 
$b+k_r$. A natural question to ask is whether \srbsc\ itself (the problem where sets in the input family are arbitrary and do not correspond to lines) is \FPT\ when parameterized by $b+k_r$. Regarding this, we show the following results:

\begin{enumerate}
\item \srbsc\ is \W[1]-hard parameterized by $k_\ell+k_r$ (or $b+k_r$) when every set has size at most three and contains at least two red points. 
\item \srbsc\ is \W[2]-hard parameterized by $k_\ell+r$ when every set contains at most one red point. 
\end{enumerate}
The first result essentially shows that  \srbsc\  is  \W[1]-hard even when the sets in the family has size {\em bounded by three}. This is in sharp contrast to \SC, which is known to be \FPT\ parameterized by $k_\ell$ and $d$. Here, $d$ is 
the size of the maximum cardinality set in $\cal F$. In fact, \SC\ admits a kernel of size $k_\ell^{\Oh(d)}$. This leads to the following question: 
\begin{quote}
Does the hardness of  \srbsc\ in item one arise from the presence of two red points in the instance?  Would the complexity change if we assume that each set contains  at most one red point? 
\end{quote}
In fact, even if we assume that each set contains at most one red point, we must take $d$, the size of the maximum cardinality set in $\cal F$, as a parameter. Else, this would correspond to the hardness result presented in item two. As a final algorithmic result we show that \srbsc\ admits an algorithm with running time $2^{\Oh(dk_\ell)}\cdot (\vert U \vert +\vert {\cal F} \vert)^{\Oh(1)}$, when every set has at most one red point.  Observe that in this setting $k_r$ can always be assumed to be less than $k_\ell$. Thus, this is also a \FPT\ algorithm parameterized by $k_\ell+k_r$, when sets in the input family are bounded. However, we show that \srbsc\  (in fact \slrbsc) does not admit a polynomial kernel parameterized by $k_\ell+k_r$ even when each set  in the input family corresponds to a line and has size two and contains at most one red point.


\subsection{Our methods and an overview of main algorithmic results}
Let $\Gamma=\{\ell,r,b,k_\ell,k_r\}$. 
Most of our W-hardness results for a \srbsc\ variant parameterized by  $\Gamma'\subseteq \Gamma$ are obtained by giving a polynomial time reduction, from \SC\ or \MCC\, that makes every $q\in \Gamma'$ at most $k^{\Oh(1)}$ (in fact most of the time $\Oh(k)$). 
This allows us to transfer the known  hardness results about \SC\ and \MCC\ to our problem. Since in most cases the parameters are linear in the input parameter, in fact we can rule out an algorithm of form $(\vert U \vert +\vert {\cal F} \vert )^{o(\tau)}$, where 
$\tau=\sum_{q\in \Gamma'}q$, under Exponential Time Hypothesis (ETH)~\cite{ImpagliazzoPZ01}. 
Similarly, hardness results for kernels are derived from giving an appropriate polynomial time reduction from parameterized variants of the \SC\ problem 
that only allows each parameter $q\in \Gamma'$ to grow polynomially in the input parameter.

Our main algorithmic highlights are parameterized algorithms for
\begin{enumerate}
\item[(a)]    \slrbsc\  running in time 
$2^{\Oh(k_\ell \log k_\ell +  k_r \log k_r)}\cdot (\vert U \vert +\vert {\cal F}\vert )^{\Oh(1)}$ (showing  \slrbsc\  is \FPT\ 
parameterized by $k_\ell+k_r$); and 
\item[(b)]   \srbsc\ 
with running time $2^{\Oh(dk_\ell)}\cdot (\vert U \vert +\vert {\cal F}\vert )^{\Oh(1)}$, when every set is of size at most $d$ and 
has at most one red point. 
\end{enumerate}
Observe that the first algorithm generalizes the known algorithm for \PLC\ which runs in time  
$2^{\Oh(k_\ell \log k_\ell)}\cdot (\vert U\vert +\vert {\cal F}\vert)^{\Oh(1)}$~\cite{LP05}.  

The parameterized algorithm for \slrbsc\ mentioned in 
(a) starts by bounding the number of blue vertices by $k_\ell^2$ and guessing the lines that contain at least two blue points. The number of  lines containing at least two blue points can be shown to be at most  $k_\ell^4$. These guesses lead to an equivalent instance where each line contains exactly one blue point and there are no lines that only contain red points (as these lines can be deleted). However, we can not bound the number of red points at this stage. 
We introduce a notion of "solution subfamily" and connected components of the solution subfamilies. Interestingly, this equivalent instance has sufficient geometric structure on the connected components. 
We exploit the structure of these components, gotten mainly from simple properties of lines on a plane, to show that knowing one of the lines in each component can, in \FPT\ time, lead to finding the component itself!  Thus, to find a component all we need to do is to guess one of the lines in it. However, here  we face our second difficulty: the number of connected components can be as bad as $\Oh(k_\ell)$ and thus if we guess one line for each connected component then it would lead to a factor of 
$\vert {\cal F}\vert ^{\Oh(k_\ell)}$ in the running time of the algorithm. However, our equivalent instances are such that we are allowed to process each component independent of other components. This brings the total running time of guessing the first line of each component down to $k_\ell\cdot \vert {\cal F}\vert $. The algorithmic ideas used here can be viewed as some sort of ``geometry preserving subgraph isomorphism'', which could be useful in other contexts also. This completes an overview of the \FPT\ result for \slrbsc\ parameterized by $k_\ell+k_r$.

The algorithm for \srbsc\  running in time $2^{\Oh(dk_\ell)}\cdot (\vert U\vert +\vert {\cal F}\vert)^{\Oh(1)}$, where every set is of size at most $d$ and has at most one red point is purely based on a novel reduction to \GI\ where the subgraph we are looking for has size $\Oh(k_\ell d)$ and treewidth $3$. The host graph, where we are looking for a solution subgraph, is obtained by starting with the bipartite incidence graph and making modifications to it. The bipartite incidence graph we start with has in one side vertices for sets and in the other side vertices corresponding to blue and red points  and there is an edge between vertices corresponding to a set and a blue (red) point if this blue (red) point is contained in the set. Our main observation is that a solution subfamily can be captured by a subgraph of size $\Oh(k_\ell d)$ and treewidth $3$. Thus, for our algorithm we enumerate all such subgraphs in time $2^{\Oh(dk_\ell)}\cdot (\vert U\vert +\vert {\cal F}\vert )^{\Oh(1)}$ and for each such subgraph we check whether it exists in the host graph using known algorithms for \GI. This concludes the description of this algorithm.

%

%
\section{Preliminaries}
In this paper an undirected graph is denoted by a tuple $G = (V,E)$, where $V$ denotes the set of vertices and $E$ the set of edges. For a set $S\subseteq V$, the {\it subgraph of $G$ induced by $S$}, denoted by $G[S]$, is defined as the subgraph of $G$ with vertex set $S$ and edge set $\{(u,v) \in E :u,v\in S\}$. The subgraph obtained after deleting $S$ is denoted as $G\setminus S$. All vertices adjacent to a vertex $v$ are called neighbors of $v$ and the set of all such vertices is called the neighborhood of $v$. Similarly, a non-adjacent vertex of $v$ is called a non-neighbor and the set of all non-neighbors of $v$ is called the non-neighborhood of $v$. The neighborhood of $v$ is denoted by $N(v)$. A vertex in a connected graph is called a cut vertex if its deletion results in the graph becoming disconnected.



Recall that showing a problem \W[1] or \W[2] hard implies that the problem is unlikely to be \FPT.  
One can show that a problem is \W[1]-hard (\W[2]-hard) by presenting a parameterized reduction from a known 
\W[1]-hard problem (\W[2]-hard) such as {\sc Clique} (\SC) to it. The most important property of a parameterized reduction is that it corresponds to an \FPT\ algorithm that bounds the parameter value of the constructed instance by a function of the parameter of the source instance. A parameterized problem is said to be in the class para-\NP{} if it has a nondeterministic algorithm with \FPT\ running time. To show that a problem is para-\NP-hard we need to show that the problem is \NP-hard for some constant value of the parameter. For an example {\sc $3$-Coloring} is  para-\NP-hard parameterized by the number of colors. See~\cite{FlumGrohebook} for more details.

\medskip 

\noindent
{\bf Lower bounds in Kernelization.}
In the recent years, several techniques have been developed to show that certain parameterized problems belonging to the \FPT\ class cannot have any polynomial sized kernel unless some classical complexity assumptions are violated. One such technique that is widely used is the polynomial parameter transformation technique.

\begin{definition}
 Let $\Pi,\Gamma$ be two parameterized problems. A polynomial time algorithm $\mathcal A$ is called a polynomial parameter transformation (or ppt) from $\Pi$ to $\Gamma$ if , given an instance $(x,k)$ of $\Pi$, $\mathcal A$ outputs in polynomial time an instance $(x',k')$ of $\Gamma$ such that $(x,k) \in \Pi$ if and only if $(x',k')\in \Gamma$ and $k' \leq p(k)$ for a polynomial $p$.
\end{definition}

We use the following theorem together with ppt reductions to rule out polynomial kernels. 
\begin{theorem}
Let $\Pi,\Gamma$ be two parameterized problems such that $\Pi$ is \NP-hard and $\Gamma \in \NP$. Assume that there exists a polynomial parameter transformation from $\Pi$ to $\Gamma$. Then, if $\Pi$ does not admit a polynomial kernel neither does $\Gamma$.
\end{theorem}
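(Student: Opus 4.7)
The plan is to prove the contrapositive: assuming $\Gamma$ admits a polynomial kernel of size $q(\cdot)$, construct one for $\Pi$. Given an input instance $(x,k)$ of $\Pi$, I would chain together three polynomial-time procedures in the following order.

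First, I would apply the ppt algorithm $\mathcal A$ from $\Pi$ to $\Gamma$ to obtain, in polynomial time, an instance $(x',k')$ of $\Gamma$ that is equivalent to $(x,k)$ and satisfies $k' \leq p(k)$. Second, I would invoke the hypothetical polynomial kernelization for $\Gamma$ on $(x',k')$, yielding an equivalent instance $(y,k'')$ of $\Gamma$ whose total size is bounded by $q(k') \leq q(p(k))$, hence polynomial in $k$. Third, I would use the fact that $\Pi$ is $\NP$-hard and $\Gamma \in \NP$: there exists a classical polynomial-time many-one (Karp) reduction $\mathcal{R}$ from $\Gamma$ to $\Pi$. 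Applying $\mathcal R$ to $(y,k'')$ produces an instance $z$ of $\Pi$ whose size is polynomial in the size of $(y,k'')$, and which is a \YES-instance of $\Pi$ iff $(y,k'')$ is a \YES-instance of $\Gamma$. Setting the parameter of the produced instance to $k''' \coloneqq |z|$ (or to any natural parameter of $\Pi$, which is at most $|z|$), this composed pipeline is a polynomial-time algorithm that maps $(x,k)$ to an equivalent instance of $\Pi$ whose size and parameter are bounded by a polynomial in $k$. That is exactly a polynomial kernel for $\Pi$, contradicting the hypothesis.

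The main obstacle I anticipate is the final step, namely handling the parameter of the output $\Pi$-instance. The Karp reduction $\mathcal R$ is a classical reduction and carries no a priori guarantee about any parameter of $\Pi$; it only guarantees that $|z|$ is polynomial in the input size. The standard way to circumvent this is to observe that in any reasonable parameterization one has $k''' \leq |z|$, so once the instance size is polynomially bounded in $k$, so is the parameter, giving a genuine polynomial kernel. One should also check that $\mathcal{R}$ does indeed exist in the stated form: this uses the $\NP$-hardness of $\Pi$ to conclude that \emph{every} problem in $\NP$, and in particular $\Gamma$, reduces to $\Pi$ in polynomial time under Karp reductions. The rest of the argument is just function composition: $q \circ p$ is a polynomial, the Karp reduction blows up the size at most polynomially, and polynomial-time procedures compose into polynomial-time procedures, so the combined algorithm is a valid polynomial kernelization of $\Pi$.
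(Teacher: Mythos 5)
Your proof is correct and is the standard argument for this folklore result: the paper itself states the theorem without proof, deferring to the cited references, and the argument given there is exactly your three-stage composition (ppt, then the assumed kernel for $\Gamma$, then a Karp reduction back to $\Pi$ guaranteed by $\Gamma \in \NP$ and the \NP-hardness of $\Pi$). You also correctly flag and resolve the one delicate point, namely that the Karp reduction gives no parameter bound, which is handled by taking the output parameter to be at most the (polynomially bounded) output size.
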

For further details on lower bound techniques in kernelization refer to \cite{BodlaenderDFH09,FortnowS11}. 

\medskip

\noindent
{\bf  Generalized Red Blue Set Cover.} 
A set $S$ in an \RBSC\ instance $(U,\F)$ is said to {\em cover}  a point $p \in U$ if $p \in S$.
A {\em solution family} for the instance is a family of sets of size at most $k_\ell$ that covers all the blue points and at most $k_r$ red points. In case of \oRBSC, the solution family is simply a family of sets that covers all the blue points but at most $k_r$ red points. Such a family will also be referred to as a {\em valid family}. A {\em minimal family of sets} is a family of sets such that every set contains a unique blue point. In other words, deleting any set from the family implies that a strictly smaller set of blue points is covered by the remaining sets.  The sets of \LRBSC\ are also called {\em lines} in this paper. We also mention a key observation about lines in this section. This observation is crucial in many arguments in this paper.

\begin{observation}\label{covering_lines_bd}
 Given a set of points $S$, let $\F$ be the set of lines such that each line contains at least $2$ points from $S$. Then $\vert \F\vert \leq {\vert S\vert \choose 2}$. 
\end{observation}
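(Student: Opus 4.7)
The plan is to exhibit an injection from $\F$ into the set of unordered pairs of points of $S$, using the elementary fact from plane geometry that two distinct points determine a unique line.

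First I would define the map. For each line $L \in \F$, by hypothesis $|L \cap S| \geq 2$, so I can choose (arbitrarily, say by some fixed ordering on $S$) two distinct points $p_L, q_L \in L \cap S$ and set $\Phi(L) = \{p_L, q_L\} \in \binom{S}{2}$.

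Next I would verify injectivity, which is the entire content of the argument. Suppose $L_1, L_2 \in \F$ are distinct but $\Phi(L_1) = \Phi(L_2) = \{p,q\}$. Then both $L_1$ and $L_2$ pass through the two distinct points $p$ and $q$. But in the Euclidean plane there is a unique line through any two distinct points, so $L_1 = L_2$, contradicting distinctness. Hence $\Phi$ is injective, and
\[
|\F| \;\leq\; \left|\binom{S}{2}\right| \;=\; \binom{|S|}{2}.
\]

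There is no real obstacle here; the only step that deserves mention is that the injectivity proof uses nothing more than the defining property of lines in the plane, and the arbitrary choice of the pair $\{p_L,q_L\}$ in the definition of $\Phi$ is harmless precisely because any such choice works for this argument.
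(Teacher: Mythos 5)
Your proof is correct, and it is exactly the standard argument the paper implicitly relies on (the observation is stated without proof there): mapping each line to a pair of points of $S$ it contains and invoking the uniqueness of the line through two distinct points. Nothing further is needed.
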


\srbsc\ with hyperplanes of ${\mathbb R}^d$, for a fixed positive integer $d$, is a special case for the problem. Here, the input universe $U$ is a set of $n$ points in ${\mathbb R}^d$. A hyperplane in ${\mathbb R}^d$ is the affine hull of a set of $d+1$ affinely independent points \cite{LP05}. In our special case each set is a maximal set of points that lie on a hyperplane of ${\mathbb R}^d$.


\begin{definition}
 An intersection graph $G_{\F}=(V,E)$ for an instance $(U,\F)$ of \RBSC\ is a graph with vertices corresponding to the sets in $\F$. We give an edge between two vertices if the corresponding sets have non-empty intersection.
\end{definition}

The following proposition is a collection of results on the \SC\ problem, that will be repeatedly used in the paper. The results are from \cite{DLS2014,DF99}
\begin{prop}\label{SC_results}
The \SC\ problem is:
\begin{enumerate}[(i)]
\item \W[2] hard when parameterized by the solution family size $k$.
\item \FPT\ when parameterized by the universe size $n$, but does not admit polynomial kernels unless $\CoNP \subseteq \NP/\mbox{poly}$.
\item \FPT\ when parameterized by the number of sets $m$ in the instance, but does not admit polynomial kernels unless $\CoNP \subseteq \NP/\mbox{poly}$.
\end{enumerate}
\end{prop}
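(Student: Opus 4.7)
The proposition compiles three classical facts about \SC, and I would reconstruct the arguments as follows. For part (i), the plan is a parameterized reduction from {\sc Dominating Set}, which is the canonical \W[2]-complete problem. Given a graph $G=(V,E)$ with parameter $k$, I build a \SC\ instance with universe $U=V$ and family $\{N_G[v] : v\in V\}$. A size-$k$ dominating set of $G$ corresponds exactly to a size-$k$ sub-family covering $V$, so the parameter is preserved and \W[2]-hardness transfers.

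For the \FPT\ parts of (ii) and (iii), brute force suffices. After removing duplicate sets (which is safe), the family has at most $2^n$ distinct members; since any minimal cover needs $k\le n$ sets, one may enumerate all $\binom{2^n}{n}\le 2^{n^2}$ candidate sub-families for (ii) and verify each in polynomial time. For (iii), one simply enumerates all $2^m$ sub-families directly. Both enumerations run in \FPT\ time in the respective parameter.

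For the kernel lower bounds in (ii) and (iii), the plan is an OR-cross-composition in the framework of Bodlaender et al., starting from the \NP-hard decision version of \SC. For (ii), I would compose $t$ instances sharing a common universe of size $n$ and the same budget $k$ into a single \SC\ instance whose universe has size $n+\Oh(\log t)$: append $\Oh(\log t)$ selector elements to every set encoding the index of its source instance, and add a small enforcement gadget (for example, padding blue elements that must be covered) that forces any size-$k$ sub-family in the composed instance to lie entirely within one source instance and to be a valid cover there. Because the output parameter remains polynomial in $n$, a polynomial kernel for (ii) would imply \containment. Part (iii) follows symmetrically by appealing to the standard universe/family duality with {\sc Hitting Set}, where one composes along the dimension that becomes the parameter after dualizing; equivalently, one can give a polynomial parameter transformation from the problem settled in (ii) to the problem in (iii) via the incidence-matrix transpose.

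The main obstacle I anticipate is the design of the cross-composition selector gadget: it must simultaneously (a) force any solution in the composed instance to be concentrated in exactly one source instance so that the OR semantics holds, and (b) keep the composed universe (respectively the composed family) bounded by a polynomial in the individual parameter, so that the bound required by cross-composition is respected. Once the gadget is in place, both kernel lower bounds follow by a direct application of the standard cross-composition theorem.
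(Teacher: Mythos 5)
The paper does not actually prove this proposition: it is stated as a collection of known results and attributed to \cite{DF99} and \cite{DLS2014}, so there is no in-paper argument to match yours against. Your reconstructions of part~(i) (the closed-neighbourhood reduction from {\sc Dominating Set}) and of the two \FPT\ claims (dedup to at most $2^n$ sets and enumerate subfamilies of size at most $n$; respectively enumerate all $2^m$ subfamilies) are correct and are the standard arguments.

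The gap is in the kernelization lower bounds, which are the only nontrivial content of (ii) and (iii). You correctly identify the ``colors and IDs'' composition route of \cite{DLS2014}, but you then explicitly defer the selector/ID gadget --- the part that simultaneously enforces the OR semantics and keeps the universe (resp.\ family) size polynomial --- and that gadget \emph{is} the proof; without it nothing has been established. (Minor symptom of this: ``padding blue elements'' is a slip, since plain \SC\ has no colours.) More substantively, your proposed shortcut for (iii) is circular: the incidence-matrix transpose sends \SC\ on $(U,\F)$ to {\sc Hitting Set} on the dual, under which the universe size $n$ of the \SC\ instance becomes the \emph{number of sets} of the {\sc Hitting Set} instance; since {\sc Hitting Set} parameterized by its number of sets is, after renaming, exactly \SC\ parameterized by universe size again, the transpose maps the parameterized problem of (ii) to itself, not to the problem of (iii). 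Part (iii) (equivalently, {\sc Hitting Set} parameterized by universe size) therefore needs its own composition, with the IDs attached so as not to blow up the number of sets rather than the number of elements; both constructions are carried out in \cite{DLS2014}, and for a self-contained proof you would have to reproduce them rather than gesture at them.
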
 

\noindent
{\bf Tree decompositions and treewidth.} We also need the concept of treewidth and tree decompositions.  
\begin{definition}[Tree Decomposition ~\cite{Robertson1984}] A tree decomposition of a (undirected or directed) graph $G=(V,E)$ is a tree
 $\mathbb{T}$ in which each vertex $x \in \mathbb{T}$ has an assigned set of vertices $B_x \subseteq V$ (called a bag) such 
that $(\mathbb{T},\{B_x\}_{x \in \mathbb{T}})$
has the following properties:
\begin{itemize}
\item $\bigcup_{x\in \mathbb{T}} B_x = V$
 \item For any $(u,v) \in E$, there exists an $x \in \mathbb{T}$ such that $u, v \in B_x$.
\item If $v \in B_x$ and $v \in B_y$, then $v \in B_z$ for all $z$ on the path from $x$ to $y$ in $\mathbb{T}$.
\end{itemize}
In short, we denote $(\mathbb{T},\{B_x\}_{x \in \mathbb{T}})$ as $\mathbb{T}$.
\end{definition}

The \emph{treewidth} $tw(\mathbb{T})$ of a tree decomposition $\mathbb{T}$ is the size of the largest bag of $\mathbb{T}$ minus one. A graph may have several distinct tree decompositions. The treewidth $tw(G)$ of
a graph $G$ is defined as the minimum of treewidths over all possible tree decompositions of $G$.

\section{Parameterizing by  $k_r$ and $r$ }

In this section we first show that \slrbsc\ parameterized by $r$ is para-\NP-complete. Since $k_r \leq r$, it follows that 
\slrbsc\ parameterized by $k_r$ is also para-\NP-complete.

\begin{theorem}\label{thm:whardrkr}
\slrbsc\ is para-\NP-complete parameterized by either $r$ or $k_r$.
\end{theorem}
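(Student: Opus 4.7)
The plan is to give a trivial polynomial-time reduction from \PLC\ that produces an instance of \slrbsc\ in which $r$ and $k_r$ are literally $0$. Since \PLC\ is NP-hard as a classical decision problem (Megiddo--Tamir and related work), this forces \slrbsc\ to remain NP-hard at a fixed constant value of $r$, and hence also of $k_r$ (as $k_r \le r$), which by definition is precisely para-\NP-hardness with respect to either parameter.

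Concretely, given an instance $(P,k)$ of \PLC, I would build the \slrbsc\ instance $(U,\F,k_\ell,k_r)$ by setting $B := P$, $R := \emptyset$, $U := (R,B)$, $k_\ell := k$, $k_r := 0$, and letting $\F$ be the family of maximal collinear subsets of $P$ (equivalently, the set of lines in the plane that contain at least two points of $P$, restricted to the points they cover). This construction is clearly polynomial, and any subfamily $\F' \subseteq \F$ of size at most $k$ covers all of $B = P$ without exceeding the allowed $0$ red points if and only if $\F'$ corresponds to at most $k$ lines whose union covers $P$. Correctness is therefore immediate. Membership of \slrbsc\ in \NP\ is obvious: a candidate $\F' \subseteq \F$ of size at most $k_\ell$ can be checked in polynomial time by verifying that $B \subseteq \bigcup \F'$ and that $|R \cap \bigcup \F'| \le k_r$.

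Since in the produced instance both $r = 0$ and $k_r = 0$ are constants, the NP-hardness of \PLC\ already lives inside this restricted slice of \slrbsc, proving the problem para-\NP-hard under each of the parameters $r$ and $k_r$ individually. Combined with the trivial \NP\ upper bound, this yields para-\NP-completeness in both parameterizations, as claimed.

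There is essentially no combinatorial obstacle here: the entire content of the statement is already packed into the classical NP-hardness of \PLC. The only point worth being careful about is the bookkeeping convention that ``$R$ is a set of $r$ red elements''; one should note that allowing $r = 0$ is consistent with the problem definition given earlier, so the constant-parameter instances produced by the reduction are genuine instances of \slrbsc, and the transfer of hardness is valid.
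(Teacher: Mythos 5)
Your proof is correct and takes essentially the same route as the paper: a polynomial reduction from \PLC\ that fixes $k_r=0$ and a constant number of red points, plus the trivial \NP\ membership observation. The only cosmetic difference is that the paper places a single dummy red point lying on no line (so $r=1$) rather than taking $R=\emptyset$, precisely to sidestep the degenerate-instance bookkeeping you flag at the end.
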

\begin{proof}
 If we are given a solution family for an instance of \slrbsc\ we can check in polynomial time if it is valid. 
 Hence, \slrbsc\  has a nondeterministic algorithm with \FPT\ running time (in fact polynomial) and thus  \slrbsc\  parameterized by  $r$ is in para-\NP. 

For completeness, there is an easy polynomial-time many-one reduction from the \PLC\ problem, which is \NP-complete. An instance $((U,\F))$ of \PLC\ parameterized by $k$, the size of the solution family, is reduced to an instance $((R \cup B,\F))$ of \slrbsc\ parameterized by $r$ or $k_r$ with the following properties: 
 \begin{itemize}
  \item $B = U$ 
  \item The family of sets remains the same in both instances.
  \item $R$ consists of $1$ red vertex that does not belong to any of the lines of $\F$.
  \item $k_\ell = k$ and $k_r =0$.
 \end{itemize}
It is easy to see that $((U,\F))$ is a \YES\ instance of \PLC\ if and only if $(R \cup B,\F)$ is a \YES\ instance of \slrbsc. Since the reduced instances belong to \slrbsc\ parameterized by $r=1$ or $k_r =0$, this proves that \slrbsc\ parameterized by $r$ or $k_r$ is para-\NP-complete.
\end{proof}


\section{Parameterizing by $\ell$}
In this section we design a parameterized algorithm as well as a kernel for \slrbsc\ when parameterized by the size $\ell$ of the family. The algorithm for this is simple. We enumerate all possible $k_\ell$-sized subsets of input lines and for each subset, we check in polynomial time whether it covers all blue points and at most $k_r$ red points. The algorithm runs in time 
$\Oh(2^\ell \cdot (\vert U\vert +\vert \F \vert)$. The main result of this section is a polynomial kernel for \slrbsc\ when parameterized by $\ell$. 

We start by a few reduction rules which will be used not only in the kernelization algorithm given below but also in other parameterized and kernelization algorithms in subsequent sections. 

\begin{Reduction Rule}\label{redn1}
 If there is a set \(S \in \F\) with only red points then delete $S$ from $\F$.
\end{Reduction Rule}

\begin{lem}
Reduction Rule~\ref{redn1} is safe. 
\end{lem}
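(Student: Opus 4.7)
The plan is to show equivalence between the original instance $(U, \F, k_\ell, k_r)$ and the reduced instance $(U, \F \setminus \{S\}, k_\ell, k_r)$, where $S$ is the set containing only red points. Since safeness amounts to a biconditional on YES-instances, I will handle the two directions separately, and both should be very short.

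For the forward direction, suppose the reduced instance is a YES-instance, witnessed by a subfamily $\F' \subseteq \F \setminus \{S\}$ of size at most $k_\ell$ that covers all blue points of $B$ and at most $k_r$ red points of $R$. Since $\F \setminus \{S\} \subseteq \F$, the same $\F'$ is a subfamily of $\F$ with identical coverage properties, hence a valid solution for the original instance.

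For the reverse direction, suppose the original instance admits a solution $\F' \subseteq \F$ with $|\F'| \leq k_\ell$ covering all of $B$ and at most $k_r$ elements of $R$. If $S \notin \F'$, then $\F'$ is already a subfamily of $\F \setminus \{S\}$ and we are done. Otherwise, consider $\F'' = \F' \setminus \{S\}$. Since $S$ contains no blue points, removing $S$ from $\F'$ does not uncover any blue point, so $\F''$ still covers all of $B$. The red points covered by $\F''$ form a subset of those covered by $\F'$, hence at most $k_r$. Finally, $|\F''| \leq |\F'| \leq k_\ell$, so $\F''$ witnesses that the reduced instance is a YES-instance.

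There is no real obstacle here; the argument is essentially a one-line monotonicity observation (adding or removing an all-red set affects only the red-coverage count, never the blue-coverage requirement, and red-coverage is monotone in the subfamily). The proof does not rely on the geometric structure of lines and so will extend verbatim to the \srbsc\ variant as well, which is why this rule will be reused in later sections.
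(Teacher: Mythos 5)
Your proof is correct and follows the same argument as the paper: removing the all-red set $S$ from any solution preserves coverage of all blue points and cannot increase the number of covered red points. You are slightly more thorough in spelling out both directions of the equivalence, but the core observation is identical.
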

\begin{proof}
 Let $\F'$ be a family of at most $k_\ell$ lines of the given instance that cover all blue points and at most $k_r$ red points. If $\F'$ contains $S$, then $\F' \setminus \{S\}$ is also a family of at most $k_\ell$ lines that cover all blue points and at most $k_r$ red points. Hence, we can safely delete $S$. This shows that Reduction Rule~\ref{redn1} is safe. 
\end{proof}


\begin{Reduction Rule}\label{red_vertices_bd}
 If there is a set \(S\in \F\) with more than \(k_r\) red points in it then delete \(S\) from \(\F\). 
\end{Reduction Rule} 

\begin{lem}
Reduction Rule~\ref{red_vertices_bd} is safe. 
\end{lem}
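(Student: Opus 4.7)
The plan is to show that deleting such a set $S$ does not change whether the instance is a \YES\ instance or a \NO\ instance of \slrbsc. Since deletion can only shrink the family of candidate solutions, the nontrivial direction is to argue that if the original instance is a \YES\ instance with solution family $\F'$, then there exists a valid solution family $\F''$ for the reduced instance (one that does not use $S$).

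The key observation is that any solution family $\F'$ is forbidden from containing $S$ in the first place. Indeed, by the definition of a solution family, $\F'$ covers at most $k_r$ red points in total. However, $S$ alone contains strictly more than $k_r$ red points, all of which would be covered if $S \in \F'$. This immediately gives $\vert R \cap \bigcup_{T \in \F'} T \vert \geq \vert R \cap S \vert > k_r$, contradicting the fact that $\F'$ is a valid solution. Hence $S \notin \F'$, and $\F'$ itself is a valid solution for the instance obtained by deleting $S$ from $\F$. The reverse direction is trivial: a solution family $\F''$ for the reduced instance is a subfamily of the original $\F$ as well, and it remains valid since neither the points nor the parameters $k_\ell, k_r$ change.

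I do not anticipate any real obstacle here; the argument is a single line once one observes that $S$ is too "expensive" in red points to ever appear in any valid solution. This mirrors exactly the argument used for Reduction Rule~\ref{redn1} (where a set with only red points is always redundant because it contributes nothing to covering blue points), and the proof should be stated in one short paragraph.
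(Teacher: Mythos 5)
Your argument is correct and is essentially the paper's own proof: a set with more than $k_r$ red points would by itself exceed the red budget, so it can belong to no valid solution family, and deleting it is safe. (One small aside: the analogy to Reduction Rule~\ref{redn1} is not exact --- there a set with only red points \emph{may} appear in a solution and is merely removable without harm, whereas here $S$ can never appear in any solution --- but this does not affect the correctness of your proof.)
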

\begin{proof}
 If $S$ has more than $k_r$ red points then $S$ alone exceeds the budget given for the permissible number of covered red points. Hence, $S$ cannot be part of any solution family and can be safely deleted from the instance. This shows that Reduction Rule~\ref{red_vertices_bd} is safe. 
\end{proof}


Our final rule is as follows.  A similar Reduction Rule was used in~\cite{LP05}, for the \PLC\ problem. 

\begin{Reduction Rule}\label{redn4}
 If there is a set \(S \in \F\) with at least $k_\ell+1$ blue points then reduce the budget of $k_\ell$ by $1$ and the budget of $k_r$ by $|R \cap S|$. The new instance is  $(U \setminus S, \widetilde{{\cal F}})$,  where  
 $ \widetilde{{\cal F}} =\{F\setminus S~|~F\in {\cal F} \mbox{ and } F\neq S\}$.
%
\end{Reduction Rule}

\begin{lem}
Reduction Rule~\ref{redn4} is safe. 
\end{lem}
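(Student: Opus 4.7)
The plan is to first argue that any solution family of the original instance must contain the set $S$, and then verify that the bookkeeping on $k_\ell$, $k_r$, and the remaining family/universe is consistent in both directions.

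For the forcing argument, I would exploit the fact that $S$ is a line and that any other line $F \in \F$ intersects $S$ in at most one point. Consequently, any $F \neq S$ covers at most one blue point of $B \cap S$. Suppose, toward a contradiction, that some solution family $\F'$ of size at most $k_\ell$ omits $S$. Then the lines in $\F'$ together cover at most $|\F'| \leq k_\ell$ of the blue points in $B \cap S$, but $|B \cap S| \geq k_\ell + 1$, so at least one blue point of $S$ is left uncovered by $\F'$, contradicting validity. Hence every solution family must contain $S$.

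For the forward direction, let $\F'$ be any solution of $(U,\F,k_\ell,k_r)$; by the above, $S \in \F'$. Define $\widetilde{\F'} = \{F \setminus S : F \in \F' \setminus \{S\}\}$. Clearly $|\widetilde{\F'}| \leq k_\ell - 1$. Since $\F'$ covers every blue point and the only blue points removed from the universe are those in $B \cap S$, $\widetilde{\F'}$ covers all blue points in $B \setminus S$. For the red budget, the red points covered by $\F'$ form a set of size at most $k_r$ that includes all of $R \cap S$ (covered by $S$), so the red points of $R \setminus S$ covered by $\F' \setminus \{S\}$ (equivalently, by $\widetilde{\F'}$) number at most $k_r - |R \cap S|$. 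Thus $\widetilde{\F'}$ is a valid solution to the reduced instance.

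For the reverse direction, let $\widetilde{\F'}$ be a solution of the reduced instance of size at most $k_\ell - 1$ covering at most $k_r - |R \cap S|$ red points in $R \setminus S$, and let $\F' = \{F \in \F \setminus \{S\} : F \setminus S \in \widetilde{\F'}\} \cup \{S\}$. Then $|\F'| \leq k_\ell$, and $\F'$ covers every blue point: those in $B \cap S$ by $S$, and those in $B \setminus S$ by the preimages of $\widetilde{\F'}$. The red points covered by $\F'$ consist of $R \cap S$ (via $S$) together with the at most $k_r - |R \cap S|$ red points of $R \setminus S$ covered by the preimages of $\widetilde{\F'}$, for a total of at most $k_r$. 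So $\F'$ is a valid solution of the original instance, completing the proof of safeness. No step here is deep; the only potential pitfall is double-counting red points across $S$ and the rest of the family, which is handled cleanly because $R \cap S$ and $R \setminus S$ partition the red points covered.
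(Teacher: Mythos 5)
Your proposal is correct and follows the same route as the paper: force $S$ into any solution because each other line meets $S$ in at most one point and hence covers at most one of its $\geq k_\ell+1$ blue points, then translate solutions back and forth with the adjusted budgets. You merely make explicit the geometric fact (two distinct lines share at most one point) that the paper's one-line forcing argument leaves implicit.
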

\begin{proof}
If $S$ is not part of the solution family then we need at least $k_\ell+1$ lines in the solution family to cover the blue points in $S$, which is not possible. Hence any solution family must contain $S$. 

Suppose the reduced instance has a solution family $\F'$ covering $B \setminus S$ blue points and at most $k_r-\vert R\cap S\vert$ red points from $R \setminus S$. Then $\F'\cup \{S\}$ is a solution for the original instance. On the other hand, suppose the original instance has a solution family $\hat{\F}$. As argued above, $S \in \hat{\F}$. $\hat{\F} \setminus S$ covers all blue points of $B \setminus S$ and at most $k_r - \vert R \cap S \vert$ red points from $R \setminus S$, and is a candidate solution family for the reduced instance. Thus, Reduction Rule~\ref{redn4} is safe.
\end{proof}


%
%
%
%

The following simple observation can be made after exhaustive application of Reduction Rule~\ref{redn4}. 

\begin{observation}\label{blue_size}
 If the budget for the subfamily $\F'$ to cover all blue and at most $k_r$ red points is $k_\ell$ then after exhaustive  applications of Reduction Rule~\ref{redn4} there can be at most $b\leq k_\ell^2$ blue points remaining in a \YES\ instance. If there are more than $k_\ell^2$ blue points remaining to be covered then we correctly say \NO.
\end{observation}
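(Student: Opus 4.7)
The plan is to argue by a simple counting bound on how many blue points any candidate solution family can possibly cover, once Reduction Rule~\ref{redn4} has been exhaustively applied.

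First I would observe that exhaustive application of Reduction Rule~\ref{redn4} guarantees that every remaining set $S \in \F$ contains at most $k_\ell$ blue points; otherwise the rule would still be applicable, contradicting exhaustiveness. Note that the rule also decreases the budget $k_\ell$ by $1$ each time it fires, but the statement of the observation concerns the reduced instance, so we may freely use the (possibly updated) value of $k_\ell$ together with the (possibly updated) family $\F$.

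Next, suppose the reduced instance is a \YES{} instance, so that there exists a subfamily $\F' \subseteq \F$ with $|\F'| \le k_\ell$ that covers all blue points and at most $k_r$ red points. Since every set in $\F$ (and hence in $\F'$) contains at most $k_\ell$ blue points after exhaustive application of Reduction Rule~\ref{redn4}, the total number of blue points covered by $\F'$ is at most
\[
\sum_{S \in \F'} |S \cap B| \;\le\; |\F'| \cdot k_\ell \;\le\; k_\ell \cdot k_\ell \;=\; k_\ell^2 .
\]
Because $\F'$ must cover \emph{all} blue points, we conclude $b \le k_\ell^2$.

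The contrapositive gives the second half of the statement: if more than $k_\ell^2$ blue points remain in the reduced instance, then no subfamily of size at most $k_\ell$ can cover them all, so the algorithm correctly answers \NO. I do not anticipate any significant obstacle here; the entire argument is a one-line pigeonhole, and its only subtle point is making sure we apply it after the reduction rule has been exhausted so that the per-set bound of $k_\ell$ blue points is indeed valid.
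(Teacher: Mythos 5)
Your argument is correct and is exactly the intended justification: the paper states this observation without an explicit proof, relying on the standard counting argument (each remaining set has at most $k_\ell$ blue points after Reduction Rule~\ref{redn4} is exhausted, so at most $k_\ell$ sets cover at most $k_\ell^2$ blue points). Your handling of the shrinking budget is the right way to make this precise.
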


It is worth mentioning that even if we had weights on the red points in $R$ and asked for a solution family of size at most $k_\ell$ that covered all blue points but red points of weight at most $k_r$, then this weighted version, called \wslrbsc\ parameterized by $\ell$ is \FPT.  The \wslrbsc\ problem will be useful in the theorem below. 
Finally, we get the following result. 

\begin{thmk}\label{lines_kernel}
There is an algorithm for \slrbsc\  running in time $\Oh(2^\ell \cdot(\vert U \vert + \vert \F \vert))$. In fact, \slrbsc\  admits a polynomial kernel 
parameterized by $\ell$. 
\end{thmk}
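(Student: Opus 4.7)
The $\Oh(2^\ell(|U|+|\F|))$ running time is immediate: enumerate each of the $2^\ell$ subfamilies $\F'\subseteq \F$, and for each one test in $\Oh(|U|+|\F|)$ time whether $\F'$ covers every blue point while covering at most $k_r$ red points; output \YES\ as soon as some subfamily passes both tests.

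For the polynomial kernel I would first apply Reduction Rules~\ref{redn1}, \ref{red_vertices_bd}, and \ref{redn4} exhaustively in polynomial time. After saturation every remaining line contains at least one blue point, at most $k_\ell$ blue points, and at most $k_r$ red points, and Observation~\ref{blue_size} together with $k_\ell\le\ell$ yields $b\le k_\ell^2\le\ell^2$. To bound the red points I would exploit the line arrangement: a red point lying on two or more lines of $\F$ is the intersection of some pair of those lines, and since any two distinct lines meet in at most one point there are at most $\binom{\ell}{2}$ such ``shared'' red points. For each line $L$, the red points lying exclusively on $L$ are indistinguishable from the viewpoint of any subfamily --- it either picks $L$ (covering all of them) or omits $L$ (covering none) --- so they form a single equivalence class. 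I would introduce a new reduction rule that replaces each such class by a single representative carrying its cardinality as a weight, producing a \wslrbsc\ instance with at most $\binom{\ell}{2}+\ell=\Oh(\ell^2)$ weighted red points, at most $\ell^2$ blue points, and $\ell$ lines. This is precisely the setting prepared by the remark immediately preceding the theorem that \wslrbsc\ parameterized by $\ell$ is \FPT.

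The main obstacle is converting this bounded \wslrbsc\ instance back into an honest \slrbsc\ kernel of polynomial size. I would first cap each representative's weight at $k_r+1$, since any heavier class forces the unique line through it to be excluded from every solution by an analogue of Rule~\ref{red_vertices_bd}. The delicate point is that the \slrbsc\ kernel size now depends linearly on $k_r$, so I must argue that $k_r$ itself can be taken polynomial in $\ell$. I would handle this by a dichotomy: if $k_r$ meets or exceeds the total weighted red mass remaining the red constraint is vacuous and we output the result of running the known \PLC\ polynomial kernel~\cite{LP05} on the blue points only; otherwise $k_r$ is bounded by the $\Oh(\ell^2)$-bounded total weight, and unfolding each representative of weight $w$ into $w$ unweighted red points on its line yields an \slrbsc\ instance of size polynomial in $\ell$.
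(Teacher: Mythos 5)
Your enumeration algorithm and the first two-thirds of your kernelization (the reduction rules, the $\ell^2$ bound on blue points, the $\binom{\ell}{2}$ bound on red points shared by two lines, and the collapse of each line's exclusive red points into one weighted representative) match the paper. The gap is in your last step. You assert that when $k_r$ is smaller than the total remaining red mass, that total mass is ``$\Oh(\ell^2)$-bounded,'' but this confuses the \emph{number} of weighted representatives (which is indeed $\Oh(\ell^2)$) with their \emph{total weight}. After Reduction Rule~\ref{red_vertices_bd} each weight $c(L)$ is bounded only by $k_r$, and $k_r$ is not bounded by any function of $\ell$; the total mass can therefore be as large as $\ell k_r+\binom{\ell}{2}$, and the inequality $k_r<\ell k_r+\binom{\ell}{2}$ gives you nothing. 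Consequently, unfolding a representative of weight $w$ into $w$ unweighted red points can produce an instance whose size is unbounded in $\ell$, and no truncation of the weights to $\mathrm{poly}(\ell)$ values can work in general: which subfamilies cover at most $k_r$ red points is a knapsack-type condition that may require $\Omega(\ell)$ bits of precision per weight to preserve.

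The paper closes this gap with two steps you are missing. First, since we may assume $r\ge k_r$, if $k_r>2^\ell$ then $\vert U\vert>2^\ell$ and the $\Oh(2^\ell\cdot(\vert U\vert+\vert\F\vert))$ algorithm already runs in time polynomial in the input, so the instance can be solved outright; hence one may assume $k_r\le 2^\ell$, every weight can be written with $\ell$ bits, and the weighted instance has total encoding size $\mathrm{poly}(\ell)$. Second, this yields only a polynomial \emph{compression} into \wslrbsc, not a kernel for \slrbsc; the paper converts it back by noting that both problems are in \NP\ and \slrbsc\ is \NP-hard, so there is a polynomial-time many-one reduction from \wslrbsc\ to \slrbsc, whose composition with the compression is a genuine polynomial kernel. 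Your \PLC\ branch (when $k_r$ meets the total mass) is fine, but it does not rescue the other branch.
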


\begin{proof}
We have already described the enumeration based algorithm at the beginning of this section. Here, we only give the polynomial kernel. Given an instance of \slrbsc\ we exhaustively apply Reduction Rules~\ref{redn1},~\ref{red_vertices_bd} and~\ref{redn4} to obtain an equivalent instance. By Observation~\ref{blue_size} and the fact that $k_\ell \leq \ell$, the current instance must have at most $\ell^2$ blue points, or we can safely say \NO. Also, the number of red points that belong to $2$ or more lines is bounded by the number of intersection points of the $\ell$ lines, i.e., $\ell^2$. Any remaining red points belong to exactly $1$ line. We reduce our \slrbsc\ instance to a \wslrbsc\ instance as follows:
\begin{itemize}
\item The family of lines and the set of blue points remain the same in the reduced instance. The red points appearing in the intersection of two lines also remain the same. Give a weight of $1$ to these red points.
\item For each line $L$, let $c(L)$ indicate the number of red points that belong exclusively to $L$. Remove all but one of these red points and give weight $c(L)$ to the remaining exclusive red point. 
\end{itemize}

In the \wslrbsc\ instance, there are $\ell$ lines, at most $\ell^2$ blue points and at most $\ell^2 + \ell$ red points. 
For each line $L$, the value of $c(L)$ is at most $k_r$, after Reduction Rule~\ref{red_vertices_bd}. Suppose $k_r > 2^\ell$. Then $r > 2^\ell$ and the parameterized algorithm for   \slrbsc\  running in time $\Oh(2^\ell \cdot(\vert U \vert +\vert \F \vert))$
runs in polynomial time. Thus we can assume that $k_r \leq 2^\ell$. Then we can represent $k_r$ and therefore the weights $c(L)$ by at most $\ell$ bits. Thus, the reduced instance has size bounded by $\Oh(\ell^2)$. 

Observe that we got an instance of  \wslrbsc\ and not of \slrbsc\, which is the requirement for the kernelization procedure. 
All this shows is that the reduction is a ``compression'' from \slrbsc\ parameterized by $\ell$ to \wslrbsc\ parameterized by $\ell$. 
This is rectified as follows.  
Since both the problems belong to \NP, there is a polynomial time many-one reduction from  \wslrbsc\ to \slrbsc. Finally, using this polynomial time reduction, we obtain a polynomial size kernel for \slrbsc\ parameterized by $\ell$. 
 \end{proof}

Observe that the algorithm referred to in Theorem~\ref{lines_kernel} does not use the fact that sets are lines and thus it also works for \srbsc\ parameterized by $\ell$. However, it  follows from Proposition~\ref{SC_results}(iii) that \srbsc\ parameterized by $\ell$ does not admit a polynomial kernel.

\section{Parameterizing by  $k_\ell$, $b$ and $k_\ell+b$}
In this section we look at \slrbsc\ parameterized by $k_\ell$, $b$, and $k_\ell+b$.  There is an interesting connection between $b$ and $k_\ell$. As we are looking for minimal solution families, we can alway assume that $b \geq k_\ell$. On the other hand, Reduction Rule~\ref{redn4} showed us that for all practical purposes $b \leq k_\ell^2$.  Thus, in the realm of parameterized complexity $k_\ell$, $b$ and $k_\ell+b$ are the {\em same parameters}. That is, \slrbsc\ is \FPT\ parameterized by $k_\ell$ if and only if it is \FPT\ parameterized by $b$ if and only if it is \FPT\ parameterized by $k_\ell+b$. The same holds in the context of kernelization complexity. 
%
%
First, we show that \slrbsc\ parameterised by $k_\ell$ or $b$ is \W[1]-hard. Then we look at some special cases that turn out to be \FPT.

\subsection{Parameter $k_\ell + b$}
We look at \slrbsc\ parameterized by $k_\ell + b$. This problem is not expected to have a \FPT\ algorithm as it is \W[1]-hard.
We give a reduction to this problem from the \MCC\ problem, which is known to be \W[1] hard even on regular graphs \cite{MathiesonS08}.
\medskip 

 \defparproblem{{\MCC}}{A graph $G=(V,E)$ where $V = V_1 \uplus V_2 \uplus \ldots \uplus V_k$ with $V_i$ being an independent set for all $1 \leq i \leq k$, and an integer $k$.}{$k$}{Is there a clique $C \subseteq G$ of size $k$ such that $\forall 1 \leq i \leq k, C \cap V_i \neq \emptyset$.} \vspace{10 pt}

The clique containing one vertex from each part is called a {\em multi-colored clique}. 

\begin{theorem}
\label{thm:hardklb}
 \slrbsc\ parameterized by $k_\ell$ or $b$ or $k_\ell + b$ is \W[1]-hard.
\end{theorem}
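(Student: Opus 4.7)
The plan is to give a parameterized reduction from \MCC\ on regular graphs, which is \W[1]-hard by \cite{MathiesonS08}. Let $(G,k)$ with $V(G) = V_1 \uplus \cdots \uplus V_k$ be the input \MCC\ instance. I would first place $k$ points $b_1,\ldots,b_k$ in the plane and designate them as the blue set $B$. For each vertex $v \in V_i$, I would associate a distinct line $L_v$ through $b_i$, choosing the directions from a sufficiently generic family so that (i) no $L_v$ passes through $b_j$ for $j\neq i$, (ii) lines coming from distinct color classes are not parallel, and (iii) all pairwise intersection points of lines in different color classes are distinct and no third line $L_w$ passes through any such intersection.

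For every non-edge $uv$ of $G$ with $u \in V_i$, $v \in V_j$, $i\ne j$, mark the point $p_{uv} = L_u \cap L_v$ as a red point of $U$. To make sure each $L_v$ appears as a set in $\F$ (i.e., contains at least two points of $U$), also place on each $L_v$ a ``private'' red point $r_v$ that lies on no other $L_w$. The family is then $\F = \{L_v : v\in V\}$, and I would set $k_\ell = k$ and $k_r = k$.

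For correctness, condition (i) ensures that no line of $\F$ covers two blue points, so any valid solution family must contain exactly $k$ lines, one through each $b_i$; the only such lines are the $L_v$ with $v\in V_i$. Hence a solution has the form $\{L_{v_1},\ldots,L_{v_k}\}$ with $v_i\in V_i$. The red points it covers are the $k$ private points $r_{v_i}$ together with every $p_{v_iv_j}$ for which $v_iv_j$ is a non-edge of $G$; the budget $k_r=k$ then forces all the pairs $v_iv_j$ to be edges, so $\{v_1,\ldots,v_k\}$ is a multicolored clique. The converse is immediate: a multicolored clique yields a family that covers all $k$ blue points and exactly $k$ red points. Since $k_\ell=b=k_r=k$, the reduction is a parameterized reduction witnessing \W[1]-hardness with respect to any of $k_\ell$, $b$, or $k_\ell+b$.

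The main technical obstacle is the \emph{geometric realizability} of the construction: we must show that the $b_i$'s and the slopes of the $L_v$'s can be chosen so that all of (i)--(iii) hold simultaneously and, crucially, that no unintended collinearities among the $r_v$'s, $p_{uv}$'s and $b_i$'s create additional maximal collinear subsets (which would spuriously enlarge $\F$). I would handle this by a standard perturbation/genericity argument: each forbidden coincidence is cut out by a proper algebraic variety in the parameter space of positions and directions, so almost every choice avoids all of them. Equivalently, one can fix explicit positions with algebraically independent rational coordinates (e.g., on a moment curve), in which every polynomial degeneracy expression becomes nonzero. This step, while routine in spirit, is the part that requires the most care to write out rigorously.
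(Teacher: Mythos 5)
Your reduction has a genuine flaw in the accounting of covered red points. In \slrbsc\ a selected line covers \emph{every} red point of $U$ that lies on it, not only those at intersections with other selected lines. In your construction the set $L_{v_i}$ contains, besides $b_i$ and $r_{v_i}$, the point $p_{v_i w}$ for \emph{every} non-neighbour $w$ of $v_i$ in the other colour classes, whether or not $L_w$ is selected. Hence even when $\{v_1,\dots,v_k\}$ is a multicoloured clique, the family $\{L_{v_1},\dots,L_{v_k}\}$ covers the $k$ private red points plus one red point for each non-edge incident to a selected vertex, which in general far exceeds $k_r=k$; the forward direction fails. Worse, a solution meeting the budget $k_r=k$ would require each $v_i$ to be adjacent to \emph{all} vertices of the other classes, a property checkable in polynomial time, so the reduced instance does not encode \MCC\ and cannot witness \W[1]-hardness.

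The underlying difficulty is that penalising non-edges by placing red points at intersections has the wrong sign: the penalty is incurred per non-edge \emph{incident} to a chosen line, not per non-adjacent \emph{chosen pair}. The paper's proof circumvents this by working on $d$-regular graphs and placing red points at \emph{edges} (plus one per vertex), so that every line carries exactly $d+1$ red points no matter which lines are chosen; the number of \emph{distinct} red points covered is then $2(d+1)k$ minus the number of coincidences between selected lines, and the budget $2(d+1)k-k^2$ forces every selected near-horizontal line to meet every selected near-vertical line at a red point, i.e.\ forces a clique. (It also uses two lines per vertex, near-horizontal and near-vertical, with a red point at their common intersection, to force both copies to select the same vertex from each class.) If you want to keep your one-line-per-class-point skeleton, you must switch to this ``count the savings from shared red points'' scheme rather than the ``penalise non-edges'' scheme. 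The geometric-realizability step you flag is indeed routine and is not where the problem lies.
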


\begin{proof}
We will give a reduction from \MCC\ on regular graphs. 
  Let $(G=(V,E),k)$ be an instance of \MCC, where $G$ is a $d$-regular graph. We construct an instance of \slrbsc\, $(R \cup B, \mathcal{F})$, as follows. Let $V = V_1 \uplus V_2 \uplus \ldots \uplus V_k$.
\begin{enumerate}

\item For each vertex class $V_i,1\leq i\leq k$, add two blue points $b_i$ at $(0,i)$ and $b'_i$ at $(i,0)$.
\item Informally, for each vertex class $V_i,1\leq i\leq k$ we do as follows. 
Let $L_k$ be the line that is parallel to $y$ axis and passes through the point $(k,0)$.  
Suppose there are $n_i$ vertices in $V_i$. We select $n_i$ distinct points, say ${\cal P}$,  in  ${\mathbb R}^2$ 
on the line $L$, such that if $(a_i,a_2)\in \cal P$ then  $a_i=k$ (as these are points on $L_k$) and $a_2$ lies in the interval $(i-1,i-\frac{1}{2})$. Now  for every point $p\in \cal P$ we draw the unique line between $(0,i)$ and the point $p$.  
Finally, we assign each line to a unique vertex in $V_i$. Formally, we do as follows. 
For each vertex class $V_i,1\leq i\leq k$ and each vertex $u \in V_i$, we choose a point $p^1_u \in {\mathbb R}^2$ with coordinates $(k,y_u)$, $i-1< y_u < i-\frac{1}{2}$. Also, for each pair $u\neq v \ \in V_i$, $y_u \neq y_v$. For each $u\in V_i$, we add the line $l^1_u$, defined by $b_i$ and $p^1_u$, to $\mathcal F$.  We call these {\em near-horizontal lines}. 
Observe that all the near-horizontal lines corresponding to vertices in $V_i$ intersect at $b_i$. Furthermore, for any two vertices 
$u\in V_i$ and $v \in V_j$, with $i\neq j$, the lines $l^1_u$ and $l^1_v$ do not intersect on a point with $x$-coordinate from the closed interval $[0,k]$. 

\item Similarly, for each vertex class $V_i,1\leq i\leq k$ and each vertex $u \in V_i$, we choose a point $p^2_u \in {\mathbb R}^2$ with coordinates $(x_u,k), i-1< x_u < i-\frac{1}{2}$. Again, for each pair $u\neq v \in V_i$, $y_u \neq y_v$. For each $u\in V_i$, we add the line $l^2_u$, defined by $b_i$ and $p^2_u$, to $\mathcal F$. Notice that for any $u,v \in V$, $l^1_u$ and $l^2_v$ have a non-empty intersection. We call these  {\em near-vertical lines}. Observe that all the near-vertical lines corresponding to vertices in $V_i$ intersect at $b_i'$. Furthermore, for any two vertices 
$u\in V_i$ and $v \in V_j$, with $i\neq j$, the lines $l^2_u$ and $l^2_v$ do not intersect on a point with $y$-coordinate from the closed interval $[0,k]$. However, a near-\ line and a near-vertical line will intersect at a point with both $x$ and $y$-coordinate from the closed interval $[0,k]$. The construction ensures that no $3$ lines in $\F$ have a common intersection.

\item For each edge $e=(u,v) \in E$, add two red points, $r_{uv}$ at the intersection of lines $l^1_u$ and $l^2_v$, and $r_{vu}$ at the intersection of lines $l^1_v$ and $l^2_u$.

\item For each vertex $v \in V$, add a red point at the intersection of the lines $l^1_v$ and $l^2_v$.
\end{enumerate}
This concludes the description of the reduced instance. Thus we have an instance $(R \cup B, \mathcal{F})$ of \slrbsc\ with $2n$ lines, $2k$ blue points and $2m+n$ red points. 
 \begin{claim}
  $G=(V,E)$ has a multi-colored clique of size $k$ if and only if $(R \cup B, \mathcal{F})$ has a solution family of $2k$ lines, covering the $2k$ blue points and at most $2(d+1)k - k^2$ red points.
 \end{claim}
 \begin{proof}
  Assume there exists a multi-colored clique $C$ of size $k$ in $G$. Select the $2k$ lines corresponding to the vertices in the clique. That is, select the subset of lines ${\cal F}'=\{l^j_u~|~ 1\leq j \leq 2, u \in C\}$ in the \slrbsc\ instance. Since the clique is multi-colored, these lines cover all the blue points. Each line (near-horizontal or near-vertical) has exactly $d+1$ red points. 
  Thus, the number of red points covered by ${\cal F}'$ is at most $(d+1)2k$. However, each red point corresponding to vertices in $C$ and the two red points corresponding to each edge in $C$ are counted twice. Thus, the number of red points covered by  ${\cal F}'$ is at most $(d+1)2k-k-2{k \choose 2}=2(d+1)k - k^2$. This completes the proof in the forward direction. 
%
  
  Now, assume there is a minimal solution family of size at most $2k$, containing at most $2(d+1)k - k^2$ red points. As no two blue points are on the same line and there are $2k$ blue points, there exists a unique line covering each blue point. Let ${\mathcal L}^1$ and ${\mathcal L}^2$ represent the sets of near-horizontal and near-vertical lines respectively in the solution family.  Observe that ${\mathcal L}^1$ covers $\{b_1,\ldots,b_k\}$ and ${\mathcal L}^2$ covers $\{b_1',\ldots,b_k'\}$. 
  Let $C=\{v_1,\ldots,v_k\}$ be the set of vertices in $G$ corresponding to the lines in  ${\mathcal L}^1$. We claim that $C$ forms a 
  multicolored $k$-clique in $G$. Since $b_i$ can only be covered by lines corresponding to the vertices in $V_i$ and 
${\mathcal L}^1$ covers $\{b_1,\ldots,b_k\}$ we have that $C\cap V_i \neq \emptyset$. It remains to show that for every pair of vertices in $C$ there exists an edge between them in $G$.  Let $v_i$ denote the vertex in  $C\cap V_i$. 

Consider all the lines in ${\mathcal L}^1$. Each of these lines are near-horizontal and have exactly $d+1$ red points. Furthermore, no two of them intersect at a red point. Since the total number of red points covered by ${\mathcal L}^1 \cup {\mathcal L}^2$ is at most $2(d+1)k - k^2$, we have that the $k$ lines in ${\mathcal L}^2$ can only cover at most $k(d+1)-k^2$ red points that are not covered by the lines in  ${\mathcal L}^1$. That is, the $k$ lines in ${\mathcal L}^2$ contribute at most $k(d+1)-k^2$ {\em new red points} to the solution. Thus, the number of red points that are covered by both ${\mathcal L}^1$ and   ${\mathcal L}^2$ is $k^2$.  Therefore, any two lines $l_1$ and $l_2$ such that $l_1 \in {\mathcal L}^1$ and $l_2 \in {\mathcal L}^2$ must intersect at a red point. This implies that either $l_1$ and $l_2$ correspond to the same vertex in $V$ or there exists an edge between the vertices corresponding to them. Let $C'=\{w_1,\ldots,w_k\}$ be the set of vertices in $G$ corresponding to the lines in  ${\mathcal L}^2$. Since $b_i'$ can only be covered by lines corresponding to the vertices in $V_i$ and ${\mathcal L}^2$ covers $\{b_1',\ldots,b_k'\}$ we have that $C'\cap V_i \neq \emptyset$. Let $w_i$ denote the vertex in $V_i$ such that $l^2_{w_i}\in {\mathcal L}^2 $ covers $b_i'$.  We know that $l^1_{v_i} $ and $l^2_{w_i} $ must intersect on a red point. However, by construction no two distinct vertices $v_i$ and $w_i$ belonging to the same vertex class $V_i$ intersect at red point. Thus $v_i=w_i$. This means $C=C'$. This, together with the fact that two lines $l_1$ and $l_2$ such that 
$l_1 \in {\mathcal L}^1$ and $l_2 \in {\mathcal L}^2$ (now lines corresponding to $C$) must intersect at a red point, implies that 
$C$ is a multicolored $k$-clique in $G$. 
 \end{proof}
 Since $b= k_\ell=2k$, we have that \slrbsc\ is  \W[1]-hard parameterized by $k_\ell$ or $b$ or $k_\ell + b$. This concludes the proof. 
\end{proof}


A closer look at the reduction shows that every set contains exactly one blue point. A natural question to ask is whether the complexity would change if we take the complement of this scenario, that is, each set contains either no blue points or at least two blue points. Shortly, we will see that this implies that the problem becomes \FPT. Also, notice that each set in the reduction contains unbounded number of red elements. What about the parameterized complexity if every set in the input contained at most a bounded number, say $d$, of red elements. Even then the complexity would change but for this we need an algorithm for \slrbsc\  parameterized by $k_\ell+k_r$ that will be presented in Section~\ref{redsoln_comb}.


\subsection{Special case under the  parameter $k_\ell$}

In this section, we look at the special case when every line in the \slrbsc\ instance contains at least $2$ blue points or no blue points at all. We show that in this restricted case \slrbsc\ is \FPT.

\begin{thmk}\label{branching}
\slrbsc\ parameterized by $k_\ell$, where input instances have each set containing either at least $2$ blue points or no blue points, has a polynomial kernel. There is also an \FPT\ algorithm running in $\Oh(k_\ell^{4k_\ell}\cdot (\vert U \vert + \vert {\cal F} \vert)^{\Oh(1)})$ time.
\end{thmk}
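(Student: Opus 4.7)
The plan is to combine the previously established reduction rules with the polynomial kernel of Theorem~\ref{lines_kernel}. I would first apply Reduction Rule~\ref{redn1} to remove every line containing no blue points; by the input hypothesis, every surviving line then contains at least two blue points in the original universe. Next I apply Reduction Rule~\ref{red_vertices_bd} to delete any line carrying more than $k_r$ red points, and Reduction Rule~\ref{redn4} exhaustively to handle every line containing more than $k_\ell$ blue points (each such line is forced into the solution and its elements are removed from the universe). By Observation~\ref{blue_size}, after these reductions the current number of blue points satisfies $b\leq k_\ell^2$; otherwise we may declare \NO.

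The crucial structural consequence of the input hypothesis, combined with the bound $b\leq k_\ell^2$, is a polynomial bound on the number of lines in the reduced family. By Observation~\ref{covering_lines_bd}, the number of lines containing at least two current blue points is at most $\binom{b}{2}\leq k_\ell^4/2$. A line may, however, have been reduced to contain exactly one current blue point as a side-effect of Rule~\ref{redn4} (since two distinct lines intersect in at most one point, each application of Rule~\ref{redn4} can remove at most one blue point from any other line); for such ``singleton-blue'' lines, I would apply a dominance-based reduction that retains, for each current blue point $p$, only one representative singleton-blue line through $p$ chosen to minimise its red intersection, and argue by an exchange argument that this is safe with respect to both budgets $k_\ell$ and $k_r$. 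After this, the reduced family has $\ell=\Oh(k_\ell^4)$ lines.

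With $\ell=\Oh(k_\ell^4)$ in hand, Theorem~\ref{lines_kernel} applied to the reduced instance yields a polynomial kernel whose size is polynomial in $\ell$ and hence polynomial in $k_\ell$, giving the kernel part of the claim. For the \FPT\ algorithm, it suffices to enumerate every $k_\ell$-sized subfamily of the $\Oh(k_\ell^4)$ candidate lines: the number of subfamilies is $\binom{\Oh(k_\ell^4)}{k_\ell}=\Oh(k_\ell^{4k_\ell})$, and each is verified in polynomial time by checking that all blue points are covered and the number of distinct covered red points is at most $k_r$. This gives the claimed running time $\Oh(k_\ell^{4k_\ell}\cdot(\vert U\vert+\vert\F\vert)^{\Oh(1)})$.

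The main obstacle will be formalising the dominance reduction on singleton-blue lines produced by Reduction Rule~\ref{redn4}, since two candidate representatives through the same blue point may cover disjoint sets of red points and a naive swap could increase the red coverage beyond $k_r$. A careful exchange argument, or alternatively retaining a polynomially bounded set of representatives per blue point (still keeping $\ell=\Oh(k_\ell^{\Oh(1)})$), will be required so that any optimal solution of the original instance can be recovered within the reduced family while respecting the $k_r$ budget.
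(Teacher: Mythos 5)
Your overall route is the same as the paper's: exhaustively apply Reduction Rules~\ref{redn1}, \ref{red_vertices_bd} and~\ref{redn4}, invoke Observation~\ref{blue_size} to get $b\leq k_\ell^2$, use Observation~\ref{covering_lines_bd} to bound the number of lines by $\binom{b}{2}\leq k_\ell^4$, and then feed this into Theorem~\ref{lines_kernel} for the kernel and into a brute-force enumeration of $k_\ell$-subsets for the \FPT\ algorithm. The one place you diverge is also the one place you leave open. The paper simply asserts that after the reduction rules ``each line contains at least $2$ blue points or no blue points'' still holds; you correctly observe that Reduction Rule~\ref{redn4} deletes points from the universe and can therefore leave a line with exactly one surviving blue point, so Observation~\ref{covering_lines_bd} no longer bounds all lines. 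The concern is real: take a line $S$ with $k_\ell+1+N$ blue points and one further blue point $p$ off $S$; after Rule~\ref{redn4} fires on $S$, the $N+k_\ell+1$ distinct lines through $p$ and the deleted blue points of $S$ each retain exactly one blue point, so the counting argument alone does not give $\ell\leq k_\ell^4$.

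However, the patch you propose does not close this hole. Retaining, for each surviving blue point $p$, a single singleton-blue line through $p$ minimising $|R\cap L|$ is not safe: a discarded line with more red points may share all of them with lines that the solution is forced to take elsewhere, in which case it contributes fewer \emph{new} red points than the retained representative, and the exchange can push the total past $k_r$ --- exactly the failure mode you name in your final paragraph and then defer (``a careful exchange argument \dots will be required''). Retaining polynomially many representatives per blue point is not obviously possible either, since what matters is the red \emph{signature} of each candidate line, and there can be unboundedly many distinct signatures. So, as written, the bound $\ell=\Oh(k_\ell^4)$ --- on which both the polynomial kernel and the $\Oh(k_\ell^{4k_\ell})$ enumeration rest --- is not established. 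To be fair, the paper's own proof does not address this either; a complete argument needs either a reformulation of Rule~\ref{redn4} that preserves the ``at least two blue points'' invariant, or a sound reduction handling the singleton-blue lines it creates.
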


\begin{proof}
We exhaustively apply Reduction Rules~\ref{redn1},~\ref{red_vertices_bd} and~\ref{redn4} to our input instance. In the end, we obtain an equivalent instance that has at least $1$ blue point per line. The equivalent instance also has each line containing at least $2$ blue points or no blue points. The instance has at most $b= k_\ell^2$ blue points, or else we can correctly say \NO. By Observation~\ref{covering_lines_bd} and the assumption on the instance, we can bound $\ell$ by ${b \choose 2} \leq k_\ell^4$. Now from Theorem~\ref{lines_kernel} we get a polynomial kernel for this special case of \slrbsc\ parameterized by $k_\ell$. 

Regarding the \FPT\ algorithm,  we are allowed to choose at most $k_\ell$ solution lines from a total of $\ell \leq k_\ell^4$ lines in the instance (of course after we have applied Reduction Rules~\ref{redn1},~\ref{red_vertices_bd} and~\ref{redn4} exhaustively). For every possible $k_\ell$-sized set of lines we check whether the set covers all blue vertices and at most $k_r$ red vertices. If the instance is a \YES\ instance, one such $k_\ell$-sized set is a solution family. This algorithm runs in 
$\Oh( {k_\ell^4 \choose k_\ell}\cdot (\vert U \vert + \vert {\cal F} \vert )^{\Oh(1)})=\Oh(k_\ell^{4k_\ell}\cdot (\vert U\vert + \vert {\cal F} \vert)^{\Oh(1)})$ time.  
\end{proof}

\section{Parameterizing by  $k_r+k_\ell$ and $b+k_r$}\label{redsoln_comb}

In the previous sections we saw that \slrbsc\ parameterized by $r$ is para-\NP-complete and is  \W[1]-hard parameterized by 
$k_\ell$.  So there is no hope of an \FPT\ algorithm unless $P = \NP$ or \FPT\ =\W[1], when parameterized by $r$ and $k_\ell$ respectively.  As a consequence, we consider combining different natural parameters with $r$ to see if this helps to find \FPT\ algorithms. In fact, in this section, we describe a \FPT\ algorithm for \slrbsc\ parameterized by $k_\ell+k_r$.  Since $k_r \leq r$, this implies that \slrbsc\ parameterized by $k_\ell+ r$ is \FPT. This is one of our main technical/algorithmic contribution. 
Also, since $k_\ell \leq b$ for any minimal solution family of an instance, it follows that \slrbsc\ parameterized by $b+k_r$ belongs to \FPT. It is natural to ask whether the \srbsc\ problem, that is, where sets in the family are arbitrary subsets  of the universe and need not correspond to lines, is \FPT\ parameterized by $k_\ell+k_r$. In fact, Theorem~\ref{MeCC_redn} states that the problem is \W[1]-hard even when each set is of size three and contains at least two red points. This shows that indeed restricting ourselves to sets corresponding to lines makes the problem tractable.

We start by considering a simpler case, where the input instance is such that every line contains exactly $1$ blue point. Later we will show how we can reduce our main problem to such instances. By the restrictions assumed on the input, no two blue points can be covered by the same line and any solution family must contain at least $b$ lines. Thus, $b\leq k_\ell$ or else, it is a \NO\ instance. Also, a minimal solution family will contain at most $b\leq k_\ell$ lines. Hence, from now on we are only interested in the existence of minimal solution families. In fact, inferring from the above observations, a minimal solution family, in this special case, contains exactly $b$ lines. Let $G_{\F'}$ be the intersection graph that corresponds to a minimal solution $\F'$. Recall, that in $G_{\F'}$ vertices correspond to lines in $\F'$ and there is an edge between two vertices  in $G_{\F'}$ if the corresponding lines intersect either at a blue point or a red point. Next, we define notions of 
{\em good tuple} and {\em conformity} which will be useful in designing the \FPT\ algorithm for the special case. 
Essentially, a good tuple provides a numerical representation of connected components of $G_{\F'}$. 

\begin{definition} \label{goodtuple}Given an instance $(R,B,{\cal F})$ of \slrbsc\, we call a tuple 
$$\Big(b,p,s, P,   \{I'_1,\ldots,I'_s\}, (k_r^1,k_r^2,\ldots, k_r^s)\Big)$$  {\em good} if the following hold. 
\begin{enumerate}[(a)] 
 \item Integers $p\leq k_r$  and $s\leq b \leq k_\ell$; Here $b$ is the number of blue vertices in the instance.
 \item $P=P_1 \cup \cdots \cup  P_s$ is an $s$-partition of $B$;
 \item For each $1 \leq i \leq s$, $I'_i$ is an ordering for the blue points in part $P_i$;
 \item Integers $k_r^i,1 \leq i \leq s$, are such that $\Sigma_{1 \leq i \leq s} k_r^i = p$. 
\end{enumerate}
\end{definition}

Below, we define the relevance of good tuples in the context of our problem.
\begin{definition}\label{conforming_tuple}
 We say that the minimal solution family $\F'$ {\em conforms} with a good tuple $\Big(b,p,s, P,   \{I'_1,\ldots,I'_s\}, (k_r^1,k_r^2,\ldots, k_r^s)\Big)$
 if the following properties hold:
 \begin{enumerate}
  \item The components $C_1,\ldots, C_s$ of $G_{\F'}$ give the partition $P = P_1,\ldots, P_s$ on the blue points.
    \item For each component $C_i$, $1 \leq i \leq s$, let $t_i= \vert P_i \vert$. Let $I_i'=b_1^i,\ldots,b_{t_i}^i$  be an ordering of blue points in $P_i$. Furthermore assume that $L^i_j \in \F'$ covers the blue point $b_j^i$. $I_i'$ has the property that for all $j\leq t_i$ $G_{\F'}[\{L_1^i,\ldots,L_{j}^i \} ]$ is connected. In other words for all $j\leq t_i$, $L_{j}^i$ intersects with at least one of the lines from the set $\{L_1^i,\ldots,L_{j-1}^i \}$.  Notice that, by minimality of  
     $\F'$, the point of intersection for such a pair of lines is a red point.  
    
  \item $\F'$ covers $p \leq k_r$ red points.
  \item In each component $C_i$, $k_r^i$ is the number of red points covered by the lines in that component. It follows that $\Sigma_{1 \leq i \leq s} k_r^i =p$. In other words, the integers $k_r^i$ form a combination of $p$.
 \end{enumerate}
\end{definition}

The next lemma says that the existence of a minimal solution subfamily $\F'$ results in a conforming good tuple. 
\begin{lem}
\label{lem:conformingoodtuplexists}
Let  $(U,\F )$ be an input to  \slrbsc\ parameterized by $k_\ell+ k_r$, such that every line contains exactly $1$ blue point. If there exists a solution subfamily $\F'$ then there is a conforming good tuple. 
\end{lem}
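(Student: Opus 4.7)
The plan is to take $\F'$ itself and read off a conforming good tuple directly from the combinatorial structure of its intersection graph $G_{\F'}$. Concretely, I would set $s$ to be the number of connected components $C_1,\ldots,C_s$ of $G_{\F'}$, and define $P_i$ to be the set of blue points covered by the lines in component $C_i$. Since every line contains exactly one blue point, and since $\F'$ must cover every blue point, the sets $P_1,\ldots,P_s$ are pairwise disjoint and exhaust $B$, yielding the required partition. The bound $s \leq b \leq k_\ell$ is immediate because each component covers at least one blue point and minimality of $\F'$ forces $|\F'| = b \leq k_\ell$.

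Next I would produce the orderings $I'_i$. For each component $C_i$, pick an arbitrary spanning tree $T_i$ of $C_i$, root it, and list its vertices (lines) in any order in which every non-root vertex is visited after some neighbor in $T_i$ (for example, a BFS or DFS preorder). Reading off the unique blue point contained in each listed line gives an ordering $b^i_1,\ldots,b^i_{t_i}$ of $P_i$ with the property that the corresponding prefix of lines $\{L^i_1,\ldots,L^i_j\}$ induces a connected subgraph of $G_{\F'}$; this is exactly condition~2 of Definition~\ref{conforming_tuple}. Minimality of $\F'$ guarantees that any intersection between two lines in the same component must occur at a red point (otherwise the two lines would share a blue point, contradicting our assumption that each line has exactly one blue point), so the required intersection with some earlier line in the ordering is automatically at a red point.

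Finally, I would set $p$ to be the total number of red points covered by $\F'$ and, for each $1 \leq i \leq s$, set $k_r^i$ to be the number of red points covered by the lines in component $C_i$. Since distinct components share no line and, by minimality, share no covered red point either (a red point covered by lines in two different components would create an edge between them in $G_{\F'}$, contradicting that they are distinct components), the values $k_r^i$ sum to $p$. The inequality $p \leq k_r$ holds because $\F'$ is a solution subfamily. Assembling these pieces gives a tuple satisfying Definition~\ref{goodtuple}(a)--(d) and conforming to $\F'$ in the sense of Definition~\ref{conforming_tuple}.

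The only step that requires a small argument rather than a direct read-off is the existence of the ordering $I'_i$, and this is the main (but mild) obstacle: one must verify that a BFS/DFS ordering of any connected graph has the desired "connected-prefix" property and that the corresponding line intersections in $G_{\F'}$ use red points, both of which follow from basic graph theory together with minimality of $\F'$.
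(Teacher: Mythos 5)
Your proposal is correct and follows essentially the same route as the paper's own proof: read the tuple off the connected components of $G_{\F'}$, order each component by a connected (BFS/DFS-style) traversal so that every prefix of lines induces a connected subgraph, and use minimality to conclude that intersections witnessing edges are red points. The only blemish is your parenthetical justification for that last point: two lines sharing a blue point does \emph{not} contradict the hypothesis that each line contains exactly one blue point; the correct argument (which your appeal to minimality already suggests) is that if two lines of $\F'$ met at a blue point, that point would be the unique blue point of each, so one of the two lines could be dropped while still covering all of $B$, contradicting minimality of $\F'$.
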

\begin{proof}
Let $\F'$ be a minimal solution family of size $b \leq k_\ell$ that covers $p\leq k_r$ red points. Let $G_{\F'}$ have $s$ components viz. $C_1,C_2,\cdots,C_{s}$, where $s \leq k_\ell$. For each $i \leq s$, let $\F_{C_i}$ denote the set of lines corresponding to the vertices of $C_i$. $P_i = B \cap \F_{C_i}$, $t_i = \vert P_i \vert$ and $k_r^i = \vert R \cap \F_{C_i} \vert$. In this special case and by minimality of $\F'$, $\vert \F_{C_i} \vert = t_i$. As $C_i$ is connected, there is a sequence $\{L_1^i,L_2^i,\ldots L_{t_i}^i\}$ for the lines in $\F_{C_i}$ such that for all $j\leq t_i$ we have that 
    $G_{\F'}[\{L_1^i,\ldots,L_{j}^i \} ]$ is connected. This means that, for all $j\leq t_i$ $L_{j}^i$ intersects with at least one of the lines from the set $\{L_1^i,\ldots,L_{j-1}^i \}$. By minimality of $\F'$, the point of intersection for such a pair of lines is a red point. For all $j \leq t_i$, let $L_j^i$ cover the blue point $b_j^i$. Let $I'_i = b_1^i,b_2^i,\ldots,b_{t_i}^i$. 
The tuple $\Big(b,p,s, P=P_1\cup P_2 \ldots \cup P_s, \{I'_1,\ldots,I'_s\}, (k_r^1,k_r^2,\ldots, k_r^s)\Big)$ is a good tuple and it also conforms with $\F'$.   This completes the proof.  
\end{proof}


The idea of the algorithm is to generate all good tuples and then check whether there is a 
solution subfamily $\F'$ that conforms to it. The next lemma states we can check for a conforming minimal solution family when we are given a good tuple. 
\begin{lem}\label{good_tuples}
 For a good tuple $(b,p,s, P,   \{I'_1,\ldots,I'_s\}, (k_r^1,k_r^2,\ldots, k_r^s))$, 
 we can verify in $\Oh(b \ell p^b)$ time whether there is a minimal solution family $\F'$ that conforms with this tuple.
\end{lem}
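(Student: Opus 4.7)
The plan is to prove the lemma via an enumeration algorithm that processes the $s$ components specified by the good tuple essentially independently. Under the hypothesis that every line contains exactly one blue point, a minimal solution family has exactly $b$ lines (one per blue point), and the good tuple forces its intersection graph $G_{\F'}$ to decompose into $s$ components whose blue-point sets are precisely $P_1, \ldots, P_s$. So it suffices, for each component $C_i$, to search within $\F$ for a sub-family covering $P_i$ according to the ordering $I_i'$ and covering exactly $k_r^i$ red points.

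For a fixed $C_i$ with ordering $I_i' = b_1^i, \ldots, b_{t_i}^i$, the plan is to enumerate candidate sequences $(L_1^i, \ldots, L_{t_i}^i) \subseteq \F$ step by step: $L_1^i$ is one of the at most $\ell$ lines of $\F$ through $b_1^i$, and for each $j \geq 2$, Definition~\ref{conforming_tuple}(2) forces $L_j^i$ to meet some earlier $L_{j'}^i$ at a red point. Since the previous lines cover at most $k_r^i \leq p$ red points (branches in which the running red-count exceeds $k_r^i$ are aborted), there are at most $p$ choices for that red point $r$; once $r$ is fixed, $L_j^i$ is the unique line through $b_j^i$ and $r$, and we verify whether it lies in $\F$. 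This produces at most $\ell \cdot p^{t_i - 1}$ candidate sequences per component, each verifiable in polynomial time against the component's specification (all lines in $\F$, the required connectivity at red points realized, exactly $k_r^i$ red points covered; the one-blue-point-per-line assumption automatically precludes any candidate from covering a blue point outside $P_i$). Summing over $s \leq b$ components gives $\sum_i \ell \cdot p^{t_i - 1} \leq b\,\ell\, p^b$ as claimed.

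Completeness follows from Lemma~\ref{lem:conformingoodtuplexists}: any conforming minimal family $\F'$ induces per-component sequences exactly of the enumerated form, because the ordering $I_i'$ is designed so that every $L_j^i$ with $j \geq 2$ meets an earlier line at a red point, which is exactly the ``guess a red point on prior lines'' step of the enumeration. Soundness reduces to a cross-component consistency check: lines from different components cannot share an input point—blue-point sharing is precluded by the partition $P$, while a shared red point would merge two components in $G_{\F'}$, contradicting $P$. Hence soundness is equivalent to demanding that the aggregate red-point cover across the chosen per-component candidates has size exactly $\sum_i k_r^i = p$.

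The \textbf{main obstacle} I foresee is maintaining this cross-component disjointness within the $O(b\,\ell\, p^b)$ time budget, rather than paying a multiplicative combination cost. My plan is to process components sequentially, carrying a set of red points already committed and pruning each subsequent component's candidates against it; pruning only shrinks candidate sets, so per-component enumeration remains bounded by $\ell \cdot p^{t_i - 1}$ and the sum stays within $O(b\,\ell\, p^b)$. Arguing that this sequential commitment never eliminates a globally feasible choice—likely via an exchange argument on per-component candidates, or by recasting the combination step as a polynomial-time bipartite matching on precomputed candidate lists keyed by their red-point sets—will be the technical core of the proof.
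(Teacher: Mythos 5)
Your per-component enumeration is exactly the paper's algorithm: branch over the at most $\ell$ lines through $b_1^i$, and for each subsequent $b_j^i$ branch over the at most $k_r^i\leq p$ red points already covered in that component (each such red point together with $b_j^i$ determines at most one line of $\F$), giving $\Oh(\ell p^{t_i-1})$ per part and $\Oh(b\ell p^b)$ overall. The gap is in the step you yourself flag as unresolved: the ``cross-component consistency'' check. This obstacle is illusory, and your proposed fix (processing components sequentially while pruning against already-committed red points) would actually endanger completeness, since an earlier component may commit a valid but ``wrong'' candidate whose red points then wrongly eliminate the candidate needed by a later component; the exchange/matching argument you defer is precisely what you would then owe, and it is not obviously available.

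The resolution is that no cross-component coordination is needed, for two reasons. For completeness, if a conforming $\F'$ exists, then by definition of the intersection graph its components are pairwise disjoint in red points (a shared red point would merge two components), so each independent per-component search succeeds on the restriction of $\F'$ to that component. For soundness, you should only require each component to cover \emph{at most} $k_r^i$ red points (not exactly $k_r^i$, as you demand): if every independent search succeeds, the union of the found subfamilies covers all blue points with $b\leq k_\ell$ lines and at most $\sum_i k_r^i=p\leq k_r$ red points --- red points shared across components can only \emph{decrease} the total count --- so the instance is a \YES\ instance even if the assembled family happens to conform to a different tuple. Since the outer algorithm (Lemma~\ref{solution_1blue}) iterates over all good tuples and only needs a correct \YES/\NO\ answer, this one-sided guarantee is all that is required. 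Dropping the exactness requirement and the sequential pruning closes your gap and recovers the paper's argument.
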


\begin{proof}
The algorithm essentially builds a search tree for each partition $P_i, 1\leq i \leq s$. For each part $P_i$, we define a set of points $R'_i$ which is initially an empty set. 


For each $1 \leq i \leq s$, let $t_i=|P_i|$ and let $I_i'=b_1^i,\ldots,b_{t_i}^i$  be the ordering of blue points in $P_i$. Our objective is to check whether there is a subfamily $\F_i' \subseteq \F$ such that it covers $b_1^i,\ldots,b_{t_i}^i$, and at most $k_r^i$ red point. At any stage of the algorithm, we have a subfamily $\F_i'$ covering  
$b_1^i,\ldots, b_j^i$ and at most $k_r^i$ red points. In the next step we try to enlarge $\F_i'$  in such a way 
that it also covers $b_{j+1}^i$, but still covers at most $k_r^i$ red points. In some sense we follow the ordering given by $I_i'$ to build $\F_i'$. 

Initially, 
$\F_i'=\emptyset$. At any point of the recursive algorithm we represent the problem to be solved by the following tuple: 
($\F_i'$, $R_i'$, ($b_j^i,\ldots,b_{t_i}^i$), $k_r^i -|R_i'|$). 
We start the process by guessing the line in $\cal F$ that covers $b_1^i$, say $L_1^i$. That is, for every $L\in \F$ such that 
$b_1^i$ is contained in $L$ we recursively check whether there is a solution to the tuple 
 ($\F_i':=\F_i'\cup \{L\}$, $R_i':=R_i'\cup (R \cap L) $, ($b_2^i,\ldots,b_{t_i}^i$),$k_r^i:=k_r^i -|R_i'|$).  If any tuple returns \YES\ then we return that there is a subset $\F_i' \subseteq \F$ which covers $b_1^i,\ldots,b_{t_i}^i$, and at most $k_r^i$ red points. 

Now suppose we are at an intermediate stage of the algorithm and the tuple we have is ($\F_i'$, $R_i'$, ($b_j^i,\ldots,b_{t_i}^i$), $k_r^i$). Let $\cal L$ be the set of lines such that it contains $b_j^i$ and a red point from $R_i'$. Clearly, $|{\cal L}|\leq |R_i'| \leq k_r^i$. For every line $L\in  {\cal L}$, we  recursively check whether there is a solution to the tuple ($\F_i':=\F_i'\cup \{L\}$, $R_i':=R_i'\cup (R \cap L) $, ($b_{j+1}^i,\ldots,b_{t_i}^i$),$k_r^i:=k_r^i -|R_i'|$).  If any tuple returns \YES\ then we return that there is a subset $\F_i' \subseteq \F$ which covers $b_1^i,\ldots,b_{t_i}^i$, and at most $k_r^i$ red points. 

Let $\mu=t_i$.  At each stage $\mu$ drops by one and, except for the first step, the algorithm recursively solves at most $k_r^i$ subproblems. This implies that the algorithm takes at most $\Oh(|\F| k_r^{t_i})= \Oh(\ell k_r^{t_i})$ time.

%

Notice that the lines in the input instance are partitioned according to the blue points contained in it. Hence, the search corresponding to each part $P_i$ is independent of those in other parts. In effect, we are searching for the components for $G_{\F'}$ in the input instance, in parallel. If for each $P_i$ we are successful in finding a minimal set of lines covering exactly the blue points of $P_i$ while covering at most $k_r^i$ red points, we conclude that a solution family $\F'$ that conforms to the given tuple exists and hence the input instance is a \YES\ instance.

The time taken for the described procedure in each part is at most $\Oh(\ell k_r^{t_i})$. Hence, the total time taken to check if there is a conforming minimal solution family $\F'$ is at most 
$$ \Oh(\ell \cdot \sum_{i=1}^s k_r^{t_i})=  \Oh(s \ell   p^b)=\Oh(b \ell p^b).$$ 
This concludes the proof. 
\end{proof}

We are ready to describe our \FPT\ algorithm for this special case of \slrbsc\ parameterized by $k_\ell+k_r$.
\begin{lem}\label{solution_1blue}
 Let  $(U,\F , k_\ell ,k_r)$ be an input to  \slrbsc\ such that every line contains exactly $1$ blue point. Then we can check whether there is a solution subfamily $\F'$ to this instance in time 
 $k_\ell^{\Oh(k_\ell)}\cdot k_r^{\Oh(k_r)}\cdot (\vert U \vert+ \vert {\cal F}\vert)^{\Oh(1)}$ time.
\end{lem}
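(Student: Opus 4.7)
The plan is to brute-force enumerate all good tuples in the sense of Definition~\ref{goodtuple} and, for each, invoke Lemma~\ref{good_tuples} to test whether some minimal solution subfamily $\F'$ conforms to it. Correctness of this approach is immediate from Lemma~\ref{lem:conformingoodtuplexists}: a solution exists if and only if at least one conforming good tuple witnesses one. So the entire task reduces to (i) bounding the number of good tuples and (ii) plugging in the per-tuple cost from Lemma~\ref{good_tuples}.

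First I would make the preliminary observations that justify small parameter ranges. Since every line contains exactly one blue point, any minimal solution family has size exactly $b$, so if $b > k_\ell$ we immediately return \NO; thus we may assume $b \leq k_\ell$. Similarly $p \leq k_r$ and $s \leq b \leq k_\ell$ from the definition of a good tuple.

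Now I count good tuples. The number of partitions $P$ of the $b$-element set $B$ is at most the Bell number $B_b \leq b^b \leq k_\ell^{k_\ell}$. For a fixed partition $P=P_1 \uplus \cdots \uplus P_s$ with $|P_i|=t_i$, the number of orderings $I'_i$ of $P_i$ is $t_i!$, and $\prod_{i=1}^s t_i! \leq b! \leq k_\ell^{k_\ell}$. The number of ways to choose the composition $(k_r^1,\ldots,k_r^s)$ of $p$ into $s$ nonnegative parts is $\binom{p+s-1}{s-1} \leq 2^{p+s} \leq 2^{k_r+k_\ell}$. Finally there are at most $k_r \cdot k_\ell$ choices for the pair $(p,s)$. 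Multiplying, the total number of good tuples is bounded by $k_\ell^{\Oh(k_\ell)}\cdot 2^{\Oh(k_r)}$.

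By Lemma~\ref{good_tuples}, for each good tuple we can check in time $\Oh(b\,\ell\,p^b) = \Oh(k_\ell\,\ell\,k_r^{k_\ell})$ whether a conforming minimal solution exists. Combining this with the bound on the number of good tuples yields an overall running time of
\[
k_\ell^{\Oh(k_\ell)} \cdot 2^{\Oh(k_r)} \cdot k_\ell\,\ell\,k_r^{k_\ell}
\;=\; k_\ell^{\Oh(k_\ell)} \cdot k_r^{\Oh(k_r)} \cdot (\vert U\vert + \vert\F\vert)^{\Oh(1)},
\]
where we absorbed the $k_r^{k_\ell}$ factor using $k_r^{k_\ell} \leq \max\{k_\ell^{k_\ell}, k_r^{k_r}\}$ together with the earlier bound $k_\ell^{\Oh(k_\ell)}$. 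I do not anticipate a genuine obstacle here: the whole argument is essentially a careful enumeration, and the only mild subtlety is keeping the enumeration bound and the per-tuple cost (which carries a $k_r^{b}$ factor rather than $k_r^{s}$) correctly combined so that the final exponent is the claimed $\Oh(k_\ell \log k_\ell + k_r \log k_r)$ rather than something larger.
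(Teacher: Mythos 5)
Your proposal is correct and follows essentially the same route as the paper: enumerate all good tuples (bounding their number by $k_\ell^{\Oh(k_\ell)}\cdot 2^{\Oh(k_r)}$ via the partition, ordering, and composition counts), invoke Lemma~\ref{good_tuples} per tuple, and absorb the residual $k_r^{k_\ell}$ factor using $k_r^{k_\ell}\leq\max\{k_\ell^{k_\ell},k_r^{k_r}\}$ exactly as the paper does. The only differences are cosmetic (Bell-number versus $b^{k_\ell}$ bound on partitions, $2^{p+s}$ versus a case analysis for the compositions), and they do not affect the final running time.
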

\begin{proof}
Lemma~\ref{lem:conformingoodtuplexists} implies that for the algorithm all we need to do is to enumerate all possible good tuples 
$(b,p,s, P,   \{I'_1,\ldots,I'_s\}, (k_r^1,k_r^2,\ldots, k_r^s))$, and for each tuple, check whether there 
is a conforming minimal solution family. Later, we use the algorithm described in Lemma~\ref{good_tuples}. 
%
We first give an upper bound on the number of tuples  and how to enumerate them. 
\begin{enumerate}
\item There are $k_\ell$ choices for $s$ and $k_r$ choices for $p$.
\item There can be at most $b^{k_\ell}$ choices for $P$ which can be enumerated in  $\Oh(b^{k_\ell}\cdot k_\ell)$ time. 
\item For each $j\leq s$, $I'_j$ is ordering for blue points in $P_i$. Thus, if $|P_i|=t_i$, then the number of  ordering 
tuples $\{I'_1,\ldots,I'_s\}$ is upper bounded by $\prod_{i=1}^s t_i! \leq \prod_{i=1}^s t_i^{t_i} \leq  \prod_{i=1}^s b^{t_i} =b^b$. 
Such orderings can be enumerated in $\Oh(b^b)$ time.
 
\item For a fixed $p\leq k_r,s\leq k_\ell$, there are 
at most $p + s-1 \choose s -1$ solutions for $k_r^1+k_r^2+ \ldots + k_r^s = p$  
 and this set of solutions can be enumerated in $\Oh({{p+s-1} \choose {s-1}}\cdot ps)$ time. Notice that if $p \geq s$ then the time required for enumeration is $\Oh((2p)^{p}\cdot ps)$. Otherwise, the required time is $\Oh((2s)^{s}\cdot ps)$. As $p \leq k_r$ and $s \leq k_\ell$, the time required to enumerate the set of solutions is $\Oh(k_\ell^{\Oh(k_\ell)}k_r^{\Oh(k_r)}\cdot k_\ell k_r)$. 
\end{enumerate}
   Thus we can generate the set of tuples in time $k_\ell^{\Oh(k_\ell)}\cdot k_r^{\Oh(k_r)}$.Using Lemma~\ref{good_tuples}, for each tuple we check in at most $\Oh(k_r^{k_\ell}\cdot k_\ell \ell)$ time whether there is a conforming solution family or not. If there is no tuple with a conforming solution family, we know that the input instance is a \NO\ instance. 
   The total time for this algorithm is $k_\ell^{\Oh(k_\ell)}k_r^{\Oh(k_r)}k_r^{\Oh(k_\ell)}\cdot (\vert U \vert+\vert {\cal F} \vert)^{\Oh(1)}$. Again, if $k_r \leq k_l$ then $k_r^{\Oh(k_\ell)} = k_\ell^{\Oh(k_\ell)}$. Otherwise, $k_r^{\Oh(k_\ell)} = k_r^{\Oh(k_r)}$. Either way, it is always true that $k_r^{\Oh(k_\ell)} = k_\ell^{\Oh(k_\ell)}k_r^{\Oh(k_r)}$. Thus, we can simply state the running time to be 
   $k_\ell^{\Oh(k_\ell)}\cdot k_r^{\Oh(k_r)}\cdot (\vert U \vert + \vert {\cal F}\vert)^{\Oh(1)}$.  
\end{proof}


We return to the general problem of \slrbsc\ parameterized by $k_\ell+k_r$. Instances in this problem may have lines containing $2$ or more blue points. 
We use the results and observations described above to arrive at an \FPT\ algorithm for \slrbsc\ parameterized by $k_\ell+k_r$.

\begin{theorem}
\label{thm:colour_coding}
 \slrbsc\ parameterized by $k_\ell+k_r$ is \FPT, with an algorithm that runs in $k_\ell^{\Oh(k_\ell)}\cdot 
 k_r^{\Oh(k_r)} \cdot (\vert U \vert + \vert {\cal F} \vert)^{\Oh(1)}$ time.
\end{theorem}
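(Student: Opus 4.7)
The plan is to reduce the general instance to the one-blue-point-per-line special case already solved in Lemma~\ref{solution_1blue}, by guessing the (small) set of solution lines that contain two or more blue points.

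First, I would exhaustively apply Reduction Rules~\ref{redn1}, \ref{red_vertices_bd}, and~\ref{redn4}. After this preprocessing, Observation~\ref{blue_size} gives $b \leq k_\ell^2$, or we output \NO. Let $\F_{\geq 2}\subseteq \F$ denote the set of lines containing at least two blue points. By Observation~\ref{covering_lines_bd}, $|\F_{\geq 2}|\leq \binom{b}{2}\leq \binom{k_\ell^2}{2} = \Oh(k_\ell^4)$. Since any solution uses at most $k_\ell$ lines in total, I would enumerate every subset $S\subseteq \F_{\geq 2}$ with $|S|\leq k_\ell$; the number of such subsets is at most $\binom{\Oh(k_\ell^4)}{k_\ell} = k_\ell^{\Oh(k_\ell)}$.

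For each guess $S$, I would check whether $S$ is a consistent partial solution: let $B_S$ be the blue points covered by $S$ and $R_S$ the red points covered by $S$; if $|R_S|>k_r$, discard $S$. Otherwise, I build a reduced instance $(U\setminus (B_S\cup R_S), \F_{\mathrm{new}}, k_\ell - |S|, k_r - |R_S|)$ where $\F_{\mathrm{new}}$ is obtained from $\F\setminus \F_{\geq 2}$ by restricting each remaining line to $U\setminus (B_S\cup R_S)$ and discarding lines that no longer contain any blue point from $B\setminus B_S$. Safeness is immediate in both directions: any guess $S$ that equals the intersection of a true solution with $\F_{\geq 2}$ yields the same solution on the reduced instance, while conversely any reduced-instance solution together with $S$ reconstructs a full solution (red points in $R_S$ are already paid for). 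Crucially, because every surviving line originally had at most one blue point, each line in the reduced instance still contains exactly one blue point, so Lemma~\ref{solution_1blue} applies.

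Combining the two costs, the total running time is
\[
k_\ell^{\Oh(k_\ell)} \cdot \Bigl( k_\ell^{\Oh(k_\ell)}\cdot k_r^{\Oh(k_r)}\cdot (|U|+|\F|)^{\Oh(1)} \Bigr) = k_\ell^{\Oh(k_\ell)} \cdot k_r^{\Oh(k_r)} \cdot (|U|+|\F|)^{\Oh(1)},
\]
as claimed. The main conceptual obstacle was already discharged in Lemma~\ref{solution_1blue} (the good-tuple enumeration for the one-blue-point-per-line case); the remaining work here is essentially the bookkeeping above, where the only delicate point is ensuring that red points in $R_S$ are removed from the reduced universe so they are not double-counted against $k_r$, and that deleting all of $\F_{\geq 2}\setminus S$ is safe (which holds because we have explicitly guessed the intersection of the solution with $\F_{\geq 2}$).
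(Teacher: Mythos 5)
Your proposal is correct and follows essentially the same route as the paper: preprocess with the reduction rules to get $b\leq k_\ell^2$, bound the lines with at least two blue points by $\Oh(k_\ell^4)$ via Observation~\ref{covering_lines_bd}, branch over the $k_\ell^{\Oh(k_\ell)}$ possible intersections of the solution with that set, and hand the residual one-blue-point-per-line instance to Lemma~\ref{solution_1blue}. The bookkeeping details (removing already-covered red points so they are not charged twice, and discarding lines left with no blue point) match the paper's treatment.
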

\begin{proof}
Given an input $(U,\F , k_\ell ,k_r)$ for \slrbsc\ parameterized by $k_\ell+k_r$, we do some preprocessing to make the instance simpler. We exhaustively apply Reduction Rules~\ref{redn1}, \ref{red_vertices_bd} and~\ref{redn4}. After this, by Observation~\ref{blue_size}, the reduced equivalent instance has at most $k_\ell \choose 2$ blue points  if it is a \YES\ instance.

A minimal solution family can be broken down into two parts:  the set of lines containing at least $2$ blue points, and the remaining set of lines which contain exactly $1$ blue point.  Let us call these sets $\F_2$ and $\F_1$ respectively. We start with the following observation. 

\begin{observation}\label{2blue_lines}
 Let $\F'' \subseteq \F$ be the set of lines that contain at least $2$ blue points. There are at most $k_\ell^{4} \choose k_\ell$ ways in which a solution family can intersect with $\F''$.
\end{observation}
\begin{proof}
 Since $b \leq {k_\ell \choose 2}$, it follows from Observation~\ref{covering_lines_bd} that $\vert \F''\vert \leq k_\ell^4$. For any solution family, there can be at most $k_\ell$ lines containing at least $2$ blue points. Since the number of subsets of $\F''$ of size at most $k_\ell$ is bounded by $k_\ell^{4k_\ell}$, the observation is true.
\end{proof}

 From Observation~\ref{2blue_lines}, there are $k_\ell^{4k_\ell}$ choices for the set of lines in $\F_2$. We branch on all these choices of $\F_2$. On each branch, we reduce the budget of $k_\ell$ by the number of lines in $\F_2$ and the budget of $k_r$ by $\vert R \cap \F_2 \vert$. Also, we make some modifications on the input instance: we delete all other lines containing at least $2$ blue points from the input instance. We delete all points of $U$ covered by $\F_2$ and all lines passing through blue points covered by $\F_2$. Our modified input instance in this branch now satisfies the assumption of Lemma~\ref{solution_1blue} and we can find out in $k_\ell^{\Oh(k_\ell)}k_r^{\Oh(k_r)}\cdot (\vert U\vert +\vert {\cal F} \vert)^{\Oh(1)}$ time whether there is a minimal solution family $\F_1$ for this reduced instance. If there is, then $\F_2 \cup \F_1$ is a minimal solution for our original input instance and we correctly say \YES.
Thus the total running time of this algorithm is $k_\ell^{\Oh(k_\ell)} \cdot k_r^{\Oh(k_r)} \cdot (\vert U \vert + \vert {\cal F}\vert)^{\Oh(1)}$. 

It  may be noted here that for a special case where we can use any line in the plane as part of the solution, the second part of the algorithm becomes considerably simpler. Here for each blue point $b$, we can use an arbitrary line containing only $b$ and no red point.
 \end{proof}

\begin{corollary}\label{sp_case_red}
\slrbsc\ parameterized by $k_\ell+d$, where every line contains at most $d$ red points, is \FPT. The running time of the \FPT\ algorithm is $(dk_\ell)^{\Oh(dk_\ell)} \cdot (\vert U\vert +\vert {\cal F}\vert)^{\Oh(1)}$. The problem remains \FPT\ for all parameter sets $\Gamma'$ that contain $\{k_\ell,d\}$ or $\{b,d\}$. 
\end{corollary}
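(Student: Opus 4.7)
The plan is to derive this corollary as a direct consequence of Theorem~\ref{thm:colour_coding}, using the observation that the parameter $k_r$ is implicitly bounded once we fix $k_\ell$ and $d$. Specifically, if every line contains at most $d$ red points, then any subfamily $\F' \subseteq \F$ of size at most $k_\ell$ covers at most $d \cdot k_\ell$ red points in total. Consequently, the constraint ``at most $k_r$ red points covered'' is trivially satisfied whenever $k_r \geq d k_\ell$, and otherwise the effective bound we need to enforce is $k_r \leq d k_\ell$. In either case, we may reset the budget to $k_r' := \min\{k_r, d k_\ell\} \leq d k_\ell$ without changing the answer.

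With this reduction in hand, I would simply invoke Theorem~\ref{thm:colour_coding} on the instance $(U, \F, k_\ell, k_r')$. That algorithm runs in time
\[
k_\ell^{\Oh(k_\ell)} \cdot (k_r')^{\Oh(k_r')} \cdot (\vert U \vert + \vert \F \vert)^{\Oh(1)},
\]
and substituting $k_r' \leq d k_\ell$ yields
\[
k_\ell^{\Oh(k_\ell)} \cdot (d k_\ell)^{\Oh(d k_\ell)} \cdot (\vert U \vert + \vert \F \vert)^{\Oh(1)} = (d k_\ell)^{\Oh(d k_\ell)} \cdot (\vert U \vert + \vert \F \vert)^{\Oh(1)},
\]
as claimed. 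This handles the parameter set $\{k_\ell, d\}$.

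For the parameter set $\{b, d\}$, I would use the fact (noted at the start of Section~\ref{redsoln_comb} and following from minimality) that any minimal solution family has size at most $b$, so it suffices to consider $k_\ell := b$. The previous argument then applies with $k_\ell$ replaced by $b$ and gives an FPT algorithm with running time $(db)^{\Oh(db)} \cdot (\vert U \vert + \vert \F \vert)^{\Oh(1)}$. Finally, for any $\Gamma' \subseteq \Gamma$ containing $\{k_\ell, d\}$ or $\{b, d\}$, fixed-parameter tractability is inherited immediately, since ignoring the extra parameters in $\Gamma'$ only makes the parameter smaller and the same algorithm still runs in FPT time. I don't anticipate any real obstacle here: the entire content of the corollary is the bound $k_r \leq d k_\ell$ combined with Theorem~\ref{thm:colour_coding}, so the proof should consist of just these two observations followed by the substitution into the running time.
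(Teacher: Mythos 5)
Your proposal is correct and follows the paper's proof exactly: the paper likewise observes that any solution family covers at most $dk_\ell$ red points, so one may assume $k_r \leq dk_\ell$ and invoke Theorem~\ref{thm:colour_coding}, which yields the stated running time. Your additional remarks on replacing $k_\ell$ by $b$ for minimal solutions and on inheritance by larger parameter sets are consistent with the paper's (implicit) treatment of those cases.
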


\begin{proof}
In this special case, any solution family can contain at most $dk_\ell$ red points. Hence we can safely assume that $k_r \leq dk_\ell$ and apply Theorem~\ref{thm:colour_coding}. 
\end{proof}
\subsection{Kernelization for \slrbsc\ parameterized by $k_\ell+k_r$ and $b+k_r$}

We give a polynomial parameter transformation from \SC\ parameterized by universe size $n$, to \slrbsc\ parameterized by $k_\ell+k_r+b$. Proposition \ref{SC_results}(ii) implies that on parameterizing by any subset of the parameters $\{k_\ell,k_r,b\}$, we will also obtain a negative result for polynomial kernels.

\begin{thmk}\label{set_cover_redn}
 \slrbsc\ parameterized by $k_\ell+k_r+b$ does not allow a polynomial kernel unless $\CoNP\subseteq \NP/\mbox{poly}$.
\end{thmk}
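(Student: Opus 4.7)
The plan is to give a polynomial parameter transformation (ppt) from \SC\ parameterized by the universe size $n$ to \slrbsc\ parameterized by $k_\ell+k_r+b$. By Proposition~\ref{SC_results}(ii) the source problem has no polynomial kernel unless $\CoNP\subseteq\NP/\mbox{poly}$, and \slrbsc\ is clearly in \NP, so such a ppt will transfer the lower bound.

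Given a \SC\ instance $(U=\{u_1,\ldots,u_n\},\F,k)$, I would build the \slrbsc\ instance as follows. Place a blue point $b_i$ at $(i,0)$ for each $u_i\in U$ and a red point $r_j$ at $(j,1)$ for each $S_j\in\F$. For every pair $(i,j)$ with $u_i\in S_j$, add to the line family the line $L_{i,j}$ determined by $b_i$ and $r_j$. Finally, set $k_\ell=n$ and $k_r=k$. The new parameter is $k_\ell+k_r+b=2n+k=\Oh(n)$, as required for a ppt.

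A quick geometric check will ensure the construction is well-defined: since each $r_j$ lies off the $x$-axis, $L_{i,j}$ meets the $x$-axis only at $b_i$, so no other blue point lies on $L_{i,j}$; and on the horizontal line $y=1$ the line $L_{i,j}$ passes through exactly $r_j$. Hence distinct pairs $(i,j)$ give distinct lines containing only their two intended points, so the line family faithfully encodes the membership relation of $(U,\F)$.

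For correctness, suppose $S_{j_1},\ldots,S_{j_k}$ is a $k$-cover of $U$; then picking, for each $b_i$, some $L_{i,j_t}$ with $u_i\in S_{j_t}$ yields a subfamily of at most $n$ lines covering every blue point and only the $k$ red points $\{r_{j_1},\ldots,r_{j_k}\}$. Conversely, a valid subfamily $\F'$ must consist of lines of the form $L_{i,j}$, and letting $J=\{j : r_j\text{ is covered by }\F'\}$ we get $|J|\le k_r=k$; the blue coverage condition forces every $u_i$ into some $S_j$ with $j\in J$, giving a $k$-cover of $U$. The main (mild) obstacle I foresee is the geometric realizability, namely ensuring that no unintended collinearities merge two intended lines or add extra incidences; the explicit integer coordinates chosen above sidestep this cleanly, so no general-position argument is required.
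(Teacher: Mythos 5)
Your construction is essentially identical to the paper's: a ppt from \SC\ parameterized by universe size $n$, with one blue point per element, one red point per set, a two-point line per incidence, $k_\ell=n$, $k_r=k$, invoking Proposition~\ref{SC_results}(ii); the paper merely says the points are placed in general position where you give explicit coordinates, and both correctness arguments match. The only nitpick is that your claim $2n+k=\Oh(n)$ silently assumes $k\le n$, which (as the paper notes explicitly) one may assume because any \SC\ instance with $k>n$ is trivially a \YES\ instance.
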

 
\begin{proof}
 
 Let $(U,\mathcal{S})$ be a given instance of \SC. Let $\vert U \vert =n, \vert {\mathcal S} \vert =m$. We construct an instance $(R \cup B,\F)$ of \slrbsc\ as follows. We assign a blue point $b_u \in B$ for each element $u \in U$ and a red point $r_S \in R$ for each set $S \in \mathcal{S}$. The red and blue points are placed such that no three points are collinear. We add a line between $b_u$ and $r_S$ if $u \in S$ in the \SC\ instance. Thus the \slrbsc\ instance $(R\cup B,\mathcal{F})$ that we have constructed has $b =n$, $r=m$ and $\ell =\sum_{S\in {\cal S}} |S|$. We set $k_r = k$ and $k_\ell=n$.
 
 \begin{claim}
  All the elements in $(U,\mathcal{S})$ can be covered by $k$ sets if and only if there exist $n$ lines in $(R\cup B,\mathcal{F})$ that contain all blue points but only $k$ red points.
 \end{claim}
\begin{proof}
 Suppose $(U,\mathcal{S})$ has a solution of size $k$, say $\{S_1,S_2,\cdots S_k\}$. The red points in the solution family for \slrbsc\ are $\{r_{S_1},r_{S_2},\cdots r_{S_k}\}$  corresponding to $\{S_1,S_2,\cdots S_k\}$. For each element $u \in U$, we arbitrarily assign a covering set $S_u$ from $\{S_1,S_2,\cdots S_k\}$. The solution family is the set of lines defined by the pairs 
 $\{(b_u, r_{S_u})~\mid ~ u \in U\}$. This covers all blue points.

 Conversely, if $(R\cup B,\mathcal{F})$ has a solution family $\F'$ covering $k$ red points and using at most $n$ lines, the sets in $\mathcal{S}$ corresponding to the red points in $\F'$ cover all the elements in $(U,\mathcal{S})$.
\end{proof}
If $k > n$, then the \SC\ instance is a trivial \YES\ instance. Hence, we can always assume that $k \leq n$.
This completes the proof that \slrbsc\ parameterized by $k_\ell+k_r+ b$ cannot have a polynomial sized kernel unless $\CoNP\subseteq \NP/\mbox{poly}$.
\end{proof}

\section{Hyperplanes: parameterized by $k_\ell+k_r$}
\begin{theorem}
 \srbsc\ for hyperplanes in ${\mathbb R}^d$, for a fixed positive integer $d$, is \W[1]-hard when parameterized by $k_\ell+k_r$.
\end{theorem}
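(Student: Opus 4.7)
The plan is to give a parameterized reduction from \MCC, which is W[1]-hard parameterized by the clique size $k$, to \srbsc\ with hyperplanes in $\mathbb{R}^d$ for some fixed dimension $d \geq 3$, ensuring that \emph{both} $k_\ell$ and $k_r$ are bounded by a polynomial in $k$. The key leverage over the planar case (where Theorem~\ref{thm:colour_coding} shows \slrbsc\ is \FPT\ under $k_\ell+k_r$) is that a hyperplane in $\mathbb{R}^d$ has $d$ affine degrees of freedom and can be prescribed to pass through $d$ affinely independent points, permitting much richer incidence structures than lines in $\mathbb{R}^2$.

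The construction would adapt the near-horizontal/near-vertical line gadget from Theorem~\ref{thm:hardklb} to a hyperplane setting. For each color class $V_i$ I would introduce a constant number of blue ``selector'' points that force any solution family to pick at least one hyperplane corresponding to some vertex of $V_i$. For each pair of classes $\{V_i,V_j\}$ I would place a red ``edge-witness'' point $r_{ij}$ whose only incident hyperplanes are those $H_u$ with $u\in V_i$ or $u\in V_j$, and such that a pair $(H_u,H_v)$ with $u\in V_i,v\in V_j$ simultaneously avoids creating an \emph{extra} red incidence iff $uv\in E(G)$. Working in fixed dimension $d$, one realises this by placing the points along a moment-type curve (or a Veronese image of the plane configuration) and defining each hyperplane $H_v$ by the appropriate $d$ points; a standard Vandermonde/general-position argument then rules out unintended coincidences. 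Setting $k_\ell=\Oh(k)$ and $k_r=\Oh(k^2)$, a solution family corresponds bijectively to a multicoloured $k$-clique, giving W[1]-hardness under $k_\ell+k_r$.

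The main obstacle is keeping $d$ a \emph{constant} (independent of $k$) while still encoding the full edge set of $G$ without a blow-up of $k_r$: in fixed dimension, the incidence flexibility is limited, so one must argue carefully that the adjacency structure can be captured by $\Oh(k^2)$ red points and $\Oh(|V(G)|)$ hyperplanes while no hyperplane $H_v$ accidentally passes through red points other than the intended ones. An alternative, cleaner route I would try first is to appeal to the known W[1]-hardness of \textsc{Point-Hyperplane Cover} in $\mathbb{R}^d$ for $d\geq 3$ (parameterized by the number of covering hyperplanes): given such an instance, declare every point blue, put $R=\emptyset$, and set $k_r=0$. Then $k_\ell+k_r=k_\ell$ and the instance is a \YES\ instance of \srbsc\ iff it is a \YES\ instance of \textsc{Point-Hyperplane Cover}, yielding the desired hardness immediately.
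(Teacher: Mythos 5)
There is a genuine gap, and it affects both routes you propose. The ``cleaner route'' you say you would try first does not work: covering points with hyperplanes in $\mathbb{R}^d$ for fixed $d$, parameterized by the number $k$ of covering hyperplanes, is \emph{not} \W[1]-hard --- it is \FPT\ by Langerman and Morin (the very reference \cite{LP05} this paper uses for the definition of hyperplanes and for the tractability of its geometric covering variants). Indeed, any \srbsc\ instance with $R=\emptyset$ and $k_r=0$ \emph{is} a point--hyperplane cover instance, so a reduction of that shape can only ever land in an \FPT\ fragment of the problem and cannot establish \W[1]-hardness under $k_\ell+k_r$. The whole point of the theorem is that the hardness must come from the interaction between the blue-covering constraint and the red budget.

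Your main route is closer in spirit to what is needed, but it stops exactly at the step you yourself flag as ``the main obstacle,'' and that step is the entire content of the proof. The planar gadget of Theorem~\ref{thm:hardklb} relies on the fact that two lines in the plane meet in a \emph{point}, at which a red edge-witness can be placed; in $\mathbb{R}^d$ two hyperplanes meet in a $(d-2)$-flat, so ``place a red point at the intersection of $H_u$ and $H_v$'' has no analogue, and your description of a red point $r_{ij}$ whose incidences encode adjacency is never instantiated. The paper resolves this with a Marx-style $k\times k$ grid of gadgets joined by horizontal and vertical connectors: each candidate hyperplane is \emph{defined} as the affine hull of $d+1$ designated points (a blue point, $d-2$ private red anchor points, and a pair of red points drawn from per-vertex/per-edge two-point sets $R'_{ij}(a,b)$), with all points in general position so that no $d+2$ points are accidentally coplanar. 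A hyperplane for a connector exists only if the two gadget hyperplanes it is supposed to certify are present, and the red budget $k^2d+2k(k-1)(d-2)$ is tight enough that every connector hyperplane is forced to \emph{reuse} two red points already paid for by gadget hyperplanes; this reuse propagates a single vertex choice along each row and column and yields the clique. Without some such mechanism --- designated defining points plus an exact budget-reuse argument --- your sketch does not constitute a proof.
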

\begin{proof}
 The proof of hardness follows from a reduction from k-CLIQUE problem. The proof follows a framework given in \cite{Mar06}.
 
 Let $(G(V,E),k)$ be an instance of k-CLIQUE problem. Our construction consists of a $k\times k$ matrix of gadgets $G_{ij}$, $1 \leq i,j, \leq k$. Consecutive gadgets in a row are connected by horizontal connectors  and consecutive gadgets in a column are connected by vertical connectors. Let us denote the horizontal connector connecting the gadgets $G_{ij}$ and $G_{ih}$ as $H_{i(jh)}$ and the vertical connector connecting the gadgets $G_{ij}$ and $G_{hj}$ as $V_{(ih)j}$, $1 \leq i,j,h \leq k$. 
 \\ \textbf{Gadgets}: The gadget $G_{ij}$ contains a blue point $b_{ij}$ and a set $R_{ij}$ of $d-2$ red points. In addition there are $n^2$ sets $R'_{ij}(a,b), 1 \leq a,b \leq n$, each having two red points each. 
 \\ \textbf{Connectors}: The horizontal connector $H_{i(jh)}$ has a blue point $b_{i(jh)}$ and a set $R_{i(jh)}$ of $d-2$ red points. Similarly, the vertical connector $V_{(ih)j}$ a blue point $b_{(ih)j)}$ and a set $R_{(ih)j}$ of $d-2$ red points.
 
 \noindent The points are arranged in general position i.e., no set of $d+2$ points lie on the same $d$-dimensional hyperplane. In other words, any set of $d+1$ points define a distinct hyperplane.
 \\ \textbf{Hyperplanes}: Assume $1\leq i,j,h \leq k$ and $1\leq a,b,c \leq n$. Let $P_{ij}(a,b)$ be the hyperplane defined by the $d+1$ points of $b_{ij} \cup R_{ij} \cup R'_{ij}(a,b)$. Let $P^h_{i(jh)}(a,b,c)$ be the hyperplane defined by $d+1$ points of $b_{i(jh)}\cup R_{i(jh)} \cup r_1 \cup r_2$ where $r_1 \in R'_{ij}(a,b)$ and $r_2 \in R'_{ih}(a,c)$. Let $P^v_{(ij)h}(a,b,c)$ be the hyperplane defined by $d+1$ points of $b_{(ij)h)}\cup R_{(ij)h} \cup r_1 \cup r_2$ where $r_1 \in R'_{ih}(a,c)$ and $r_2 \in R'_{jh}(b,c)$.
 \\For each edge $(a,b) \in E(G)$, we add $k(k-1)$ hyperplanes of the type $P_{ij}(a,b)$, $i \ne j$. Further, for all $1\leq a \leq n$, we add $k$ hyperplanes of the type $P_{ii}(a,a)$, $1\leq i \leq k$. The hyperplane $P^h_{i(jh)}(a,b,c)$ containing the blue point $b_{i(jh)}$ in a horizontal connector, is added to the construction if $P_{ij}(a,b)$ and $P_{ih}(a,c)$ are present in the construction. Similarly, the hyperplane $P^v_{(ij)h}(a,b,c)$ containing the blue point $b_{(ij)h}$ in a vertical connector, is added to the construction if $P_{ih}(a,c)$ and $P_{jh}(b,c)$ are present in the construction.

 Thus our construction has $k^2+2k(k-1)$ blue points, $(k^2+2k(k-1))(d-2) + 2n^2k^2$ red points and $O((m^2k^2)$ hyperplanes.
 \begin{claim}
  $G$ has a $k$-clique if and only if all the blue points in the constructed instance can be covered by $k^2+2k(k-1)$ hyperplanes covering at most $k^2d+2k(k-1)(d-2)$ red points.
 \end{claim}
\begin{proof}
Assume $G$ has a clique of size $k$ and let $\{a_1,a_2,\cdots,a_k\}$ be the vertices of the clique. Now we show a set cover of desired size exists. Choose $k$ hyperplanes, $P_{ii}(a_i,a_i), 1\leq i \leq k$, to cover the diagonal gadgets. To cover other gadgets,$G_{ij}$, choose the hyperplanes $P_{ij}(a_ia_j)$ and to cover the connectors, $H_{i(jh)}$ and $V_{(ih)j}$, choose the hyperplanes $P^h_{i(jh)}(a_i,a_j,a_h)$ and $P^v_{(ij)h}(a_i,a_j,a_h)$. The fact that $\{a_1,a_2,\cdots,a_k\}$ forms a clique implies that these hyperplanes do exist in the construction. 

Now assume a set cover of given size exists. 
To cover the blue point $b_{ij}$ in the gadget $G_{ij}$, any hyperplane adds $d$ red points. Also to cover the blue point in each connector, we need to add $d-2$ extra red points. Since each hyperplane contains $d$ red points and we have already used up our budget of red points, each hyperplane covering the connector points should reuse two red points that have been used in covering gadgets. By construction, this is possible  only when all gadgets in a row(column) are covered by hyperplanes corresponding to edges incident on the same vertex viz. the vertex corresponding to the hyperplane covering the diagonal gadget in the row(column). This implies that $G$ has a clique.
\end{proof}
\end{proof}

\section{Multivariate complexity of {\slrbsc}: Proof of Theorem~\ref{thm:main:rbscalg+kernel}}
The  first part of Theorem~\ref{thm:main:rbscalg+kernel} (parameterized complexity dichotomy) 
follows from Theorems~\ref{thm:whardrkr}, ~\ref{lines_kernel}, ~\ref{thm:hardklb} and \ref{thm:colour_coding}. 
Recall that $\Gamma=\{\ell,r,b,k_\ell,k_r\}$.  To show the kernelization dichotomy of the parameterizations of \slrbsc\ that admit \FPT\ kernels we do as follows: 
\begin{itemize}
\item Show that the problem admits a polynomial kernel parameterized by $\ell$ (Theorem~\ref{lines_kernel}). This implies that for all $\Gamma'$ that contains $\ell$, the parameterization admits a polynomial kernel.
\item Show that the problem does not admit a polynomial kernel when parameterized by $k_\ell+k_r+b$ (Theorem~\ref{set_cover_redn}). This implies that for all subsets of $\{k_\ell,k_r,b\}$, the parameterization does not allow a polynomial kernel. 
\item 
The remaining \FPT\ variants of \slrbsc\ correspond to parameter sets $\Gamma'$ that contain either $r$ or $\{r,b\}$ together. Recall that, $k_r\leq r$ and $k_\ell \leq b$. The two smallest combined parameters for which we can not infer the kernelization complexity from Theorem~\ref{set_cover_redn} are  $r+k_\ell$ and $r+b$. We show below (Theorem~\ref{thmk:rbscklr}) that \srbsc\  admits a quadratic kernel parameterized by   $r+k_\ell$. Since in any minimal solution family $k_\ell \leq b$, this also implies a quadratic kernel for the parameterization $r+b$. Thus, if parameterization by a set $\Gamma'$, which contains either $r$ or $\{r,b\}$, allows an \FPT\ algorithm then it also allows a polynomial kernel.   
\end{itemize}

\begin{thmk}
\label{thmk:rbscklr}
\slrbsc\ parameterized by  $k_\ell+r$   admits a polynomial kernel.
\end{thmk}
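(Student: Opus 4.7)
The plan is to show that the standard reduction rules established earlier in the paper already yield a polynomial kernel once the parameter is $k_\ell + r$.

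First, I would exhaustively apply Reduction Rules~\ref{redn1}, \ref{red_vertices_bd}, and \ref{redn4}. By Observation~\ref{blue_size}, either the instance is correctly declared \NO, or the number of blue points satisfies $b \leq k_\ell^2$. Since the number of red points is the parameter $r$, the universe size satisfies $|U| = r + b \leq r + k_\ell^2$, which is already polynomial in $k_\ell + r$.

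The remaining task is to bound the size of the family $\F$. Since any two distinct lines of the plane intersect in at most one point, each line of $\F$ that contains at least two points of $U$ is uniquely determined by any pair of those points; hence there can be at most $\binom{|U|}{2} \leq \binom{r + k_\ell^2}{2}$ such lines. A line of $\F$ containing exactly one point of $U$ must contain a blue point by Reduction Rule~\ref{redn1}; after adding a trivial deduplication rule (if two lines cover the same subset of $U$, delete one, which is clearly safe since only the covered set affects any solution), there is at most one such line per blue point, contributing at most $b \leq k_\ell^2$ further lines. Combining, $|\F| = O((r + k_\ell^2)^2)$.

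Together, the final instance has $|U| + |\F| = O((r + k_\ell^2)^2)$, which is polynomial in $k_\ell + r$, yielding the desired kernel. I do not expect any substantive obstacle: the three reduction rules invoked were already proven safe in earlier sections, the deduplication step is immediate, and the geometric fact that two points determine a line instantly bounds the family once the universe has been bounded. The only point worth stating carefully is that after Reduction Rule~\ref{redn1} a singleton line in $\F$ can only cover a blue point, so the two bounds on $|\F|$ indeed account for every surviving line.
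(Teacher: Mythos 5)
Your proposal is correct and follows essentially the same route as the paper: apply Reduction Rules~\ref{redn1}, \ref{red_vertices_bd}, and~\ref{redn4} to get $b \leq k_\ell^2$, bound the lines with at least two points by $\binom{r+b}{2}$ via Observation~\ref{covering_lines_bd}, and keep only one singleton line per blue point, giving a kernel with $\binom{r+k_\ell^2}{2}+k_\ell^2$ lines. Your explicit deduplication rule for singleton lines is exactly the step the paper performs (deleting all but one line that contains only a given blue point), so there is no substantive difference.
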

\begin{proof}
Given an instance of \slrbsc\ we first exhaustively apply Reduction Rules~\ref{redn1},~\ref{red_vertices_bd} and~\ref{redn4} and  obtain an equivalent instance. By Observation \ref{blue_size}, the reduced instance has at most $b\leq k_\ell^2$ blue points. By Observation~\ref{covering_lines_bd}, the number of lines containing at least two points is ${r+b}\choose{2}$. After applying Reduction Rule~\ref{redn1}, there are no lines with only one red point. Also, for a blue point $b_i$, if there are many lines that contain only $b_i$, then we can delete all but one of those lines. Therefore, the number of lines that contain exactly one point is bounded by $b$. Thus, we get a kernel of $k_\ell^2$ blue points, ${{r+k_\ell^2}\choose {2}}+k_\ell^2$ lines and $r$ red points. 
This concludes the proof.
%
\end{proof}

Combining Theorems~\ref{lines_kernel},~\ref{set_cover_redn} and ~\ref{thmk:rbscklr} and the discussion above we prove the second part of the Theorem~\ref{thm:main:rbscalg+kernel} (kernelization dichotomy).

\section{Parameterized Landscape for  {\sc \oLRBSC}}
Until now our main focus was the \slrbsc\ problem. In this section, we study the original \solrbsc\ problem. Recall that the original \solrbsc\ problem differs from the \slrbsc\ problem in the following way -- here our objective is {\em only} to minimize the number of red points that are contained in a solution subfamily, and {\em not} the size of the subfamily itself. 
That is, $k_\ell =\vert \F \vert$. This change results in a slightly different  
landscape for  \solrbsc\  compared to \slrbsc. As before let  $\Gamma=\{\ell,r,b,k_\ell,k_r\}$. 
We first observe that for all those $\Gamma'\subseteq \Gamma$ that do not contain $k_\ell$ as a parameter and \slrbsc\ is \FPT\ parameterized by $\Gamma'$, \solrbsc\  is also \FPT\ parameterized by $\Gamma'$. 
Next we list out the subsets of parameters for which the results do not follow from the result on  \slrbsc. 
\begin{itemize}
\item \solrbsc\ becomes \FPT\ parameterized by $r$.
\item \W[2]-hard parameterized by $k_r$. 
\end{itemize}

%

\subsection{\solrbsc\ parameterized by $r$}

\begin{thmk}\label{easy}
 \solrbsc\  parameterized by $r$ is \FPT. Furthermore,  \solrbsc\  parameterized by $r$ does not allow a polynomial kernel unless $\CoNP\subseteq \NP/\mbox{poly}$.  
\end{thmk}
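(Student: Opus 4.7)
For the FPT direction, the plan is to brute-force over subsets of red points. For each $T\subseteq R$ with $\vert T\vert \leq k_r$, define $\mathcal{F}_T=\{L\in \mathcal{F}: L\cap R\subseteq T\}$, and check whether $\mathcal{F}_T$ covers $B$. The correctness hinges on the fact that \solrbsc\ places no constraint on the size of the chosen subfamily: if some solution has red footprint contained in $T$, then it is already a subfamily of $\mathcal{F}_T$, so $\mathcal{F}_T$ is itself a valid solution (its covered red points lie in $T$, hence are at most $k_r$ in number). Conversely, if $\mathcal{F}_T$ covers all blue points for some such $T$, we have a valid subfamily directly. Iterating over the at most $2^r$ candidate subsets $T$ and performing a polynomial-time coverage check yields an algorithm running in $\Oh(2^r\cdot (\vert U\vert+\vert\mathcal{F}\vert)^{\Oh(1)})$ time.

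For the kernelization lower bound, the plan is to give a polynomial parameter transformation from \SC\ parameterized by the number of sets $m$, which by Proposition~\ref{SC_results}(iii) does not admit a polynomial kernel. Given an instance $(U,\mathcal{S})$ of \SC\ with $\vert \mathcal{S}\vert=m$, I would place a blue point $b_u$ for each $u\in U$ and a red point $r_S$ for each $S\in \mathcal{S}$ in general position in the plane, so that no three of these points are collinear; for every incidence $u\in S$ include the (maximal) line through $b_u$ and $r_S$ in $\mathcal{F}$, and set $k_r=k$. In the resulting \solrbsc\ instance we have $r=m$, so the parameter grows only linearly. The equivalence argument mirrors the one used in Theorem~\ref{set_cover_redn}: a cover $\{S_1,\ldots,S_k\}$ yields a subfamily that covers each blue point via a line $\overline{b_u r_{S_u}}$ for some chosen $S_u\ni u$, touching only the $k$ red points corresponding to the cover; conversely, the sets labelling the red points covered by any valid subfamily form a set cover of $U$ of size at most $k_r=k$.

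The only point that requires any thought is the choice of source problem: reducing from \SC\ parameterized by the universe size $n$ (as in Theorem~\ref{set_cover_redn}) would produce $r=m$, which need not be polynomial in $n$, so the target parameter would not be polynomially bounded in the source parameter. Using the $m$-parameterization of \SC\ sidesteps this issue cleanly, and the remaining ingredients (the general-position placement guaranteeing that each constructed line is a maximal collinear pair, and the standard PPT framework that then rules out a polynomial kernel unless $\CoNP\subseteq \NP/\mbox{poly}$) are routine.
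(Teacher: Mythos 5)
Your proposal is correct and follows essentially the same route as the paper: the FPT algorithm enumerates the at most $2^r$ candidate sets of covered red points and checks coverage by the lines whose red points lie inside the guess, and the kernel lower bound is a ppt from \SC\ parameterized by the number of sets $m$ using exactly the reduction of Theorem~\ref{set_cover_redn} (which gives $r=m$). Your added remark on why one must start from the $m$-parameterization rather than the $n$-parameterization is a correct and worthwhile clarification of a point the paper leaves implicit.
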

\begin{proof}
We proceed by enumerating all possible $k_r$-sized subsets of $R$. For each subset, we can check in polynomial time whether the lines spanned by exactly those points cover all blue points. This is our \FPT\ algorithm, which runs in $\Oh(2^r \cdot (\vert U \vert + \vert {\cal F} \vert)^{\Oh(1)})$. 

Using Proposition~\ref{SC_results}, it is enough to show a polynomial parameter transformation from \SC\ parameterized by size $m$ of the set family, to \solrbsc\ parameterized by $r$. The reduction is exactly the same as the one given in the proof of Theorem~\ref{set_cover_redn}. This gives the desired second part of the theorem. 
\end{proof}

%
%

\subsection{\solrbsc\ parameterized by $k_r$}
In this section we study parameterization by $k_r$ and some special cases which leads to \FPT\ algorithm. 
We prove that \solrbsc\ parameterized by $k_r$ is \W[2]-hard. 
From Proposition~\ref{SC_results}, \SC\ parameterized by solution family size $k$ is \W[2]-hard. 
The \W[2]-hardness of \solrbsc\ parameterized by $k_r$ can be proved by a many-one reduction from 
\SC\ parameterized by $k$. The reduction is exactly the one that is given in Theorem~\ref{set_cover_redn}.

 \begin{theorem}\label{solution_red_hard}
 \solrbsc\ parameterized by  $k_r$ is \W[2]-hard.
\end{theorem}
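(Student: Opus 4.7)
The plan is to reduce from \SC\ parameterized by the solution family size $k$, which is \W[2]-hard by Proposition~\ref{SC_results}(i). The hint in the excerpt already suggests reusing the same construction employed in the proof of Theorem~\ref{set_cover_redn}, which I would do essentially verbatim: given an instance $(U,\mathcal{S})$ of \SC, introduce a blue point $b_u$ for every element $u\in U$ and a red point $r_S$ for every set $S\in \mathcal{S}$, placing all $n+m$ points in general position so that no three of them are collinear, and then for every incidence $u\in S$ add to $\mathcal{F}$ the (unique) line through $b_u$ and $r_S$. The output instance of \solrbsc\ sets $k_r := k$ (and imposes no constraint on $k_\ell$, since in \solrbsc\ the entire family $\mathcal{F}$ may be used).

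The correctness argument I would carry out is a minor adaptation of the claim inside Theorem~\ref{set_cover_redn}. For the forward direction, from a set cover $\{S_1,\ldots,S_k\}$ I would choose, for each blue point $b_u$, one line $(b_u, r_{S_{i(u)}})$ where $S_{i(u)}$ is any chosen set covering $u$; the resulting subfamily covers all blue points and uses only the $k$ red points $r_{S_1},\ldots,r_{S_k}$. For the reverse direction, a subfamily of $\mathcal{F}$ that covers all blue points while touching at most $k$ red points naturally induces a collection of at most $k$ sets in $\mathcal{S}$ that covers $U$, since every line passes through exactly one red point and the sets corresponding to these red points must cover every element. The general-position assumption ensures that a line is determined by one blue and one red point, making the correspondence between lines of $\mathcal{F}$ and element–set incidences of the \SC\ instance exact.

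Finally, I would check that this is a legitimate parameterized reduction with respect to the parameter $k_r$: the construction takes polynomial time, and the new parameter satisfies $k_r = k$, so it is linearly bounded in the source parameter. Combined with the correctness, this transfers \W[2]-hardness from \SC\ parameterized by $k$ to \solrbsc\ parameterized by $k_r$, as required. I do not anticipate any real obstacle, since the absence of a $k_\ell$ budget in \solrbsc\ makes the argument strictly simpler than the one used for Theorem~\ref{set_cover_redn}; the only minor subtlety is noting that the general-position placement is realizable in $\mathbb{R}^2$, which follows from a standard perturbation argument.
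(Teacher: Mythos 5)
Your proposal is correct and matches the paper's proof exactly: the paper likewise establishes \W[2]-hardness by reusing the reduction from \SC\ parameterized by $k$ given in Theorem~\ref{set_cover_redn}, setting $k_r = k$ and dropping the $k_\ell$ constraint. Your spelled-out correctness argument and the check that the parameter is preserved are precisely what the paper leaves implicit.
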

 
\subsubsection{\FPT\ result under special assumptions}

In this section we consider a special case, where in the given instance every line contains either no red points or at least $2$ red points. There are two reasons motivating the study of this
special case. Firstly, in the \W[2]-hardness proof we crucially used the fact that the constructed \solrbsc\ instance has a set of lines with exactly $1$ red point. Thus, it is necessary to check if this 
is the reason leading to the hardness of the problem. Secondly, if we look at \sorbsc\ (sets in the family can be arbitrary) parameterized by $k_r$ and assumed that in the given instance every line contains either no red points or at least $2$ red points, then too the problem is \W[1]-hard (see Theorem~\ref{MeCC_redn}). However, when we consider \solrbsc\ parameterized by $k_r$ and where in the given instance every set contains either no red points or at least $2$ red points, the problem is \FPT. 


For our algorithm we also need the following new reduction rule. 
 \begin{Reduction Rule}\label{redn2}
  If there is a set \(S \in \F\) with only blue points then delete that set from $\F$
 and include the set in the solution.
 \end{Reduction Rule}
 
 \begin{lem}
 Reduction Rule~\ref{redn2} is safe. 
 \end{lem}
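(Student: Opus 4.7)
The plan is to prove safety of Reduction Rule~\ref{redn2} by establishing an instance equivalence under the understanding that $S$ is pre-committed to the solution, so that the original instance $(U, \F, k_r)$ is a \YES\ instance if and only if the reduced instance (obtained by removing $S$ from $\F$ and the blue points of $S$ from the universe) is a \YES\ instance. The key structural fact driving the argument is that in \solrbsc, there is no bound $k_\ell$ on the cardinality of the solution subfamily; the only budget is on red points. Since $S \subseteq B$ and hence $R \cap S = \emptyset$, including $S$ in the solution is ``free'' with respect to the red budget.

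For the forward direction, I would take a valid family $\hat{\F}$ for the original instance and argue that $\hat{\F} \setminus \{S\}$ is valid for the reduced instance. Removing $S$ does not affect the coverage of any blue point outside $S$, because $S$ contains only blue points that lie in $S$ itself. Further, since $|R\cap S| = 0$, the number of red points covered by $\hat{\F} \setminus \{S\}$ is at most $|R \cap \hat{\F}| \le k_r$. Hence $\hat{\F} \setminus \{S\}$ covers all of $B\setminus S$ while using at most $k_r$ red points.

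For the reverse direction, I would take a valid family $\F''$ for the reduced instance, and show that $\F'' \cup \{S\}$ is valid for the original. By hypothesis, $\F''$ covers $B \setminus S$, and $S$ itself covers the remaining blue points (those in $S$). The red count is unchanged by adjoining $S$ because $S$ contains no red points, so $|R \cap (\F'' \cup \{S\})| = |R \cap \F''| \le k_r$.

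There is no real obstacle here: the entire argument hinges on two simple observations, namely that $S$ carries no red points and that \solrbsc\ imposes no cardinality constraint on the solution, so adjoining or removing $S$ only affects the blue coverage, in exactly the way required by the rule. The only minor point to state carefully is that the rule is specific to \solrbsc; the analogous rule would not be safe for \slrbsc\ without additionally decrementing $k_\ell$.
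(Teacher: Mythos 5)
Your proposal is correct and rests on exactly the same two observations as the paper's proof: $S$ contains no red points, and under the $k_r$ parameterization of \solrbsc\ there is no cardinality bound on the solution family, so $S$ can always be adjoined to (or stripped from) a valid family without affecting validity. You spell out both directions of the equivalence a bit more explicitly than the paper does, but the argument is essentially identical.
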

\begin{proof}
Since the parameter is $k_r$, there is no size restriction on the number of lines in the solution subfamily $\F'$. If $\F'$ is a solution subfamily and $S\in \F$ then under this parameterization, $\F' \cup \{S\}$ is also a solution family covering all blue points and at most $k_r$ red points. This shows that Reduction Rule~\ref{redn2} is valid.
\end{proof}

%
%
%
%
%

\begin{theorem}\label{cc_app}
 \solrbsc\ parameterized by $k_r$, where the input instance has every set containing at least $2$ red points or no red points at all, has an algorithm with running time $k_r^{\Oh(k_r^2)} \cdot (\vert U  \vert + \vert {\cal F} \vert )^{\Oh(1)}$.
\end{theorem}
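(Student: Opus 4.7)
My plan is to reduce this restricted version of \solrbsc\ to the \slrbsc\ problem already shown to be \FPT\ in Theorem~\ref{thm:colour_coding}, by establishing an a priori bound on the size of a minimal solution family in terms of $k_r$. The key structural observation is that, under the hypothesis that every set contains either zero or at least two red points, any minimal solution subfamily must be small whenever it covers few red points.

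First, I would exhaustively apply Reduction Rule~\ref{redn2} to add all lines containing only blue points into the solution and remove them from $\F$. After this preprocessing, every remaining line in $\F$ contains at least two red points. Now let $\F'$ be any minimal solution subfamily for the reduced instance. The set $R' := R \cap \bigcup_{L \in \F'} L$ of red points covered by $\F'$ has size at most $k_r$. Since every line in $\F'$ contains at least two red points, and any two distinct lines can share at most one point (let alone one red point), each line in $\F'$ corresponds to a distinct pair of red points from $R'$. Invoking Observation~\ref{covering_lines_bd} on $R'$, we obtain $|\F'| \leq \binom{|R'|}{2} \leq \binom{k_r}{2}$.

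With this bound in hand, the original instance is a \YES{} instance of our restricted \solrbsc\ problem if and only if the reduced instance is a \YES{} instance of \slrbsc\ with budget $k_\ell := \binom{k_r}{2}$ and red budget $k_r$. I would then directly invoke the \FPT\ algorithm from Theorem~\ref{thm:colour_coding}, which runs in time
\[
k_\ell^{\Oh(k_\ell)} \cdot k_r^{\Oh(k_r)} \cdot (|U| + |\F|)^{\Oh(1)}.
\]
Substituting $k_\ell = \binom{k_r}{2} = \Oh(k_r^2)$ yields
\[
(k_r^2)^{\Oh(k_r^2)} \cdot k_r^{\Oh(k_r)} \cdot (|U| + |\F|)^{\Oh(1)} \;=\; k_r^{\Oh(k_r^2)} \cdot (|U| + |\F|)^{\Oh(1)},
\]
as claimed.

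The only conceptual step is the size bound on $\F'$, and this is essentially immediate from the two-red-points assumption combined with the fact that two points determine a line; no elaborate branching or kernelization argument is needed for this special case. The remainder is a black-box invocation of the previously established machinery, and the arithmetic simplification of the exponent. The mild subtlety to verify is that Reduction Rule~\ref{redn2} is compatible with this parameterization (which was established in the accompanying lemma for that rule), and that the \slrbsc\ algorithm correctly finds a subfamily of size at most $k_\ell$ whenever one of size $\binom{k_r}{2}$ exists — which it does, by the minimality argument above.
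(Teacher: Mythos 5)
Your proposal is correct and follows essentially the same route as the paper: preprocess so that every remaining line has at least two red points, use Observation~\ref{covering_lines_bd} to bound the number of useful lines by $\binom{k_r}{2}$, set $k_\ell = \binom{k_r}{2}$, and invoke the \slrbsc\ algorithm of Theorem~\ref{thm:colour_coding}. The only cosmetic difference is that the paper also applies Reduction Rules~\ref{redn1} and~\ref{red_vertices_bd} before the same counting argument.
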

\begin{proof}
Given an instance of \solrbsc, we first exhaustively apply Reduction Rules~\ref{redn1}, ~\ref{red_vertices_bd} and
 \ref{redn2}  and obtain an equivalent instance. At the end of these reductions we obtain an equivalent instance where every line has at least $1$ blue point and at least $2$ red points, but at most $k_r$ red points. 

Suppose $\F'$ is a solution family. Since a line with a red point has at least \(2\) red points, by Observation~\ref{covering_lines_bd}, the total number of sets that can contain the red points covered by $\F'$ is at most $k_r\choose 2$. This means that, if the input instance is a \YES\ instance, there exists a solution family with at most 
$k_\ell = {k_r\choose 2}$ lines. Now we can apply the algorithm for \slrbsc\ parameterized by $k_\ell+k_r$ described in Theorem~\ref{thm:colour_coding} to obtain an algorithm for \solrbsc\ parameterized by $k_r$. 
%
%
\end{proof}

Theorem~\ref{cc_app} gives an \FPT\ algorithm for  \solrbsc\ parameterized by $k_r$. In what follows we show that the same parameterization does not yield a polynomial kernel for this special case of \solrbsc. 
%
%
Towards this we give a polynomial parameter transformation from \SC\ parameterized by universe size $n$, to \solrbsc\ parameterized by $k_r$ and under the assumption that all sets in the input instance have at least $2$ red points.

\begin{thmk}
 \solrbsc\ parameterized by $k_r$, and under the assumption that all lines in the input have at least $2$ red points, does not allow a polynomial kernel unless $\CoNP\subseteq \NP/\mbox{poly}$.
\end{thmk}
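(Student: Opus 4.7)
The plan is to establish the lower bound via a polynomial parameter transformation (ppt) from \SC\ parameterized by the universe size $n$, which by Proposition~\ref{SC_results}(ii) does not admit a polynomial kernel unless $\CoNP\subseteq\NP/\mbox{poly}$. Since \solrbsc\ is clearly in \NP\ and \SC\ is \NP-hard, such a ppt with polynomially bounded target parameter immediately transfers the kernel lower bound.

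The construction I have in mind adapts the reduction from Theorem~\ref{set_cover_redn}. Given an \SC\ instance $(U,\mathcal{S})$ with $|U|=n$ and target $k\le n$, I would introduce a blue point $b_u$ for every $u\in U$ and a red point $r_S$ for every $S\in\mathcal{S}$. The obstruction in the earlier construction is that the incidence line of a pair $(u,S)$ with $u\in S$ contains only the single red point $r_S$, violating the at-least-two-reds hypothesis. The natural remedy is a doubling trick: for every such pair I would place a fresh ``doubling'' red point $r_{u,S}$ on the line through $b_u$ and $r_S$. Each intended covering line then becomes $\{b_u,r_S,r_{u,S}\}$ and carries exactly two red points. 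Setting $k_r:=k+n\le 2n$ keeps the parameter polynomially bounded in $n$, as required of a ppt.

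Correctness would proceed along the same lines as Theorem~\ref{set_cover_redn}. A set cover $\{S_1,\ldots,S_k\}$ lifts to a solution family by choosing, for each $u\in U$, some $S_{i(u)}\ni u$ and the corresponding line $\{b_u,r_{S_{i(u)}},r_{u,S_{i(u)}}\}$; the shared red points $r_{S_j}$ contribute at most $k$ and the private doubling points $r_{u,S_{i(u)}}$ contribute exactly $n$, for a total of at most $k+n$. Conversely, any valid family must include, for every blue $b_u$, a line of the form $\{b_u,r_{S_u},r_{u,S_u}\}$; since the $n$ doubling points $r_{u,S_u}$ are pairwise distinct, the distinct $r_{S_u}$'s account for at most $k$ red points, yielding a set cover of size at most $k$.

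The main obstacle I anticipate is purely geometric: I must realize the coordinates so that \emph{every} line of the family $\F$, that is, every maximal collinear subset of $U$, contains either zero or at least two red points, otherwise the produced instance violates the hypothesis of the theorem. I would handle this by placing the $b_u$'s and the $r_S$'s in general position and the $r_{u,S}$'s generically on their prescribed lines, and then, for every residual problematic line---such as $\{b_u,r_S\}$ with $u\notin S$, or $\{b_{u'},r_{u,S}\}$ with $u'\ne u$---introducing further doubling red points so that the line either carries at least two reds or becomes prohibitively expensive (costing strictly more red points than any intended line covering the same blue). A careful accounting, ensuring that every ``wrong'' way of covering a blue is strictly more red-expensive than the intended one, preserves the backward direction of correctness. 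Once this geometric bookkeeping is in place the reduction runs in polynomial time with $k_r=\Oh(n)$, and the theorem follows from the ppt framework.
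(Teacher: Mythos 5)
Your reduction is exactly the paper's: a ppt from \SC\ parameterized by the universe size $n$, with a blue point per element, a red point per set, one line per incidence pair, a private ``doubling'' red point on each such line, and $k_r=k+n$; the correctness argument matches as well, so the proposal is essentially the paper's proof. The ``main obstacle'' you anticipate is not actually present: in \solrbsc\ the family $\F$ is part of the input and need only consist of maximal collinear sets, not of \emph{all} lines spanned by the point set, so the paper simply omits the problematic lines from $\F$ (placing the points in general position so each chosen line contains exactly its three intended points); your proposed fix of adding further doubling points to unintended lines is therefore unnecessary, and as sketched it would need care, since making a ``wrong'' line merely two reds deep (or even three) does not obviously preserve the backward direction.
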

 
\begin{proof}
 Let $(U,\mathcal{S})$ be a given instance of the \SC\ problem. We construct an instance $(R \cup B,\F)$ of \solrbsc\ as follows. We assign a blue point $b_u \in B$ for each element $u \in U$ and a red point $r_S \in R$ for each set $S \in \mathcal{S}$. The red and blue points are placed such that no three points are collinear. We add a line between $b_u$ and $r_S$ if $u \in S$ in the \SC\ instance. To every line $L$, defined by a blue point $b_u$ and a red points $r_S$, we add a unique red point $r_L \in R$. Thus the \solrbsc\ instance $(R\cup B,\mathcal{F})$ that we have constructed has $n$ blue points, 
 $\sum_{s\in {\cal S}}|S|$ lines and $m + \sum_{s\in {\cal S}}|S|$ red points. We set $k_r = k + n$.
 
 \begin{claim}
  All the elements in $(U,\mathcal{S})$ can be covered by $k$ sets if and only if there exist lines in $(R\cup B,\mathcal{F})$ that contain all blue points but only $k + n$ red points.
 \end{claim}
\begin{proof}
 Suppose $(U,\mathcal{S})$ has a solution of size $k$, say $\{S_1,S_2,\cdots S_k\}$. To each element $u\in U$, we arbitrarily associate a covering set $S_u$ from $\{S_1,S_2,\cdots S_k\}$. Our solution family $\F'$ of lines are the lines defined by the pairs of points $\{(b_u,r_{S_u})~|~ u \in U\}$. These lines cover all blue points. The number of red points contained in these lines are the $k$ red points $\{r_{S_1},r_{S_2},\cdots r_{S_k}\}$ associated with $\{S_1,S_2,\cdots S_k\}$, and the $n$ red points $\{r_L~|~ L \in \F'\}$. Therefore, in total there are $k+n$ red points in the solution.
 
Conversely, suppose $(R\cup B,\mathcal{F})$ has a family $\F'$ covering all blue points and at most $k + n$ red points. The construction ensures that at least $n$ lines are required to cover the $n$ blue points. This also implies that the unique red points belonging to each of these lines add to the number of red points contained in the solution family. The remaining $k$ red points, that are contained in the solution family, correspond to sets in $\mathcal{S}$ that cover all the elements in $(U,\mathcal{S})$.
\end{proof}
If $k > n$, then the \SC\ instance is a trivial \YES\ instance. Hence, we can always assume that $k \leq n$.
This completes the proof that \solrbsc\ parameterized by $k_r$, and under the assumption that every line in the input instance has at least $2$ red points, cannot have a polynomial sized kernel unless $\CoNP\subseteq \NP/\mbox{poly}$.
\end{proof}

\subsection{Proof of Theorem~\ref{thm:main:oldrbscalg+kernel}}
Proof of Theorem~\ref{thm:main:oldrbscalg+kernel} follows from Theorems~\ref{thm:main:rbscalg+kernel},~\ref{easy} and 
~\ref{solution_red_hard}.

\section{\sc Generalized Red Blue Set Cover}
In this section we show that for several parameterizations, under which \slrbsc\ is \FPT, the \srbsc\ problem is not. In this section we give the following three results which complement the corresponding results in the geometric setting. 

\begin{enumerate}
\item \srbsc\ is \W[1]-hard parameterized by $k_\ell+k_r$ when every set has size at most three and contains at least two red elements. 
\item \srbsc\ is \W[2]-hard parameterized by $k_\ell+r$ when every set contains at most one red element. 
\item  \srbsc\ is  \FPT, parameterized by $k_\ell$ and $d$, when every set has at most one red element.  Here, $d$ is 
the size of the maximum cardinality set in $\cal F$. 
\end{enumerate}

\subsection{\srbsc\ parameterized by $k_\ell+k_r$ and $k_\ell+r$}

\begin{theorem}\label{MeCC_redn}
 \srbsc\ is W-hard  in the following cases: 
 \begin{enumerate}[i)]
  \item When every set contains at least two red elements but at most three elements, and the parameters are $\{k_\ell,k_r\}$, the problem is \W[1]-hard.
  \item When every set contains at most one red element and the parameters are $\{k_\ell,r\}$, then the problem is \W[2]-hard.
 \end{enumerate}
\end{theorem}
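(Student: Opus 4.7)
The plan is to give two separate parameterized reductions, one per part, each producing a \srbsc\ instance of the prescribed structural form.

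For part (i), I would reduce from \MCC. Given $(G = V_1 \uplus \cdots \uplus V_k, k)$, I would introduce one blue element $b_{ij}$ for every pair $1 \leq i < j \leq k$, and one red element $r_v^i$ for every vertex $v \in V_i$. For every edge $(u,v) \in E$ with $u \in V_i$, $v \in V_j$, $i<j$, I would add the set $\{b_{ij}, r_u^i, r_v^j\}$. Each such set has size exactly $3$ and contains exactly $2$ red elements, matching the required structural restriction. Setting $k_\ell = \binom{k}{2}$ and $k_r = k$ makes both parameters $\Oh(k^2)$, so this is a valid parameterized reduction. The forward direction is easy: a multicolored clique $\{w_1,\ldots,w_k\}$ with $w_i \in V_i$ yields the family $\{\{b_{ij}, r_{w_i}^i, r_{w_j}^j\} : 1 \leq i < j \leq k\}$, which covers every $b_{ij}$ and exactly the $k$ red elements $r_{w_1}^1,\ldots,r_{w_k}^k$.

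The hard part will be the backward direction. Since every constructed set contains exactly one blue element, the budget $k_\ell = \binom{k}{2}$ forces the solution family to consist of exactly one set per blue element. These chosen sets contribute $2\binom{k}{2} = k(k-1)$ red incidences in total, and a simple double count shows that color $c$ appears in exactly $k-1$ of the chosen sets (once per pair $(c,j)$ with $j>c$ and once per pair $(i,c)$ with $i<c$). Letting $t_c$ be the number of distinct color-$c$ reds used, one has $\sum_c t_c \leq k_r = k$ and $t_c \geq 1$, forcing $t_c = 1$ for every $c$. Denoting by $w_c \in V_c$ the unique vertex appearing in color $c$, the set chosen for each pair $(i,j)$ must be the one associated with the edge $(w_i,w_j)$, so $(w_i,w_j) \in E$ and $\{w_1,\ldots,w_k\}$ forms a multicolored clique.

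For part (ii), I would reduce from \SC\ parameterized by the solution size $k$, which is \W[2]-hard by Proposition~\ref{SC_results}(i). Given $(U,\mathcal{S},k)$, I would create a blue element $b_u$ for every $u \in U$, create exactly $k$ red elements $r_1,\ldots,r_k$, and for every $S \in \mathcal{S}$ and every $i \in \{1,\ldots,k\}$ add the set $T_{S,i} = \{b_u : u \in S\} \cup \{r_i\}$. Each set contains precisely one red element; setting $k_\ell = k_r = k$ gives $k_\ell + r = 2k$, so the target parameter is linear in $k$. A set cover $\{S_{j_1},\ldots,S_{j_k}\}$ maps to $\{T_{S_{j_i}, i} : 1 \leq i \leq k\}$, covering all blues and all $k$ reds. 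Conversely, reading off the $S$-labels of any valid \srbsc\ solution yields at most $k$ members of $\mathcal{S}$ whose union is $U$. Since every parameter in the produced instance is $\Oh(k)$, this establishes \W[2]-hardness under the parameterization $\{k_\ell, r\}$.
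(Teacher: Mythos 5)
Your proposal is correct. For part (i) your reduction is essentially identical to the paper's: the same blue element $b_{ij}$ per color pair, the same red element per vertex, the same size-three set $\{b_{ij},r_u,r_v\}$ per edge, and the same budgets $k_\ell=\binom{k}{2}$, $k_r=k$; your backward direction is in fact spelled out more carefully than the paper's (the explicit count that each color class contributes at least one and hence exactly one red vertex is only implicit there). For part (ii) you take a slightly different route: the paper simply observes that \SC\ is the special case of \srbsc\ with $R=\emptyset$, $k_r=0$ and $k_\ell=k$ (so every set vacuously has at most one red element and $k_\ell+r=k$), whereas you build an explicit instance with $k$ red elements $r_1,\dots,r_k$ and $k$ copies $T_{S,i}$ of each set. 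Your construction is also valid and keeps $k_\ell+r=2k$, but it is more machinery than needed; the paper's one-line observation buys the same conclusion with $r=0$. Neither difference affects correctness.
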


\begin{proof}
We start by proving the first result.
From an instance $(G=(V,E),k)$ of \MCC\ parameterized by $k$, we construct an instance $(U=(R,B),\F)$ 
of \srbsc\ parameterized by $k_\ell+k_r$ with the restriction that the size of each set is at most three and there are at least $2$ red elements. The construction is as follows. 
 \begin{itemize}
\item Let the given vertex set be $V = V_1\uplus V_2 \uplus \ldots \uplus V_k$. For every pair $(i,j)$, $1\leq i< j\leq k$, we introduce a new blue element $b_{ij}\in B$. Thus we have $k \choose 2$ blue elements.

\item For each vertex $v\in V$ we introduce a new red element $r_v \in R$.
\item $U = R \uplus B$.
\item For each $e=(u,v)\in E$ such that $u \in V_i, v \in V_j$ and $i<j$, we define a set $S_e \in \F$ which contains the elements $\{b_{ij}, r_u,r_v\}$. 
\item We set $k_r = k$ and $k_\ell = {k \choose 2}$.
\end{itemize}
This completes our construction. Notice that every set in $\F$ has at least $2$ red elements and has size exactly three. 

First, assume that $(G,k)$ is a \YES\ instance. Then there is a $k$-sized multi-colored clique $C$ in $G$. Let $E(C)$ denote the set of edges of $C$. 
Pick the subfamily $\F' = \{S_e~|~e \in E(C)\}$ of size $k \choose 2$. Since $C$ is a multi-colored clique, for all $(i,j)$, 
$ 1\leq i < j\leq k$ there is an edge $e_{ij}\in E(C)$ whose endpoints belong to $V_i$ and $V_j$. Consequently, there is a set $S_{e_{ij}} \in \F'$ that contains $b_{ij}$. The total number of red elements contained in $\F'$ is equal to the size $\vert V(C)\vert =k$. This shows that $(U,\F,k)$ is a \YES\ instance of \srbsc.

Conversely, suppose $(U,\F)$ is a \YES\ instance of \srbsc. Let $\F'$ be a minimal subfamily of at most $k \choose 2$ sets that covers at most $k$ red elements. Let $C$ be the vertices in $G$  
corresponding to the red elements in $\F'$. Notice that there are ${k}\choose{2}$ blue elements, no two of which can be covered by the same set. Thus, for all $(i,j)$, $1\leq i< j\leq k$, $\F'$ must contain exactly one set $S_e = \{b_{ij}, r^{ij}_{1},r^{ij}_{2}\}$. This implies that for every $i$, $1\leq i \leq k$ the sets in  $\F'$  must contain a red element corresponding to a vertex in 
 $V_i$. Hence, for all  $i, 1\leq i \leq k$, $C\cap V_i\neq \emptyset$. Also, $C$ forms a clique since the set $S_e = \{b_{ij}, r^{ij}_{1},r^{ij}_{2}\}$ corresponds to the edge between the vertices selected from $V_i$ and $V_j$. 
Therefore, $(G,k)$ is a \YES\ instance of \MCC.
This proves that \srbsc, parameterized by $k_\ell+k_r$, is \W[1]-hard  under the said assumption.

For the second part of the statement, observe that \SC\ is a special case of this problem and therefore, 
the problem is \W[2]-hard.
\end{proof}


\subsection{A special case of  \srbsc\ parameterized by $k_\ell$}
In this section, we restrict the input instances to those where every set has at most $1$ red element and at most $d$ blue elements. We design an \FPT\ algorithm for this special case of \srbsc\ parameterized by $k_\ell+ d$. It is reasonable to assume that there is {\em no set} in the given instance with only red elements, since Reduction Rule~\ref{redn1} can be applied to obtain an equivalent instance of \srbsc, under the parameters of $\{k_\ell,d\}$.

We were able to show that this problem has an \FPT\ algorithm. However, it was pointed out to us by an anonymous reviewer that there is a simple algorithm based on Dynamic Programming technique. Thus, we present the simpler algorithm. 

\subsubsection{A Dynamic Programming Algorithm }

 We give a Dynamic Programming algorithm to solve \srbsc\ parameterized by $k_l + d$, for the case when all sets contain at most 1 red element and at most d blue elements. Our algorithm guesses the red point that can be added to the solution one by one and also guesses the sets that can cover it and covers the remaining blue points optimally.
\begin{lem}
There exists a \FPT\ algorithm that solves \srbsc\ when each set in the input instance contains at most $1$ red element and at most $d$ blue elements. The running time of this algorithm is $O(2^{2dk_l} (\vert U \vert +\vert \F \vert )^{\Oh(1)})$.
\end{lem}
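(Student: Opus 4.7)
The plan is to combine a quick preprocessing step with a subset DP over the blue elements that processes red elements one at a time. First I would apply Reduction Rule~\ref{redn1} to discard every set containing only red elements, so that every remaining set has at least one blue element, at most $d$ blue elements, and at most one red element. Any feasible sub-family of size at most $k_\ell$ covers at most $dk_\ell$ blue elements; hence if $b > d k_\ell$ I immediately answer \NO, and from here on $2^{b}\le 2^{dk_\ell}$.

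Next I would split the input as $\F = \F_0 \cup \bigcup_{r\in R_\F}\F_r$, where $\F_0$ is the sub-family with no red element, $\F_r$ is the sub-family whose unique red element is $r$, and $R_\F := \bigcup_{F\in \F}(F\cap R)$. The central preprocessing step is to compute, for every $r \in R_\F$ and every $X\subseteq B$,
\[
h_r(X)\;=\;\min\bigl\{\,|T| : T\subseteq \F_r,\ \textstyle\bigcup T\supseteq X\,\bigr\},
\]
i.e.\ the minimum number of sets from $\F_r$ whose union covers $X$ (set $h_r(X)=\infty$ if infeasible). A standard subset DP computes $h_r(X)$ for all $X$ in time $O(2^{b}\cdot|\F_r|)$; summing over $r$ this costs $O(2^{dk_\ell}(|U|+|\F|)^{O(1)})$. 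Analogously I would precompute $g_0(X)$, the minimum number of sets from $\F_0$ required to cover $X$, for every $X\subseteq B$.

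Enumerating $R_\F = \{r_1,\dots,r_t\}$ in any fixed order, the main DP has state $(s,X,j)$ with value
\[
g(s,X,j)\;=\;\text{min.\ \# sets from }\F_0\cup \F_{r_1}\cup\dots\cup \F_{r_s}\text{ covering }X\text{ using }\le j\text{ distinct reds,}
\]
obeying $g(0,X,j)=g_0(X)$ and
\[
g(s,X,j)\;=\;\min\Bigl(g(s-1,X,j),\ \min_{\varnothing\neq X'\subseteq X}\bigl(h_{r_s}(X')+g(s-1,X\setminus X',j-1)\bigr)\Bigr).
\]
The first branch skips $r_s$; the second ``bundles'' the entire contribution of $\F_{r_s}$ into a single super-choice with blue coverage $X'$, costing $h_{r_s}(X')$ sets and one red. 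I would output \YES\ iff $g(t,B,k_r)\le k_\ell$; correctness follows by decomposing any candidate sub-family $\F'$ as $\F' \cap \F_0$ together with $\F'\cap \F_{r}$ for each red $r$ that $\F'$ uses, and noting that the $h_r$ and $g_0$ values are tight lower bounds on the size of such pieces.

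The principal obstacle this DP has to navigate is that two distinct sets can share the same red element, so the red count is not additive in the number of chosen sets; a naive per-set DP that deducts one unit of $k_r$ for every red-containing set would over-count and could reject \YES\ instances. Handling each red element through a single precomputed super-choice $h_{r_s}(X')$, and processing the reds sequentially so that $r_s$ contributes at most once to the budget, is precisely what resolves this. The state space has size $O(|R_\F|\cdot 2^{dk_\ell}\cdot k_r)$ and each transition enumerates at most $2^{|X|}\le 2^{dk_\ell}$ subsets $X'$, so the total running time is $O\bigl(2^{2dk_\ell}\cdot(|U|+|\F|)^{O(1)}\bigr)$, matching the claimed bound.
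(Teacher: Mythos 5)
Your proposal is correct and follows essentially the same route as the paper: an outer subset dynamic program over pairs $(X\subseteq B,\ j)$ that charges one unit of the red budget per distinct red element, combined with an inner table giving the minimum number of sets sharing a fixed red element (or having no red element) needed to cover a given blue subset, with the same $O(2^{2dk_\ell}(\vert U\vert+\vert\F\vert)^{\Oh(1)})$ accounting. The only cosmetic difference is that you precompute the inner covers $h_r$ and route the red-free sets through a separate base-case table $g_0$, whereas the paper folds the red-free sets into each inner table $W[\cdot,r']$ and computes it lazily.
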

\begin{proof}
 Let $B' \subseteq B$, $r' \in R \cup nil, j \in \mathbb{N}$. Let $W[B', r']$ represent the minimum cardinality of a family $\F' \subseteq \F$ that covers all elements in $B'$ and does not cover any red element except $r'$ (no red element if $ r'$ is $nil$). The value of $W[B', r']$ is $+\infty$ if no such $\F' \subseteq \F$ exists. Let $T[B', j]$ represent the minimum cardinality of a family $\F' \subseteq \F$ that covers all elements in $B'$ and covers at most $j$ red elements. Clearly the instance is a YES instance if and only if $T[B, k_r] \leq k_l$. 
 
 We can compute the value of $T[B, k_r]$ using the following recurrence.
\\
$T[B', 0] = W[B', nil]$\\
$T[B', j] = \min _{r' \in (R~ \cup ~{nil}) } \min _{B'' \subseteq B'} (W[B'', r'] + T[B' \setminus B'', j-1])$
\\

 Similarly we can compute the value of $W[B', r']$ using the following recurrence.
 \\
 $W[\emptyset, r'] = 0 $\\
$W[B', r'] = 1 + \min_{S \in \F, S \cap R =\emptyset ~or ~ S \cap R =\{ r'\},S \cap B' \ne \emptyset} W[B' \setminus S, r']  $
\\Let us first show that the recurrence for $W$ is correct. The proof is by induction on $\vert B \vert$. When $\vert B \vert =0$ the recurrence correctly returns zero. When $\vert B \vert >0$, $W[B' \setminus S, r'] $  returns the minimum cardinality of a family $\F' \subseteq \F$ that covers all elements in $B'\setminus \F$ and does not cover any red element except $r'$ (By induction hypothesis). Therefore, $S \cup \F'$  covers all elements in $B'$ and does not cover any red element except $r'$. Since we are doing this for every $S \in \F$ and take the minimum value, the recurrence indeed returns the minimum cardinality of a family $\F' \subseteq \F$ that covers all elements in $B'$ and does not cover any red element except $r'$.

Now we show that the recurrence for $T$ is correct by induction on $j$. When $j=0$, the recurrence returns the value of $W[B,nil]$ which returns the minimum cardinality of a family $\F' \subseteq \F$ that covers all elements in $B'$ and does not cover any red element. When $j>0$, we consider a number of sets containing the same red element $r'$, paying for the blue elements $B'' \subseteq B'$ they cover, and cover the remaining blue elements $B' \setminus B''$ optimally by induction hypothesis. Since we do this for all red points and return the
 minimum value, the recurrence is correct.

\noindent \textbf{Running time: }To compute the value of $T[B, k_r]$ using the above recurrence, we have to compute at most $2^{|B|}|U|$ values of $W$ and $T$, which is at most $2^{d  k_l}  |U|$ in YES-instances. Every value of $W$ can be computed in $O(|U|)$ time using previously computed values. To compute a value of $T$, we take the minimum over all choices of $r'$ in $R$, over at most $2^{|B|} \leq 2^{d  k_l}$ choices of $B''$, and look up earlier values. Thus the running time is bounded by $O(2^{2dk_l} (\vert U \vert +\vert \F \vert )^{\Oh(1)})$.
\end{proof}

 When it comes to kernelization for this special case, we show that even for \slrbsc\ parameterized by $k_\ell+d$ there cannot be a polynomial kernel unless $\CoNP\subseteq \NP/\mbox{poly}$. For this we will give a polynomial parameter transformation from \SC\ parameterized by universe size $n$. The ppt reduction is exactly the one given in Theorem~\ref{set_cover_redn}.

 \begin{thmk}
 \slrbsc\ parameterized by $k_\ell+d$, and where every line has at most $1$ red element and at most $d$ blue elements, does not allow a polynomial kernel unless $\CoNP\subseteq \NP/\mbox{poly}$.
\end{thmk}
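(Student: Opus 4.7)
The plan is to reuse verbatim the polynomial parameter transformation constructed in the proof of Theorem~\ref{set_cover_redn}, and simply verify that its output already satisfies the extra structural restriction we are imposing here (at most one red element per line and at most $d$ blue elements per line), and that the new combined parameter remains polynomial in the source parameter.

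Concretely, given an instance $(U,\mathcal S)$ of \SC\ parameterized by $n=|U|$, I would apply the reduction of Theorem~\ref{set_cover_redn}: introduce a blue point $b_u$ for each $u\in U$, a red point $r_S$ for each $S\in\mathcal S$, place all points in general position in the plane so that no three are collinear, and put in $\F$ the line through $b_u$ and $r_S$ whenever $u\in S$. Set $k_\ell=n$ and $k_r=k$. The equivalence between solutions of the two instances has already been proved in Theorem~\ref{set_cover_redn}, so I only need to observe the structural properties of the output.

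The key observation is that, because the points are in general position, every line in $\F$ contains exactly one blue point and exactly one red point. Therefore each line has at most one red element and at most $d=1$ blue elements, so the output lies in the restricted class considered by the theorem. Moreover, the new combined parameter is $k_\ell + d = n + 1$, which is linear in the source parameter $n$; hence the reduction is a polynomial parameter transformation from \SC\ parameterized by $n$ to \slrbsc\ parameterized by $k_\ell+d$ with the stated restrictions.

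To finish, I invoke Proposition~\ref{SC_results}(ii), which tells us that \SC\ parameterized by the universe size $n$ does not admit a polynomial kernel unless $\CoNP\subseteq\NP/\mbox{poly}$, together with the standard framework recalled in Section~2 that a ppt reduction transfers kernel lower bounds between \NP\ problems. Since both problems are clearly in \NP, this yields the claimed kernelization lower bound. There is essentially no obstacle here beyond this verification; the only small thing to double check is that the restriction $d=1$ is tight enough to fit inside the ``at most $d$ blue elements per line'' hypothesis for the parameter choice $d=1$, which it trivially is.
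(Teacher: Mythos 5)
Your proposal is correct and is exactly the paper's argument: the paper also proves this theorem by observing that the ppt reduction of Theorem~\ref{set_cover_redn} from \SC\ parameterized by universe size $n$ already produces instances in which every line contains exactly one blue and one red point (so $d=1$ and the combined parameter $k_\ell+d=n+1$ stays polynomial in $n$), and then invokes Proposition~\ref{SC_results}(ii) together with the ppt framework.
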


\section{Conclusion}

In this paper, we provided a complete parameterized and kernelization dichotomy of the \slrbsc\ problem, under all possible combinations of its natural parameters. We also studied \solrbsc\ and \srbsc\  under different parameterizations. 
The next natural step seems to be a study of the \srbsc\ problem, when the sets are hyperplanes. Another interesting variant is when the set system has bounded intersection. 

We believe that the running time of the \FPT\ algorithm for \slrbsc\ parameterized by $k_\ell,k_r$ is tight, up to the constants appearing in the exponents. It would be interesting to show that the problems cannot have algorithms with running time dependence on parameters as  $k_\ell^{o(k_\ell)}\cdot k_r^{\Oh(k_r)}$ or $k_\ell^{\Oh(k_\ell)}\cdot k_r^{o(k_r)}$, under standard complexity theoretic assumptions (like the Exponential Time Hypothesis). 
\bibliographystyle{plain}

\begin{thebibliography}{10}

\bibitem{BodlaenderDFH09}
Hans~L. Bodlaender, Rodney~G. Downey, Michael~R. Fellows, and Danny Hermelin.
\newblock On problems without polynomial kernels.
\newblock {\em J. Comput. Syst. Sci.}, 75(8):423--434, 2009.

\bibitem{CDKM00}
Robert~D Carr, Srinivas Doddi, Goran Konjevod, and Madhav~V Marathe.
\newblock On the red-blue set cover problem.
\newblock In {\em SODA}, volume~9, pages 345--353, 2000.

\bibitem{CH13}
Timothy~M Chan and Nan Hu.
\newblock Geometric red-blue set cover for unit squares and related problems.
\newblock In {\em CCCG}, 2013.

\bibitem{DemaineFHT05}
Erik~D. Demaine, Fedor~V. Fomin, Mohammad~Taghi Hajiaghayi, and Dimitrios~M.
  Thilikos.
\newblock Subexponential parameterized algorithms on bounded-genus graphs and
  \emph{H}-minor-free graphs.
\newblock {\em J. {ACM}}, 52(6):866--893, 2005.

\bibitem{DGNW07}
Michael Dom, Jiong Guo, Rolf Niedermeier, and Sebastian Wernicke.
\newblock Red-blue covering problems and the consecutive ones property.
\newblock {\em J. Discrete Algorithms}, 6(3):393--407, 2008.

\bibitem{DLS2014}
Michael Dom, Daniel Lokshtanov, and Saket Saurabh.
\newblock Kernelization lower bounds through colors and ids.
\newblock {\em ACM Trans. Algorithms}, 11(2):13:1--13:20, 2014.

\bibitem{DF99}
Rodney~G. Downey and Michael~R. Fellows.
\newblock {\em Parameterized Complexity}.
\newblock Springer-Verlag, 1999.
\newblock 530 pp.

\bibitem{FellowsKNRRSTW08}
Michael~R. Fellows, Christian Knauer, Naomi Nishimura, Prabhakar Ragde,
  Frances~A. Rosamond, Ulrike Stege, Dimitrios~M. Thilikos, and Sue Whitesides.
\newblock Faster fixed-parameter tractable algorithms for matching and packing
  problems.
\newblock {\em Algorithmica}, 52(2):167--176, 2008.

\bibitem{FlumGrohebook}
J{\"o}rg Flum and Martin Grohe.
\newblock {\em Parameterized Complexity Theory}.
\newblock Texts in Theoretical Computer Science. An EATCS Series.
  Springer-Verlag, Berlin, 2006.

\bibitem{FominGSS09}
Fedor~V. Fomin, Serge Gaspers, Saket Saurabh, and Alexey~A. Stepanov.
\newblock On two techniques of combining branching and treewidth.
\newblock {\em Algorithmica}, 54(2):181--207, 2009.

\bibitem{FortnowS11}
Lance Fortnow and Rahul Santhanam.
\newblock Infeasibility of instance compression and succinct pcps for {NP}.
\newblock {\em J. Comput. Syst. Sci.}, 77(1):91--106, 2011.

\bibitem{ImpagliazzoPZ01}
Russell Impagliazzo, Ramamohan Paturi, and Francis Zane.
\newblock Which problems have strongly exponential complexity?
\newblock {\em J. Comput. Syst. Sci.}, 63(4):512--530, 2001.

\bibitem{Karp10}
Richard~M. Karp.
\newblock Reducibility among combinatorial problems.
\newblock In {\em Proceedings of a symposium on the Complexity of Computer
  Computations}, pages 85--103, 1972.

\bibitem{KPR14}
Stefan Kratsch, Geevarghese Philip, and Saurabh Ray.
\newblock Point line cover: The easy kernel is essentially tight.
\newblock In {\em SODA}, pages 1596--1606, 2014.

\bibitem{LP05}
Stefan Langerman and Pat Morin.
\newblock Covering things with things.
\newblock {\em Discrete \& Computational Geometry}, 33(4):717--729, 2005.

\bibitem{Mar05}
D{\'a}niel Marx.
\newblock Efficient approximation schemes for geometric problems?
\newblock In {\em ESA}, pages 448--459. Springer, 2005.

\bibitem{Mar06}
D{\'a}niel Marx.
\newblock Parameterized complexity of independence and domination on geometric
  graphs.
\newblock In {\em Parameterized and Exact Computation}, pages 154--165.
  Springer, 2006.

\bibitem{MathiesonS08}
Luke Mathieson and Stefan Szeider.
\newblock The parameterized complexity of regular subgraph problems and
  generalizations.
\newblock In {\em CATS}, volume~77, pages 79--86, 2008.

\bibitem{Peleg07}
David Peleg.
\newblock Approximation algorithms for the label-cover\({}_{\mbox{max}}\) and
  red-blue set cover problems.
\newblock {\em J. Discrete Algorithms}, 5(1):55--64, 2007.

\bibitem{RS08}
Venkatesh Raman and Saket Saurabh.
\newblock Short cycles make {W}-hard problems hard: {FPT} algorithms for
  {W}-hard problems in graphs with no short cycles.
\newblock {\em Algorithmica}, 52(2):203--225, 2008.

\bibitem{Robertson1984}
Neil Robertson and P.D Seymour.
\newblock Graph minors {III} planar tree-width.
\newblock {\em Journal of Combinatorial Theory, Series B}, 36(1):49 -- 64,
  1984.

\end{thebibliography}

\end{document}